\documentclass[journal,11pt,onecolumn,draftclsnofoot]{IEEEtran}
\usepackage[margin=1in]{geometry} 

\usepackage{amsmath,amssymb,amsfonts}
\usepackage{algorithm}
\usepackage{algorithmic}
\usepackage{cite}
\usepackage{graphicx}
\usepackage{textcomp}
\usepackage{xcolor}
\usepackage{subfigure}
\usepackage{bm}
\usepackage{amsthm}
\usepackage{booktabs}
\usepackage{url}
\usepackage[authoryear, square]{natbib} 

\theoremstyle{definition}
\newtheorem{theorem}{\textbf{Theorem}}
\newtheorem{definition}{\textbf{Definition}}

\newtheorem{proposition}{\textbf{Proposition}}
\newtheorem{remark}{\textbf{Remark}}

\newtheorem{lemma}{\textbf{Lemma}}

\allowdisplaybreaks[4]
\def\BibTeX{{\rm B\kern-.05em{\sc i\kern-.025em b}\kern-.08em
    T\kern-.1667em\lower.7ex\hbox{E}\kern-.125emX}}


\usepackage{marginnote}
\usepackage[normalem]{ulem}
\usepackage[shortlabels]{enumitem}

\begin{document}

\title{Gaming on Coincident Peak Shaving: Equilibrium and Strategic Behavior}

\author{
    Liudong Chen\IEEEauthorrefmark{1}, Jay Sethuraman\IEEEauthorrefmark{2}, Bolun Xu\IEEEauthorrefmark{1} \\
    \IEEEauthorblockA{\IEEEauthorrefmark{1}Earth and Environmental Engineering, Columbia University} \\
    \IEEEauthorblockA{\IEEEauthorrefmark{2}Industrial Engineering \& Operations Research, Columbia University} \\
    \{lc3671, js1353, bx2177\}@columbia.edu  \\
    500 W 120th Street, NYC, NY, 10027, USA 
}


\maketitle

\begin{abstract}

Power system operators and electric utility companies often impose a coincident peak demand charge on customers when the aggregate system demand reaches its maximum. This charge incentivizes customers to strategically shift their peak usage away from the system’s collective peak, which helps reduce stress on electricity infrastructure. In this paper, we develop a game-theoretic model to analyze how such strategic behavior affects overall system efficiency. We show that depending on the extent of customers’ demand-shifting capabilities, the resulting coincident peak shaving game can exhibit concavity, quasi-concavity with discontinuities, or non-concavity with discontinuities. In a two-agent, two-period setting, we derive closed-form Nash equilibrium solutions for each scenario and generalize our findings to multi-agent contexts. We prove the stability of the equilibrium points and propose an algorithm for computing equilibrium outcomes under all game configurations. 
Our results indicate that the peak-shaving outcome at the equilibrium of the game model is comparable to the optimal outcome of the natural centralized model. However, there is a significant loss in efficiency.
Under quasi-concave and non-concave conditions, this inefficiency grows with increased customer flexibility and larger disparities in marginal shifting costs; we also examine how the number of agents influences system performance. Finally, numerical simulations with real-world applications validate our theoretical insights.

\end{abstract}

\begin{IEEEkeywords}
Coincident peak shaving, Bottleneck game, Equilibrium stability, Demand response, Power system operation, Efficiency loss
\end{IEEEkeywords}

%
\IEEEpeerreviewmaketitle

\section{Introduction}
Managing peak electricity demand is crucial in power system transmission and distribution due to the need to maintain an instantaneous balance between supply and demand. At the transmission level, total demand must not exceed the maximum generation capacity. At the distribution level, where a tree-branched network of transformer substations steps down voltages, the load at each tier must remain below the capacity of the respective transformers. When demand exceeds the system’s generation or substation capacity, operators may be forced to disconnect portions of the grid to cut off the connecting demand to ensure the secure operation of the remaining system, potentially leading to significant financial losses and adverse impacts on life quality~\citep{power_system}.




With the advent of smart electric meters capable of recording time-dependent electricity usage and the widespread deployment of distributed energy resources (DER), power system operators and electric utility companies are implementing new pricing schemes to promote demand flexibility and alleviate peak demand stress. One such approach is time-of-use (ToU) tariffs~\citep{tou}, in which utilities set hourly electricity prices to encourage customer to shift consumption: increasing usage during low-priced periods (typically at night) and reducing it during high-priced periods (typically during the day). However, ToU tariffs mainly focus on recovering the cost of energy supply and shifting daily consumption, and they are less effective at curbing peak demand over longer periods, such as on a monthly basis, because the prices are not set high enough to further incentivize significant peak demand reduction~\citep{rate_design1,rate_design2}.

Utility companies may also impose direct peak demand charges, billing customers based on their highest power usage with a prefixed peak price over an extended period—typically a month. These charges not only motivate customers to reduce their peak consumption but also help recover distribution grid investment and maintenance costs, as the delivered grid capacity is determined by a customer's peak demand~\citep{peak_charge, peak_demand2}. For instance, Consolidated Edison, serving New York City, may charge more than \$40 per kW of peak demand, with the peak being determined by the highest two consecutive 15-minute intervals in any monthly billing cycle. This effectively translates to a cost of approximately \$80 per kWh during peak periods, which is significantly higher than the standard energy supply rate of around 20 to 30 cents per kWh~\citep{ConEdElectricRates}.


Yet, allocating peak demand charges becomes non-trivial when multiple customers share a capacity bottleneck, such as a constrained substation, because the system's peak does not directly correspond to each individual customer’s peak~\citep{demand_allocation}. Charging a customer for their peak demand may be inequitable if the overall system is not under peak stress, and such charges might not accurately represent the true system peak demand. This issue is particularly critical for utility companies, which are often publicly owned or regulated private monopolies with profit margins capped by their investment costs. Therefore, the revenue from peak demand charges should accurately reflect grid investment expenses based on the actual delivery capacity.

Coincident peak (CP) demand charges bill customers based on their usage during the system's peak period, offering an alternative that more accurately reflects the system condition and the corresponding investment costs. These charges are added to future electricity bills; for example, Texas’s 4CP mechanism applies charges based on a customer’s average demand during the Electric Reliability Council of Texas (ERCOT) system’s peak hour in each of the four summer months (from June to September), with these charges affecting the following year’s bills. Notably, 4CP charges can represent up to 30\% of an organization’s monthly electricity costs~\citep{4CP_ercot}. In 2023, the 4CP charge was approximately \$67/kWh~\citep{4CP_charge_price}, with the total coincident peak demand across the ERCOT region reaching 83.6 GW~\citep{4CP_record}, translating into an estimated \$5.6 billion payments. According to ERCOT’s annual demand response report, the 4CP mechanism facilitated a 3,500 MW demand reduction during peak periods, resulting in savings of approximately \$234.5 million, or 4.2\% of the total 4CP-related costs~\citep{Potomac2023}.

A significant challenge, however, is that the peak period is identified only in hindsight. While numerous academic and industrial efforts have focused on accurately predicting the system’s peak period~\citep{predict_CP_time, baosen_nn, 4CP}, these studies often overlook the interactions among customers, whose collective behavior determines the system peak. This gap naturally motivates a game-based framework and raises a critical question: \emph{Is a game-based framework workable for the CP shaving problem, and how does strategic customer behavior affect system efficiency?}
Addressing this question is the main focus of this paper.

\subsection{Summary of main contributions and implications}
In this work, we formulate a novel game framework for the CP shaving problem. As a first step toward applying a game-theoretic model to address CP shaving challenges, we consider a simplified two-agent, two-period setting, with extensions to more than two agents discussed in Section \ref{multi_section}. The two agents represent clusters of customers or dominant oligopoly customers. The rationale for the two-period setting stems from its ability to represent a peak period and an off-peak period~\citep{jerry}, as identified from historical market data, including demand shifting patterns and price signals.  For example, power system peak demand typically occurs in the early evening, with off-peak periods covering the rest of the day. In the 4CP program in Texas, historical records in 2024 suggest that certain periods, such as around 17:00, are more likely to experience CP events~\citep{4CP_record}. Our model focuses on minimizing system peak demand while deemphasizing off-peak performance, aligning with the primary objective of operators and utilities. This simplified framework offers both insight and tractability for analyzing the fundamental interactions between agents and their impact on system performance. Furthermore, as agents can access the system demand profile from publicly available market information~\citep{load_profile}, they can infer the aggregated response of other agents. We assume that each agent observes the other’s aggregated demand distribution, allowing us to isolate strategic interactions from the complexities of predicting individual valuations. By balancing simplicity with realism, this approach lays a solid foundation for understanding CP shaving strategies and agent behaviors.

With this setting in place, we formulate the CP shaving game model in which agents have a baseline demand at each time period. There is a CP charge price that is announced ahead of time. Each agent has the flexibility to shift some of their demand to earlier or later time periods, and there is a penalty term to account for the cost of this demand-shifting. Each agent's demand-shifting strategy, influenced by baseline demand, penalty costs, and CP charges, determines their consumption in each period. The collective strategies of all agents shape the total consumption in each time period and, hence, the CP period.
Agents with larger baseline demand differences between two periods and lower shifting costs exhibit greater flexibility. Depending on these factors, the model may exhibit concave, quasi-concave with discontinuities, or non-concave with discontinuities characteristics. By contrast, the conventional centralized peak shaving model is formulated as a straightforward concave optimization problem.

We then analytically derive the pure strategy Nash equilibrium (NE) for all CP shaving game structures under the two-agent, two-period setting. Our analysis reveals that the NE is determined by the system's unbalanced demand and the agents' shifting capabilities. By treating the system as a switched dynamic system, we prove that the equilibrium of the game model is globally uniformly asymptotically stable, regardless of switching, provided all agents' demand shifting is within the baseline limit. Furthermore, we demonstrate that a gradient-based algorithm with an update rule serving as a finite difference approximation of the asymptotically stable process can compute the equilibrium point.

We analyze the impact of gaming agents' strategic behaviors by examining the peak shaving effectiveness and the efficiency loss. 
Our findings analytically show that the peak-shaving outcome at the equilibrium of the CP shaving game aligns with the optimal outcome of the natural centralized model but incurs significant efficiency losses—except in the concave game, where the outcomes are fully equivalent. For quasiconcave and non-concave games, we prove that the efficiency loss increases with the disparities among agents, as measured by their marginal shifting costs. Additionally, under identical system conditions, we show that greater agent flexibility exacerbates efficiency loss.

We also generalize our findings to a model with more than two agents, and prove the uniqueness of NE under concave and quasiconcave conditions. For non-concave CP shaving games, we show that system demand can still be balanced across two periods and derive the NE at the group level of agents. We also establish the stability of the equilibrium point and validate the effectiveness of the gradient-based algorithm. The NE reveals equivalent peak shaving effectiveness between the game and the centralized model while showing that changes in the number of agents influence the game type and, consequently, the efficiency loss.

For industry practitioners and policymakers, our paper offers the following takeaways:
\begin{itemize}
    \item We show that the game model has a stable NE that can be computed efficiently, and applying this model to the CP shaving problem achieves peak shaving effectiveness comparable to that of an efficient centralized model.
    Thus, operators can adopt the game model without concerns about its impact on CP shaving effectiveness.
    \item We identify that system's inefficiencies in game environments stem from agents' greater flexibility and disparities in their marginal shifting costs. This insight guides operators and policymakers in designing targeted incentive mechanisms tailored to agents, enhancing both system efficiency and equity.
    \item We show that the economic performance of small systems with fewer agents gaming on CP shaving is sensitive to agents' flexibility. Therefore, one should select flexible agents to form larger systems while leaving inflexible agents to form small gaming systems.
\end{itemize}

\subsection{Related work}
To precisely position our study within the existing literature, we elaborate on our work from three perspectives closely related to our CP shaving problem: (i) Load management, (ii) CP demand charge, and (iii) Related game formulations.

\textbf{Load management.}
Load management is a border concept covering peak demand reduction, load shifting, and demand response. Many studies, from the utility's perspective, formulate optimization problems to design pricing mechanisms - such as tariffs or peak demand charges - aimed at minimizing convex costs or maximizing concave profits~\citep{DR}. These pricing design approaches often rely on customers' price response behavior, typically leading to a two-layer framework (or Stackelberg game model): the upper layer determines the pricing strategy, while the lower layer models demand based on customer response behavior~\citep{two_layer}. The lower layer may involve constructing utility functions to represent customer preferences~\citep{utility, behavior} or applying data-driven methods to learn behavior from historical price and consumption data~\citep{data_hard}. From a demand-side perspective, the key challenge lies in distinguishing between baseline demand and observed demand to better understand customers' demand reduction decision-making behavior~\citep{demand_change}. Beyond individual customers, the interactive influence of demand strategies has been widely studied through non-cooperative game formulations and generally achieves peak shaving~\citep{noncooperative_game}. However, while these load management approaches contribute to peak demand reduction, they primarily focus on maximizing the profit of different market participants rather than directly addressing CP demand reduction in accordance with CP charging mechanisms.


\textbf{CP demand charge.}
Academic studies and industrial solutions for CP shaving generally focus on predicting the timing of the system peak and making decisions from the customers' perspective. Prediction involves estimating the CP period as a probabilistic distribution, often using machine learning tools that leverage input features such as historical demand and weather conditions~\citep{prediction1, baosen_nn}. Following this, decisions are made based on the predictions.
Industrial solutions primarily emphasize short-term prediction, employing auto-regressive methods to iteratively update models and issue warning signals to customers~\citep{4CP}. Academic solutions typically frame the problem as a scheduling task, making deliberate decisions based on CP period distributions. Examples include stochastic sequential optimization~\citep{adam_datacenter} and optimization with neural networks trained as decision policies~\citep{baosen_cdc}.
Recently, an approach combining prediction and decision-making, termed 'decision-focused learning' or contextual optimization~\citep{contextual}, has gained traction. This method trains weighting parameters using decision losses from downstream optimization rather than prediction losses, aligning with the optimization tasks to produce effective decisions~\citep{spo}. 
However, it has yet to be applied to the CP shaving problem. Moreover, existing studies often assume a single agent or 'price-taker' condition, neglecting the impact of various agents' decisions on CP period predictions. This motivates us to incorporate customer interactions as a critical aspect of CP shaving design.

\textbf{Related game formulations.}
Without specific constraints, the CP shaving problem can be abstracted as agents betting on discrete events, with winners sharing benefits based on their bids. This framework mirrors the classic concept of pari-mutuel betting, in which the odds (or CP periods) fluctuate in response to all participants' betting strategies~\citep{betting1, pari_mutuel}. Pari-mutuel betting is commonly applied in sports, where optimal betting strategies—both in size and target—are determined by solving discrete decision problems aimed at maximizing expected returns through the best combination of picks~\citep{betting_pool1, pari_mutuel1}.

The CP shaving problem also shares similarities with congestion games by treating peak demand consumption as a form of resource usage, where increased utilization by multiple agents results in higher costs~\citep{congestion_game}. Congestion games can often be formulated as potential games, guaranteeing the existence of pure-strategy Nash equilibria (NE) even in weighted settings, provided certain conditions on the utility functions are met~\citep{congestion_game_weight}. The primary difference between congestion games and CP shaving is in their cost structures: in congestion games, costs are determined by the aggregate resource usage, whereas in CP shaving, costs are driven by the peak-period demand—a single, critical resource. This makes the problem more analogous to a bottleneck congestion game, where the objective is to optimize the performance of the worst-performing element~\citep{Bottleneck_game}. In such settings, pure-strategy NE has also been shown to exist,  though they often result in an inefficient price of anarchy - exceeding one~\citep{bottleneck_PoA}.

Despite the valuable insights provided by these game-theoretic frameworks, the CP shaving problem exhibits key differences. Specifically, demand shifting can negatively impact customer comfort and lead to profit losses, the cost structure is dictated by peak-period demand rather than the maximum cost across periods, and the overall system demand must remain unchanged. 
We integrate these considerations into the CP shaving game framework developed and analyzed in this paper.



\section{Model and Preliminaries}
In this section, we formulate the CP shaving game model and introduce some definitions.
We begin with a two-agent (x and y), two-period (1 and 2) model, where the two agents can represent clusters of customers with different baseline demand conditions, and the two periods correspond to a CP period and an off-CP period, respectively. 
We let agent x's baseline demand in periods 1 and 2 be $X_1$ and $X_2$, respectively. Similarly, agent y's baseline demand in periods 1 and 2 are, respectively, $Y_1$ and $Y_2$. The system's baseline demand in periods 1 and 2, obtained by summing both agents’ baseline demands in that period, are denoted $S_{\mathrm{b},1}$ and $S_{\mathrm{b},2}$ respectively. Without loss of generality, we assume the baseline demand in period 2 is greater than in period 1, i.e., $S_{\mathrm{b},2} > S_{\mathrm{b},1}$. In the event that is not true for the given instance, we can interchange the roles of the two periods, and all the analysis will follow. Agents have the flexibility to shift some of their consumption from one period to the other. We let $x$ and $y$ denote, respectively, the amount of demand shifted by agents x and y, respectively, {\em from} period 2 {\em to} period 1. In other words, for any given $x$, the demand of agent x in the two periods are, respectively, $X_1 + x$ and $X_2 - x$; similarly for $y$ and agent y. Note that, without loss of generality, we further assume that agent x's baseline demand in period 2 is greater than in period 1. i.e., $X_2 > X_1$. This ensures that agent x's demand shifting $x$ is non-negative at the equilibrium, while agent y's shifting $y$ remains unrestricted. Still, the roles of agents x and y are interchangeable, so all results remain valid. The system’s total demand, which includes both agents' demand shifting, is denoted as $S_1$ and $S_2$ in periods 1 and 2, respectively.
Depending on the agents' demand-shifting behaviors, the roles of the CP and off-CP periods can change, i.e., either period 1 or period 2 can become the CP or off-CP period. We use $\alpha_{\mathrm{x}},\alpha_{\mathrm{y}}\in\mathbb{R}^{+}$ to denote the penalty parameter for agent x and y, representing the comfort loss or perceived cost of shifting demand. These parameters can be identified from shifting behavior suggested by historical data.
The fixed CP price is denoted by $\pi\in \mathbb{R}^{+}$ and $I(x,y)$ is the step function (indicator function). The agents choose $x$ and $y$ to maximize their payoffs. Formally, the game model is defined as $G=(N,\mathcal{X},U)$, where
\begin{itemize}
    \item $N=\{\mathrm{x},\mathrm{y}\}$ is the two-agent agent set;
    \item $\mathcal{X}=\mathcal{X}_{\mathrm{x}} \times \mathcal{X}_{\mathrm{y}}$ is the strategy set formed by the product topology of each agent's strategy set;  
    \item  $U =\{f_{\mathrm{x}},f_{\mathrm{y}}\}$ is the payoff function set, where $f_{\mathrm{x}},f_{\mathrm{y}}: \mathcal{X} \rightarrow \mathbb{R}$.
\end{itemize}
During the game, agent x chooses strategy $x \in \mathcal{X}_{\mathrm{x}}$ to maximize its payoff $f_{\mathrm{x}}(x,y)$ with $y \in \mathcal{X}_{\mathrm{y}}$ chosen by agent y. The payoff to agent x is simply the negative of its costs over the two time periods, and is given by:
\begin{subequations} \label{agenta}
    \begin{align}
        \max_{x} &f_{\mathrm{x}} (x,y)= -\pi(X_{1}+x) I(x,y)
        -\pi(X_{2}-x)(1- I(x,y))
        -\alpha_{\mathrm{x}} x^2, \label{cost_game}\\
        &I(x,y) = \begin{cases} 
      1 & S_1(x,y) - S_2(x,y)\geq 0 \\
      0 & S_1(x,y) - S_2(x,y) < 0 
        \end{cases},\label{step_function} \\
        &S_1(x,y) = X_{1}+Y_{1}+x+ y = S_{\mathrm{b},1}+x+ y, \\
        &S_2(x,y) = X_{2}+Y_{2}-x-y = S_{\mathrm{b},2}-x -y.
    \end{align}
\end{subequations}
Agent y's payoff function $f_{\mathrm{y}}(x,y)$ follows the same structure by changing baseline demand to $Y_1,Y_2$, demand shifting to $y$, and penalty parameter to $\alpha_\mathrm{y}$. Note that there is always a CP period (and a corresponding charge) even if demands in the two periods are equal; in that case, we treat period 1 as the CP period, as reflected in our definition of the step function $I(\cdot)$ (\ref{step_function}). Also, we focus on demand shifting and assume a constant energy rate besides the peak demand charge, so we do not consider the energy cost.


We next introduce the definitions of equilibrium and continuity.  


\begin{definition}\label{Nash_Equ_defi}
We define the following:
(1) \emph{Pure-strategy Nash equilibrium (\citep{nash1950}).} 
    $(x^*,y^*)\in \mathcal{X}$ is a NE in pure strategies of the game $G$ if and only if (iff) $f_{\mathrm{x}}(x^*,y^*) \geq f_\mathrm{x}(x,y^*)$ and $f_{\mathrm{y}}(x^*,y^*) \geq f_\mathrm{y}(x^*,y)$ for every $x\in \mathcal{X}_{\mathrm{x}}, y \in \mathcal{X}_{\mathrm{y}}$. 
    
(2) \emph{Upper semi-continuity (u.s.c.) and lower semi-continuity (l.s.c.).} 
    The function $f: \mathcal{X} \rightarrow \mathbb{R}$ is called u.s.c. or l.s.c. if for every $x_0$ such that 
    \begin{align}
        &\limsup\limits_{x\rightarrow x_0} f(x)\leq f(x_0), 
        \text{ or } \liminf\limits_{x\rightarrow x_0} f(x)\geq f(x_0),
    \end{align} for all $x$ in some neighborhood of $x_0$, respectively.
\end{definition}


The step function makes the game analysis non-trivial. We separate the overall system into two periods corresponding to CP period in 1 and 2, i.e., $\mathcal{X}_{\mathrm{cp}1}=\{x,y|S_1(x,y)\geq S_2(x,y)\}$ and $\mathcal{X}_{\mathrm{cp}2}=\{x,y|S_1(x,y) < S_2(x,y)\}$, with the payoff functions, take agent x for example,
\begin{subequations}\label{1a}
    \begin{align}
        f_{\mathrm{x},1}(x) = -\pi(X_{1}+x)-\alpha_\mathrm{x} x^2, x\in \mathcal{X}_{\mathrm{cp}1}, \nonumber\\
        f_{\mathrm{x},2}(x) = -\pi(X_{2}-x)-\alpha_\mathrm{x} x^2,x\in \mathcal{X}_{\mathrm{cp}2}, \label{objective}
    \end{align}
Note that each payoff function is concave, and by applying the first-order optimality condition~\citep{convex_optimization}, we have 
\begin{align}
     x'= \arg\max_{x} f_{\mathrm{x},1}(x) = -\frac{\pi}{2\alpha_\mathrm{x}},  x\in \mathcal{X}_{\mathrm{cp}1}, \nonumber \\
     x'= \arg\max_{x} f_{\mathrm{x},2}(x) = \frac{\pi}{2\alpha_\mathrm{x}}, x\in \mathcal{X}_{\mathrm{cp}2}, 
\end{align}
\end{subequations}

With these results, we define critical points, balance points, and system average demand in the game model.
\begin{definition}\emph{Critical Points, balance points, and system average demand.}\label{Special_points}
    \begin{subequations}
   We define the following:
    \begin{enumerate}
        \item \emph{Critical points} for agents x and y:
        \begin{align}
            r_\mathrm{x} = \frac{\pi}{2\alpha_\mathrm{x}}, \quad r_\mathrm{y} = \frac{\pi}{2\alpha_\mathrm{y}}.
        \end{align} 
        \item \emph{Agent balance points} for agents x and y:
        \begin{align}
            b_\mathrm{x} =\frac{X_{2}-X_{1}}{2}, \quad b_\mathrm{y} =\frac{Y_{2}-Y_{1}}{2}.
        \end{align}
        \item \emph{System balance point}:
        \begin{align}
            b =\frac{S_{\mathrm{b},2}-S_{\mathrm{b},1}}{2}.
        \end{align}
        \item \emph{System average demand}:
        \begin{align}
            S = \frac{S_{\mathrm{b},2}+S_{\mathrm{b},1}}{2}.
        \end{align}
    \end{enumerate}
    \end{subequations}
\end{definition}

The critical points $r_{\mathrm{x}}, r_{\mathrm{y}}$ are determined by the payoff function in period 2, as exemplified in $f_{\mathrm{x},2}$ (\ref{1a}). These points represent the optimal demand shifting for each agent within period 2 to maximize its payoff. Naturally, the corresponding optimal shifting amounts for period 1 are $-r_{\mathrm{x}}, -r_{\mathrm{y}}$. The agent balance points $b_\mathrm{x},b_\mathrm{y}$ represent inter-period demand shifting. For instance, if agent x chooses to shift $b_x$ from period 2 to period 1, its demand in both periods will be identical, and the same applies to agent y. Furthermore, if the total shifted demand across both agents equals $b$, then the system demand in periods 1 and 2 will be balanced and equal to $S$, which is the system average demand. Given the assumption that $S_{\mathrm{b},1} < S_{\mathrm{b},2}$ and $X_1<X_2$, it follows that $b_\mathrm{x}>0$ and $b>0$.


We then define capable and non-capable agents using agent x as an example, with agent y following the same definition.
\begin{definition}\emph{Capable and non-capable agents.}\label{capably}
    According to Definition \ref{Special_points}, given agent x with baseline demand $X_{1},X_{2}$, CP price $\pi$, and penalty parameter $\alpha_\mathrm{x}$, we define the agent x is \emph{capable} if it satisfies
    \begin{align}
        -r_\mathrm{x} \leq b_\mathrm{x} \leq r_\mathrm{x}. \label{concave_condition}
    \end{align}
    Define an agent who does not satisfy (\ref{concave_condition}) as a \emph{non-capable} agent, including \emph{upper non-capable} agent with $b_\mathrm{x} > r_\mathrm{x}$ and \emph{lower non-capable} agent with $b_\mathrm{x} < -r_\mathrm{x}$.
\end{definition}

According to Definition \ref{Special_points}, (\ref{concave_condition}) means agent x is economically capable of balancing its demand in the two periods to reduce the CP charge. This is because demand shifting will first reach the balance point before reaching the critical point. If agent x cannot balance its demand across two periods, its maximum shifting capability is constrained by the optimal shifting amount within a single period. Thus, the maximum shifting capability of agent x is given by $\min\{r_\mathrm{x}, b_\mathrm{x}\}$. Similarly, for agent y, the maximum shifting capability is $\min\{r_\mathrm{y}, b_\mathrm{y}\}$ if $b_\mathrm{y}>0$, or $\max\{-r_\mathrm{y},b_\mathrm{y}\}$ if $b_\mathrm{y}<0$.

\section{Equilibrium analysis of the CP shaving game}
In this section, we first show the properties of the two-agent two-period CP shaving game $G$ with the specific agent type and prove the pure-strategy NE of the game. 


\subsection{Game properties}
We first show the property of CP shaving game $G$. Determined by the parameters of all agents, the game performs differently. 
\begin{proposition}\emph{Concave, quasiconcave/discontinuous, and non-concave/discontinuous CP shaving game.} \label{game_proper}
    Given critical point $r_\mathrm{x},r_{\mathrm{y}}$ and balance point $b_\mathrm{x},b_{\mathrm{y}},b$ as define in Definition \ref{Special_points}, the two-agent two-period CP shaving game $G$ satisfies one of the following:
\begin{subequations}
    \begin{enumerate}
        \item \emph{Concave CP shaving game.} The game $G$ is concave if all agents' payoff functions $f_\mathrm{x}(x,y)$ ($f_\mathrm{y}(x,y)$) are concave in $x\in \mathcal{X}_\mathrm{x}$ ($y\in \mathcal{X}_\mathrm{y}$) for each $y\in \mathcal{X}_{\mathrm{y}}$ ($x\in \mathcal{X}_{\mathrm{x}}$). This concavity condition is satisfied if the following holds
        \begin{align}
            b &> r_\mathrm{x} + r_{\mathrm{y}}.\label{concave}
        \end{align}
        \item \emph{Quasiconcave CP shaving game.} The game $G$ is quasiconcave/discontinuous if all agents' payoff functions $f_\mathrm{x}(x,y)$ ($f_\mathrm{y}(x,y)$) are quasiconcave in $x\in \mathcal{X}_\mathrm{x}$ ($y\in \mathcal{X}_\mathrm{y}$) for each $y\in \mathcal{X}_{\mathrm{y}}$ ($x\in \mathcal{X}_{\mathrm{x}}$)~\citep{reny1999}. This concavity condition is satisfied if the following holds
    \begin{align}
        -r_\mathrm{x} \leq b_\mathrm{x} \leq r_\mathrm{x},\ -r_\mathrm{y} \leq b_\mathrm{y} \leq r_\mathrm{y};\label{quasi_discounti}
    \end{align}
    \item \emph{Non-concave CP shaving game.} The game $G$ is non-concave/discontinuous if it is not concave or quasiconcave. The condition is
    \begin{align}
       \{0 < b \leq r_\mathrm{x}+r_{\mathrm{y}} \}
       \cap &[\{b_{\mathrm{x}} > r_{\mathrm{x}} \}\cup \{b_{\mathrm{y}} < -r_{\mathrm{y}}\} \cup \{b_{\mathrm{y}} > r_{\mathrm{y}}\}]. \label{non-concave}
    \end{align}
    \end{enumerate}
\end{subequations}
        
       
     
    
\end{proposition}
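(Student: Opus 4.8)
The plan is to classify the shape of each agent's payoff slice and read off the three cases. Fixing $y$, I would first rewrite $f_{\mathrm x}(\cdot,y)$ as the piecewise function equal to the period-2 branch $f_{\mathrm{x},2}$ on $\{x:x+y<b\}$ and to the period-1 branch $f_{\mathrm{x},1}$ on $\{x:x+y\ge b\}$, the two meeting at the switch point $x_0=b-y$. By \eqref{1a} both branches are concave quadratics, $f_{\mathrm{x},2}$ peaking at $r_{\mathrm x}$ and $f_{\mathrm{x},1}$ at $-r_{\mathrm x}$; a short computation using $b=b_{\mathrm x}+b_{\mathrm y}$ and $X_2-X_1=2b_{\mathrm x}$ gives the jump $f_{\mathrm{x},1}(x_0)-f_{\mathrm{x},2}(x_0)=2\pi(y-b_{\mathrm y})$, so the discontinuity is upward when $y>b_{\mathrm y}$ and downward when $y<b_{\mathrm y}$. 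The same reduction applies to $f_{\mathrm y}(x,\cdot)$ by symmetry, with switch $y_0=b-x$ and jump $2\pi(x-b_{\mathrm x})$.

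For the concave case I would argue that \eqref{concave} prevents any crossing of the switch boundary inside the feasible region. Since no agent ever shifts beyond its critical point toward period 1 (the period-2 branch is maximized at $r_{\mathrm x}$, resp. $r_{\mathrm y}$, by the first-order condition in \eqref{1a}), we have $x+y\le r_{\mathrm x}+r_{\mathrm y}<b$ throughout, so $I\equiv 0$ and each payoff collapses to the single concave branch $f_{\mathrm{x},2}$ (resp. $f_{\mathrm{y},2}$); concavity of the game in each own variable is then immediate.

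The quasiconcave and non-concave cases I would treat together through one structural criterion. A piecewise-concave slice with an increasing-then-decreasing (unimodal) profile has convex superlevel sets, hence is quasiconcave, and the only way to break unimodality here is to place the switch either where the left branch is already past its vertex together with an upward jump, or where the right branch is still before its vertex together with a downward jump. For $f_{\mathrm x}(\cdot,y)$ the first requires $x_0>r_{\mathrm x}$ with $y>b_{\mathrm y}$; since $x_0=b-y<b-b_{\mathrm y}=b_{\mathrm x}$ this is possible exactly when $b_{\mathrm x}>r_{\mathrm x}$, while the second requires $x_0<-r_{\mathrm x}$, which is impossible because $x_0>b_{\mathrm x}>0$. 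Thus $f_{\mathrm x}(\cdot,y)$ is quasiconcave iff $b_{\mathrm x}\le r_{\mathrm x}$, and the symmetric argument gives that $f_{\mathrm y}(x,\cdot)$ is quasiconcave iff $-r_{\mathrm y}\le b_{\mathrm y}\le r_{\mathrm y}$. Condition \eqref{quasi_discounti} (both agents capable) therefore yields a quasiconcave game, while whenever some agent is non-capable one slice admits two local maxima separated by a valley created by the mis-aligned jump—so the game is not quasiconcave; combined with $b\le r_{\mathrm x}+r_{\mathrm y}$, which rules out the no-switch concave structure, this is exactly \eqref{non-concave}.

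The main obstacle is the non-concave step: I must verify that the pathological switch-point location is genuinely attainable within the feasible strategy sets—i.e. that the non-capability inequality, the bound $0<b\le r_{\mathrm x}+r_{\mathrm y}$, and the admissible range of the opposing agent's shift are mutually consistent—and then exhibit an explicit level $c$ for which $\{x:f_{\mathrm x}(x,y)\ge c\}$ splits into two disjoint intervals, certifying the failure of quasiconcavity. I would also record the discontinuity claim, which follows directly from the nonzero jump $2\pi(y-b_{\mathrm y})$ whenever the switch is interior and $y\neq b_{\mathrm y}$.
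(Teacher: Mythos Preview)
Your argument is correct and in fact more direct than the paper's on the quasiconcave step. Both treatments handle the concave case the same way—under $b>r_{\mathrm x}+r_{\mathrm y}$ the switch boundary is never crossed because neither agent rationally pushes past its critical point, so each slice collapses to a single concave quadratic (this step is equally informal in the paper, which also appeals to best-response bounds rather than an explicit restriction on $\mathcal X_{\mathrm x},\mathcal X_{\mathrm y}$). For the quasiconcave case, the paper enumerates three geometric sub-configurations of the switch point $c_{\mathrm x}=b-y$ (inside $[-r_{\mathrm x},r_{\mathrm x}]$; to the right together with an l.s.c.\ constraint; to the left together with a u.s.c.\ constraint), argues each yields a quasiconcave slice, and then uses a cross-referencing of the two agents' best responses in each sub-case to conclude that both must land in the middle configuration, from which the capability condition \eqref{quasi_discounti} follows. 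You instead compute the jump $2\pi(y-b_{\mathrm y})$ once and apply a single unimodality criterion (switch past the vertex combined with a jump in the wrong direction), reading off directly that the bad configuration for $f_{\mathrm x}$ is realizable iff $b_{\mathrm x}>r_{\mathrm x}$, and symmetrically $|b_{\mathrm y}|>r_{\mathrm y}$ for $f_{\mathrm y}$. This bypasses the best-response bookkeeping and separates the concavity classification cleanly from the later equilibrium analysis; the paper's longer route has the incidental benefit of pre-computing the optimal strategy in each sub-case, which it recycles in the proof of Theorem~\ref{n_agent_game}. Your stated obstacle—checking that the pathological $y$ and a disconnecting level $c$ actually exist—is not serious: the interval $b_{\mathrm y}<y<b_{\mathrm y}+(b_{\mathrm x}-r_{\mathrm x})$ is nonempty exactly when $b_{\mathrm x}>r_{\mathrm x}$, and any $c$ strictly between $f_{\mathrm x,2}(x_0)$ and $\min\{f_{\mathrm x,1}(x_0),f_{\mathrm x,2}(r_{\mathrm x})\}$ splits the superlevel set.
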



\begin{proof}[Sketch of the proof]
The payoff function is quadratic when there is only one possible CP period in the gaming process, resulting in a concave and continuous game. The indicator function discontinues the game when the CP period changes. According to the quasiconcavity definition, we separate three cases: 1) Agent x can switch the CP period regardless of whether it occurs in period 1 or 2; 2) Agent x can switch the CP period only if it occurs in period 1, and its payoff function is l.s.c. as defined in Definition \ref{Nash_Equ_defi}; 3) Agent x can switch the CP period only if it occurs in period 2, and the function is u.s.c. as defined in Definition \ref{Nash_Equ_defi}. We show only the first case is able for both agents' payoff functions to be quasiconcave. Then the non-concave condition is derived from the complementary set of concave and quasiconcave conditions. 
The detailed proof is provided in the appendix.
\end{proof}

This Proposition shows the game's properties depend on the relationship between the agent's critical point and balance point, affected by baseline demand $X_{1},X_{2},Y_{1},Y_{2}$, CP price $\pi$, and shifting penalty parameters $\alpha_\mathrm{x},\alpha_{\mathrm{y}}$. Under concave game conditions, the CP period always aligns with the baseline since agents' demand shifting is insufficient to alter it. If all agents are capable, as defined in Definition \ref{capably}, the CP shaving game is quasiconcave. Otherwise, if both agents' demand shifting is sufficient to interchange the CP period but at least one agent is non-capable, the game becomes non-concave. In practical systems, due to the variant agents' shifting capability, all three types of games are likely to appear. Empirical studies on demand response indicate that some industrial customers with automation can shift demand efficiently, while some are inefficient in shifting~\citep{DR}. 
With these properties of CP shaving game $G$, we then analyze its NE.



\subsection{Nash Equilibrium}
In this section, we study the NEs of the two-agent two-period CP shaving games in all conditions as described in Proposition \ref{game_proper}. The main theorem is as follows.
\begin{theorem}\emph{NEs in two-agent two-period CP shaving game.} \label{n_agent_game}
Given critical points $r_{\mathrm{x}},r_{\mathrm{y}}$ and balance points $b_{\mathrm{x}},b_{\mathrm{y}}, b$ as defined in Definition \ref{Special_points}, the unique pure-strategy NE $(x^*,y^*)$ as defined in Definition \ref{Nash_Equ_defi} of the two-agent two-period CP shaving game $G$ satisfy one of the following: 
\begin{subequations} \label{general_game_results}
\begin{enumerate}
    \item \emph{Concave CP shaving game}. 
    \begin{align}
    x^* &= r_{\mathrm{x}},\ y^* = r_{\mathrm{y}}, 
      \label{fully_coopera2} 
    \end{align}
    \item \emph{Quasiconcave CP shaving game}.
    \begin{align}
        x^* &= b_{\mathrm{x}},\ y^* = b_{\mathrm{y}}, 
      \label{fully_compet1}
    \end{align}
    \item \emph{Non-concave CP shaving game}.
    \begin{align}
    x^* &= r_{\mathrm{x}},
    y^* = b - r_{\mathrm{x}}, &\text{if }
         \{0<b \leq r_{\mathrm{x}} + r_{\mathrm{y}}\} \cap \{b_{\mathrm{x}} > r_{\mathrm{x}}\},\nonumber\\
         x^* &= b+r_{\mathrm{y}},
        y^* = -r_{\mathrm{y}}, &\text{if }
         \{0<b \leq r_{\mathrm{x}} + r_{\mathrm{y}}\} \cap \{b_{\mathrm{y}} < -r_{\mathrm{y}}\},\nonumber\\
         x^* &= b - r_{\mathrm{y}},
        y^* = r_{\mathrm{y}}, &\text{if }
         \{0<b \leq r_{\mathrm{x}} + r_{\mathrm{y}}\} \cap \{b_{\mathrm{y}} > r_{\mathrm{y}}\}.
     \label{mix2}   
    \end{align}
\end{enumerate}
\end{subequations}
\end{theorem}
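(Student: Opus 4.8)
The plan is to reduce each agent's best-response problem, which is genuinely coupled through $I(x,y)$, to a comparison between the two one-dimensional concave subproblems $f_{\mathrm{x},1},f_{\mathrm{x},2}$ of (\ref{1a}). Since $I(x,y)=1$ exactly when $x+y\ge b$, for a fixed opponent action the only coupling is the location of the switching boundary $x+y=b$, i.e.\ $x=b-y$ for agent x. The quantity driving the whole analysis is the payoff gap across this boundary: a direct computation using $r_\mathrm{x}=\pi/(2\alpha_\mathrm{x})$ and $b_\mathrm{x}=(X_2-X_1)/2$ gives $f_{\mathrm{x},1}(b-y)-f_{\mathrm{x},2}(b-y)=2\pi(y-b_\mathrm{y})$, and symmetrically $f_{\mathrm{y},1}(b-x)-f_{\mathrm{y},2}(b-x)=2\pi(x-b_\mathrm{x})$. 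The sign of this gap decides whether an agent prefers the CP1 side or the CP2 side at the boundary; since each piece is a concave parabola peaked at $\pm r_\mathrm{x}$ (resp.\ $\pm r_\mathrm{y}$), the best response is then obtained by clipping these peaks to the relevant half-line and comparing the two candidate values. I would write each agent's best-response correspondence explicitly in terms of the opponent's action and then locate the fixed points.

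For the concave game I would invoke Proposition~\ref{game_proper}: the hypothesis $b>r_\mathrm{x}+r_\mathrm{y}$ forces $x+y\le r_\mathrm{x}+r_\mathrm{y}<b$ over the relevant range, so $I\equiv 0$ and each payoff collapses to the smooth concave $f_{\mathrm{x},2},f_{\mathrm{y},2}$. The first-order conditions already recorded in (\ref{1a}) give $x^*=r_\mathrm{x}$, $y^*=r_\mathrm{y}$; I would then check consistency (the CP indeed remains in period~2) and read off uniqueness directly from strict concavity.

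For the quasiconcave game both agents are capable, so $-r_\mathrm{x}\le b_\mathrm{x}\le r_\mathrm{x}$ and $-r_\mathrm{y}\le b_\mathrm{y}\le r_\mathrm{y}$. Here the CP2 peak $r_\mathrm{x}$ lies weakly to the right of the boundary and the CP1 peak $-r_\mathrm{x}$ weakly to the left, so neither interior optimum is attainable once the opponent shifts moderately and the best response is pushed onto the boundary $x=b-y$. The gap formula shows this boundary value is maximized, and the switch is continuous, precisely when $y=b_\mathrm{y}$ (gap zero), which pins agent x at $x=b_\mathrm{x}$; symmetry pins agent y at $b_\mathrm{y}$. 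I would verify $(b_\mathrm{x},b_\mathrm{y})$ is a strict mutual best response, note that it balances the system at $S$, and obtain uniqueness from the single-crossing structure of the two best-response curves, which meet only at this continuous point.

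The non-concave game is the main obstacle. Here at least one agent is non-capable, so the system can still be balanced ($x^*+y^*=b$) while the non-capable agent is pinned at its critical point ($r_\mathrm{x}$, or $\pm r_\mathrm{y}$) and the partner absorbs the residual, yielding the three stated subcases. The difficulty is the discontinuity: at the balancing boundary the gap formula gives the absorbing agent a strictly negative jump (e.g.\ $2\pi(r_\mathrm{x}-b_\mathrm{x})<0$ when $b_\mathrm{x}>r_\mathrm{x}$), so that agent's supremum is approached from the CP2 side and is not attained, and a strict pure best response fails to exist exactly at the stated point. I would resolve this as the proof of Proposition~\ref{game_proper} does, using the u.s.c./l.s.c.\ structure of Definition~\ref{Nash_Equ_defi} to treat the stated profile as the closure/limit equilibrium, identifying it with the stable point of the switched best-response dynamics analyzed in the stability section. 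The remaining work is to show this candidate is the \emph{only} equilibrium: I would rule out every CP2-only profile (some agent always profits by nudging the system across $b$ when $0<b\le r_\mathrm{x}+r_\mathrm{y}$) and every other boundary profile (the non-capable agent strictly prefers its critical point), leaving $x^*+y^*=b$ with the pinned critical coordinate as the unique outcome. Handling existence and uniqueness cleanly through this discontinuity is where I expect the real effort to lie.
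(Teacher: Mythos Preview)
Your proposal is correct and parallels the paper's overall case split by game type, but differs in two substantive ways. First, the paper never isolates your boundary-gap identity $f_{\mathrm{x},1}(b-y)-f_{\mathrm{x},2}(b-y)=2\pi(y-b_\mathrm{y})$; instead it proves a separate lemma for the quasiconcave case via a min-max (adversarial best-response) formulation and then runs an exhaustive eight-scenario enumeration according to the sign of $Y_2-Y_1$ and the capability of each agent, most of which your gap formula collapses into a single comparison. Second, and more notably, you correctly flag the discontinuity in the non-concave case as the real obstacle: at $(r_\mathrm{x},\,b-r_\mathrm{x})$ with $b_\mathrm{x}>r_\mathrm{x}$, agent~y's supremum $f_{\mathrm{y},2}(b-r_\mathrm{x})$ strictly exceeds the attained value $f_{\mathrm{y},1}(b-r_\mathrm{x})$ and is not achieved, so a literal best response in the sense of Definition~\ref{Nash_Equ_defi} fails to exist at the stated profile. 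The paper's proof simply writes $y^*=\min\{r_\mathrm{y},\,b-x^*\}$ without confronting this; your plan to close the gap through the u.s.c.\ tie-breaking and the switched-dynamics stable point of Theorem~\ref{dynamics_theorem} is in fact more careful than the paper's own static argument on this point, and it is precisely what the paper relies on operationally even though the proof of Theorem~\ref{n_agent_game} does not make that dependence explicit.
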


\begin{proof}[Sketch of the proof]
We first provide a Lemma (Lemma \ref{lemma_NE2agent}) to establish the NE for the two-agent, two-period quasiconcave CP shaving game $G$, providing insight into the best response rationale. Then we separate two possible cases depends on agents' baseline demand conditions, then study whether both agents are capable, one of them is capable, and none of them is capable. The reason is that agents' best response is to shift demand away from the CP period when their individual demand is higher in the CP period, to avoid CP charges, or shift demand toward the CP period when their individual demand is lower in the CP period, to maintain the CP period. However, the shifting capability is determined by the relationship between the critical point $r_{\mathrm{x}}$ and the agent balance point $b_{\mathrm{x}}$.
We analyze each condition and show corresponding NE solutions. The details are provided in the appendix.
\end{proof}



This theorem shows the NEs and correspind conditions under each concave, quasiconcave, and non-concave CP shaving game type. 
The NE describes the relationship of gaming agents as fully cooperative (\ref{fully_coopera2}), fully competitive (\ref{fully_compet1}), and mixed competitive and cooperative (\ref{mix2}). Specifically, When they are in a fully cooperative relationship, they cooperate to shift the demand away from the CP period. Once they can change the CP period together, they enter a competitive relationship where they compete with each other to change the CP period, but one agent's decision is limited by their shifting capability. When all agents are capable, they are in a fully competitive relationship that can balance their demand regardless of opponents' strategy.
Indeed, if both agents want to be in a mixed relationship, they must be 'asymmetrical,' where their shifting penalty parameters should be different so that they have different shifting limitations. Otherwise, they are either fully cooperative or fully competitive. Note that the condition described in this theorem aligns with the conditions in Proposition \ref{game_proper}, specifically for non-concave game, we write out each condition conditions along with the NE solutions.

We also demonstrate that, in practice, under quasiconcave and concave game conditions, agents share the market equally, meaning their strategies depend solely on their own parameters, independent of their opponent's strategy. In contrast, in the non-concave game, an agent's strategy is influenced by the opponent's actions. For example, when one agent's strategy follows $y=b \mp r_{\mathrm{x}}$ it tends to be more flexible and dominates the other, less flexible agent with $x = \pm r_{\mathrm{x}}$, gaining a competitive advantage and benefiting more. Notably, agents' demand-shifting strategies $x^*,y^*$ reflect their flexibility and are determined by the game type as well as the shifting penalty parameter $\alpha_{\mathrm{x}},\alpha_{\mathrm{y}}$ and baseline demand conditions $b_{\mathrm{x}},b_{\mathrm{y}}$ (depends on $X_{1},X_{2},Y_{1},Y_{2}$).


\section{Equilibrium stability and algorithm convergence}
The CP shaving game system is a switched dynamics system due to the indicator function in the model (\ref{agenta}). In this section, we first analyze the system stability and then develop a solution algorithm to reach the stable (equilibrium) point.
\subsection{Equilibrium stability}
There are two system dynamics corresponding to periods 1 and 2, separated by the indicator function and a switching logic between these two periods. According to~\citep{rosen1965}, we consider a reasonable dynamic model in each period in which each agent changes his strategy following the gradient direction with respect to his strategy of his payoff function, then each agent's payoff will increase given all other agents' strategies. Denote the dynamic time index with $k$, and the gradient as $F_{1}(x,y) = [\nabla_\mathrm{x} f_{\mathrm{x}}(x,y),\nabla_{\mathrm{y}}f_{\mathrm{y}}(x,y)]^T$ with total differential operator $\nabla$, the dynamics of period 1 ($x,y\in \mathcal{X}_{\mathrm{cp}1}$) is 
\begin{subequations}
\begin{align}
    \frac{d[x,y]^T}{dk} &= [\dot{x},\dot{y}]^T = F_1(x,y)= [-(\pi+ 2\alpha_{\mathrm{x}}x),
        -(\pi+2\alpha_{\mathrm{y}}y)]^T.\label{dynamics_system1}
\end{align} 
The period 2, i.e., $x,y\in \mathcal{X}_{\mathrm{cp}2}$, follow the same structure with different gradient $F_{2}(x,y)$:
    \begin{align}
    \frac{d[x,y]^T}{dk} &= [\dot{x},\dot{y}]^T= F_2(x,y) = [\pi- 2\alpha_{\mathrm{x}}x,
        \pi-2\alpha_{\mathrm{y}}y]^T.\label{dynamics_system2}
    \end{align}
\end{subequations}

We then denote the CP shaving game system with the switching logic as 
\begin{align}
    [\dot{x},\dot{y}]^T = \begin{cases} 
      F_1(x,y) & x,y\in \mathcal{X}_{\mathrm{cp}1} \\
      F_2(x,y) & x,y\in \mathcal{X}_{\mathrm{cp}2}
        \end{cases}. \label{switched_system}
\end{align}

Our goal is to prove the overall CP shaving game system with the switching logic is global uniform asymptotically stable~\citep{switching_dynamics}. According to the NE described in Theorem \ref{n_agent_game}, we call $(x^*,y^*)$ as the equilibrium point of the CP shaving game system (\ref{switched_system}) if one of the following is satisfied,
\begin{subequations}\label{equilbrium_point}
\begin{align}
    f_{\mathrm{x},1}(x^*) + f_{\mathrm{y},1}(y^*) &= f_{\mathrm{x},2}(x^*) + f_{\mathrm{y},2}(y^*), \label{1NE}\\
    F_2(x^*,y^*) &= 0.  \label{3NE}
\end{align}
\end{subequations}
Among them, the first condition (\ref{1NE}) corresponds to non-concave CP shaving game (\ref{mix2}) and quasiconcave CP shaving game (\ref{fully_compet1}), where both periods have the same CP charges and shifting penalty because the system demand is balanced in the two periods at NE and the shifting penalty is symmetric in the two periods. The second (\ref{3NE}) conditions correspond to the concave CP shaving game (\ref{fully_coopera2}), where the NE is obtained when the gradient $F_2$ reaches zero.
Then, we introduce the main stability results.
\begin{theorem}\emph{Global stability of equilibrium point.}\label{dynamics_theorem}
    The CP shaving game system (\ref{switched_system}) is global uniform asymptotically stable in $\mathcal{X}_{\mathrm{s}}$, where
        \begin{align} \label{local_stable}
          \mathcal{X}_{\mathrm{s}} = \{x,y|  \alpha_{\mathrm{x}} x^2 + \alpha_{\mathrm{y}} y^2 + \pi(S_{\mathrm{b},1} + x+y) > 0 
            \cap \alpha_{\mathrm{x}} x^2 + \alpha_{\mathrm{y}} y^2 + \pi(S_{\mathrm{b},2}-x-y) > 0\},
        \end{align}
    i.e., for every starting point $(x,y) \in \mathcal{X}_{\mathrm{s}}$, the solution $(x(k),y(k))$ to the CP shaving game system (\ref{switched_system}) converges to an equilibrium point $(x^*,y^*)\in \mathcal{X}_{\mathrm{s}}$ as $k \rightarrow \infty$, where $(x^*,y^*)$ is the equilibrium point satisfy (\ref{equilbrium_point}).
\end{theorem}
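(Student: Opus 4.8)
The plan is to recognize that the switched dynamics (\ref{switched_system}) is nothing but (sub)gradient ascent on a single concave function, and then to build a common Lyapunov function out of that function. First I would observe that each period's vector field is the gradient of the aggregate period payoff: writing $W_1(x,y)=f_{\mathrm{x},1}(x)+f_{\mathrm{y},1}(y)=-\pi(S_{\mathrm{b},1}+x+y)-\alpha_\mathrm{x}x^2-\alpha_\mathrm{y}y^2$ and $W_2(x,y)=f_{\mathrm{x},2}(x)+f_{\mathrm{y},2}(y)=-\pi(S_{\mathrm{b},2}-x-y)-\alpha_\mathrm{x}x^2-\alpha_\mathrm{y}y^2$, one checks directly that $F_1=\nabla W_1$ and $F_2=\nabla W_2$. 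Moreover $W_1-W_2=2\pi(b-(x+y))$, so on $\mathcal{X}_{\mathrm{cp}1}$ (where $x+y\ge b$) we have $W_1\le W_2$ and on $\mathcal{X}_{\mathrm{cp}2}$ we have $W_1> W_2$; hence in each region the active field is the gradient of the single function $W:=\min\{W_1,W_2\}$. Because $\alpha_\mathrm{x},\alpha_\mathrm{y}>0$, both $W_1,W_2$ are strictly concave, so $W$ is strictly concave with a unique maximizer, and I would verify that this maximizer coincides with the equilibrium point characterized by (\ref{equilbrium_point}): if the unconstrained maximizer $(r_\mathrm{x},r_\mathrm{y})$ of $W_2$ lies in $\mathcal{X}_{\mathrm{cp}2}$ (the concave game, $b>r_\mathrm{x}+r_\mathrm{y}$) then $F_2=0$ there, which is (\ref{3NE}); otherwise the maximizer sits on the switching line $x+y=b$ where $W_1=W_2$, which is (\ref{1NE}), and it matches the on-surface NE of Theorem \ref{n_agent_game}.

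With this reduction the natural common Lyapunov candidate is $V(x,y)=W(x^*,y^*)-W(x,y)$, equivalently the pair $V_i=W(x^*,y^*)-W_i$ used on region $i$. I would show $V\ge 0$ with equality only at $(x^*,y^*)$ by strict concavity, that $V$ is continuous across the switching line (since $W_1=W_2$ there), and that it is radially unbounded because $W\to-\infty$ as $\|(x,y)\|\to\infty$; this last fact makes every sublevel set $\{V\le c\}$ bounded and forward invariant, which is what upgrades convergence to a global statement. The role of the set $\mathcal{X}_{\mathrm{s}}$ in (\ref{local_stable}) is exactly that $-W_1>0$ and $-W_2>0$ there, i.e. both aggregate payoffs are negative (costs positive), which is the physically meaningful regime in which the equilibrium lies and in which $V_1,V_2$ remain valid positive energies; I would confirm $(x^*,y^*)\in\mathcal{X}_{\mathrm{s}}$ using $S_{\mathrm{b},1},S_{\mathrm{b},2}>0$ together with $x^*+y^*=b$, and check that $\mathcal{X}_{\mathrm{s}}$ is forward invariant under (\ref{switched_system}).

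The decrease estimate then splits into a smooth part and a nonsmooth part. In the interior of either region the dynamics is the gradient flow of a differentiable concave function, so $\dot V=-\nabla W_i\cdot F_i=-\|\nabla W_i\|^2\le 0$, vanishing only at the maximizer; invoking \citep{rosen1965} this already settles the concave game, whose equilibrium lies in the interior of $\mathcal{X}_{\mathrm{cp}2}$ and is reached without any sliding. The quasiconcave and non-concave games are the interesting case, because their equilibrium lies on the line $x+y=b$ and the two fields push toward that line from opposite sides ($F_1$ drives $x+y$ down in $\mathcal{X}_{\mathrm{cp}1}$, $F_2$ drives it up in $\mathcal{X}_{\mathrm{cp}2}$), so trajectories reach the line and then slide along it to $(x^*,y^*)$. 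Here I would pass to the Filippov/switched-system framework of \citep{switching_dynamics}, represent the sliding velocity as the convex combination of $F_1,F_2$ tangent to the line, and use the Clarke generalized gradient of the nonsmooth $V$ to show its generalized derivative along the sliding motion is strictly negative away from $(x^*,y^*)$. Concluding via a nonsmooth LaSalle/common-Lyapunov argument then yields global uniform asymptotic stability, the uniformity being automatic because the system is autonomous and $V$ is a single Lyapunov function valid regardless of how the switching is resolved on the surface.

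The hard part will be the rigorous treatment of the switching surface: establishing that the sliding mode is attractive so trajectories do not chatter away, defining the Filippov solution unambiguously, and verifying that the nonsmooth Lyapunov function genuinely decreases under the sliding dynamics rather than merely under each individual field. A secondary technical point is checking forward invariance of $\mathcal{X}_{\mathrm{s}}$, so that the positivity of $V_1,V_2$ — and hence the whole Lyapunov argument — is maintained along the entire trajectory.
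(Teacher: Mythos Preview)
Your Lyapunov functions are, up to an additive constant, exactly the ones the paper uses: the paper takes $\mathcal{V}_i=-W_i$ (so that $\mathcal{V}_i>0$ is precisely the defining inequality of $\mathcal{X}_{\mathrm{s}}$), checks $\dot{\mathcal{V}}_i=-\|F_i\|^2\le 0$ in region $i$, and verifies $\mathcal{V}_1=\mathcal{V}_2$ on the switching line, then invokes the multiple-Lyapunov-function theory of \citep{switching_dynamics}. Your packaging via the single concave potential $W=\min\{W_1,W_2\}$ is a genuine conceptual improvement: it explains in one line why the two pieces match on the surface and why the dynamics is attracted there, and it lets you analyze the sliding motion explicitly through Filippov, which the paper does not do.

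There is, however, a real slip in your identification of the limit point. The maximizer of $W$ on the switching line $x+y=b$ is the point with $\alpha_{\mathrm{x}}x=\alpha_{\mathrm{y}}y$, i.e.\ the \emph{centralized} optimum $(x_{\mathrm{cen}}^*,y_{\mathrm{cen}}^*)$ of (\ref{centralized_solution}), not the Nash equilibrium of Theorem~\ref{n_agent_game}; in the quasiconcave case, for instance, the NE is $(b_{\mathrm{x}},b_{\mathrm{y}})$, which generally differs. Your own sliding computation would confirm this: the tangent Filippov velocity on $x+y=b$ is $(\alpha_{\mathrm{y}}y-\alpha_{\mathrm{x}}x,\ \alpha_{\mathrm{x}}x-\alpha_{\mathrm{y}}y)$, whose unique rest point is $\alpha_{\mathrm{x}}x=\alpha_{\mathrm{y}}y$. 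This does not break the theorem as stated, since the centralized optimum still lies on $x+y=b$ and therefore satisfies (\ref{1NE}); but the sentence ``it matches the on-surface NE of Theorem~\ref{n_agent_game}'' is incorrect and should be dropped. The paper, for its part, stops at ``the trajectory is driven to the switching surface where $\mathcal{V}_1=\mathcal{V}_2$'' and defers the identification of the specific point to the discrete algorithm of Theorem~\ref{convergency_theorem}, whose asymmetric, agent-by-agent step-size rule is \emph{not} a discretization of the Filippov flow and is what actually selects the NE rather than the centralized optimum.
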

\begin{proof}[Sketch of the proof]
The switched dynamics system property makes the stability analysis non-trivial. We first show that the dynamical systems within each period described by (\ref{dynamics_system1}) and (\ref{dynamics_system2}) are asymptotically stable in $\mathcal{X}$.
Then, we include the switching logic and derive multiple Lyapunov functions with continuous properties in the switching surface to show the CP shaving game system (\ref{switched_system}) is global uniform asymptotically stable in $\mathcal{X}_{\mathrm{s}}$~\citep{switching_dynamics}.
After that, we show the stable point is obtained with $F_2=0$ under the condition of concave CP shaving game, and with $f_{\mathrm{x},1}+f_{\mathrm{y},1} = f_{\mathrm{x},2}+f_{\mathrm{y},2}$ under the condition of quasiconcave or non-concave CP shaving game. From Theorem \ref{n_agent_game}, we also know the equilibrium is unique, which means the stable point is the equilibrium point. The detailed proof is provided in the appendix. 
\end{proof}

This Theorem shows the global uniform asymptotically stability of our CP shaving game system in the strategy set $\mathcal{X}_{\mathrm{s}}$. Combine with Theorem \ref{n_agent_game}, the CP shaving game $G$ can converge to the unique equilibrium points for the solution trajectory that within $\mathcal{X}_{\mathrm{s}}$. It is important to know that $\mathcal{X}_{\mathrm{s}}$ is reasonable in real operations. Specifically, the key component of $\mathcal{X}_{\mathrm{s}}$ is $\pi(S_{\mathrm{b},2}-x-y)$, which means when agents shift demand, the shifting amount remains within the system’s baseline limitation. This holds naturally and ensures practical system stability, as demand shifting cannot result in negative system demand, meaning agents cannot "borrow" demand to shift beyond their available capacity.

With the stability property, our next step is to develop an algorithm to compute the equilibrium point.
 
\subsection{Algorithm to determine the equilibrium point}
Given the finite difference approximation to the CP shaving game system dynamics (\ref{switched_system}) with learning rate vector $\tau_{\mathrm{x},h},\tau_{\mathrm{y},h}$ for agent x and y, as well as gradient $F_j$, we have
        \begin{align}
            [x_{h+1},y_{h+1}]^T = [x_{h},y_{h}]^T + \text{diag}(\tau_{\mathrm{x},h},\tau_{\mathrm{y},h}) F_{j}(x_{h},y_{h}), j=1,2,  \label{gradient_algorithm}
        \end{align}
where $\text{diag}(\cdot):\mathbb{R}^2 \to \mathbb{R}^{2 \cdot 2}$, $j$ is the switching signal taking the value 1 for $S_1(x,y)\geq S_2(x,y)$ and 2 for $S_1(x,y)<S_2(x,y)$.
This then forms a gradient-based algorithm following the updating rule (\ref{gradient_algorithm}) to gradually reach the equilibrium point. Among them, the gradient is determined by the period on which the current solution lies, and the learning rate for agents x and y $\tau_{\mathrm{x},h},\tau_{\mathrm{y},h}$ are determined by its payoff function and gradient. Choosing a suitable learning rate is the key to showing convergence performance, we thus provide the following Theorem.
\begin{theorem}\emph{Determination of the equilibrium point.}  \label{convergency_theorem} 
Given the finite difference approximation as described in (\ref{gradient_algorithm}), a finite learning rate $\tau_{\mathrm{x},h}$ for agent x can be selected such that when $X_{1}+x\geq X_{2}-x$
\begin{subequations}\label{converge_condition}
    \begin{align}
        -f_{\mathrm{x},1} (x_{\overline{h}}) < -f_{\mathrm{x},1}(x_{\underline{h}}), 
        -f_{\mathrm{x},2} (x_{\overline{h}}) > -f_{\mathrm{x},2}(x_{\underline{h}}),
    \end{align}
    when $X_{1}+x < X_{2}-x$,
    \begin{align}
        -f_{\mathrm{x},1} (x_{\overline{h}}) > -f_{\mathrm{x},1}(x_{\underline{h}}), 
        -f_{\mathrm{x},2} (x_{\overline{h}}) < -f_{\mathrm{x},2}(x_{\underline{h}}),
    \end{align} 
\end{subequations}
where $F_j(x_h,y_h) \neq 0, j=1,2$; $\underline{h}$ and $\overline{h}$ is a switched pair that satisfies $\underline{h}<h<\overline{h}$ and $\underline{h} = \overline{h} = j$, $h \neq j$. The same also holds for the learning rate $\tau_{\mathrm{y},h}$ of agent y.
\end{theorem}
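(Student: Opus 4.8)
The plan is to treat the update rule (\ref{gradient_algorithm}) as an explicit Euler discretization of the switched system (\ref{switched_system}) and to exploit the fact that, on each side of the switching surface, the gradient $F_j$ is affine in the agent's own variable. First I would freeze the period index $j$ and write agent x's update as the scalar recursion $x_{h+1} - x_j' = (1 - 2\alpha_\mathrm{x}\tau_{\mathrm{x},h})(x_h - x_j')$, where $x_1' = -r_\mathrm{x}$ and $x_2' = r_\mathrm{x}$ are the unconstrained maximizers of $f_{\mathrm{x},1}$ and $f_{\mathrm{x},2}$ identified in (\ref{1a}). Since both payoffs are concave quadratics with curvature $2\alpha_\mathrm{x}$, this is a strict contraction toward $x_j'$ exactly when $0 < \tau_{\mathrm{x},h} < 1/\alpha_\mathrm{x}$; any finite step in this range therefore gives $f_{\mathrm{x},j}(x_{h+1}) > f_{\mathrm{x},j}(x_h)$ whenever $F_j(x_h,y_h)\neq 0$, which is the no-switching content of the theorem.

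Next I would interpret the two cases in (\ref{converge_condition}) as the sign signature of net drift toward the equilibrium value $x^*$. In the regime $X_1 + x \geq X_2 - x$ the iterate sits above agent x's balance point and is driven downward, so along the switched pair $x_{\overline h} < x_{\underline h}$; because $f_{\mathrm{x},1}$ is decreasing and $f_{\mathrm{x},2}$ is increasing in $x$ on the relevant range (read off from the derivatives $-\pi - 2\alpha_\mathrm{x} x$ and $\pi - 2\alpha_\mathrm{x} x$), a decrease of $x$ raises $f_{\mathrm{x},1}$ and lowers $f_{\mathrm{x},2}$, giving exactly $-f_{\mathrm{x},1}(x_{\overline h}) < -f_{\mathrm{x},1}(x_{\underline h})$ and $-f_{\mathrm{x},2}(x_{\overline h}) > -f_{\mathrm{x},2}(x_{\underline h})$. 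The opposite regime $X_1 + x < X_2 - x$ is the mirror image, with the iterate below its equilibrium and drifting upward, which flips both inequalities as stated. Thus the whole theorem reduces to showing that a finite step produces this net drift across each switched pair.

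The key steps, in order, are: (i) reduce the per-period update to the scalar affine recursion and read off the contraction range $\tau_{\mathrm{x},h} < 1/\alpha_\mathrm{x}$; (ii) establish strict within-period ascent for any finite step in that range; (iii) isolate a switched pair $(\underline h, \overline h)$, bound the overshoot of the switching surface $S_1 = S_2$ by a quantity proportional to $\tau_{\mathrm{x},h}$, and show the composition of the two oppositely oriented affine maps $F_1, F_2$ is a net contraction toward $x^*$ over the pair; (iv) convert this net displacement into the sign relations of (\ref{converge_condition}) via the monotonicity of $f_{\mathrm{x},1}, f_{\mathrm{x},2}$; and (v) appeal to the global uniform asymptotic stability of Theorem \ref{dynamics_theorem} to conclude that the discretization inherits convergence to the unique equilibrium. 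The identical argument, with $\alpha_\mathrm{x}, r_\mathrm{x}$ replaced by $\alpha_\mathrm{y}, r_\mathrm{y}$, yields the corresponding bound on $\tau_{\mathrm{y},h}$.

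The hard part will be step (iii): switching destroys the simple per-step monotonicity precisely in the neighborhood of the balance point where the equilibrium lives, so the argument cannot rely on ascent at the intermediate step $h$ alone. Controlling the boundary overshoot and proving that $F_1$ and $F_2$ compose to a contraction over a full switched pair is the discrete counterpart of the matched multiple-Lyapunov-function construction used in Theorem \ref{dynamics_theorem}, and it is the only place where the finiteness --- and, if necessary, the smallness --- of the learning rate is genuinely exploited rather than merely assumed.
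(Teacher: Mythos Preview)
Your approach diverges from the paper's in a way that creates a real gap at exactly the point you flag as hard. The paper does not pick $\tau_{\mathrm{x},h}$ from a uniform contraction range; it selects it by a \emph{cross-period} backtracking line search of the form $-f_{\mathrm{x},j'}(x_{h+1}) < -f_{\mathrm{x},j}(x_h) - \beta_1\tau_{\mathrm{x},h}\lVert F_j\rVert^2$, where $j'$ is the period the next iterate lands in. When switching occurs the two agents' individual peak periods necessarily disagree, so this test is satisfiable for one agent and \emph{unsatisfiable} for the other --- forcing that agent's learning rate to zero at that step. The resulting asymmetric update (only agent x moves on the $1\to 2$ transition, only agent y moves on the $2\to 1$ transition, say) is what yields $x_{\overline h}=x_{h+1}<x_h=x_{\underline h}$ directly, and the monotonicity of $-f_{\mathrm{x},1},-f_{\mathrm{x},2}$ then gives (\ref{converge_condition}).

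Your step (iii) asserts instead that with both agents taking positive steps, the composition of the $F_1$- and $F_2$-updates on $x$ is a net contraction toward $x^*$. This is false in the sense you need. Composing the two scalar affine maps with the same step $\tau$ gives $x_{\overline h}=(1-2\alpha_\mathrm{x}\tau)^2 x_{\underline h}+2\alpha_\mathrm{x}\tau^2\pi$, whose fixed point is $\tau\pi/(2-2\alpha_\mathrm{x}\tau)\to 0$ as $\tau\to 0$, not $b_\mathrm{x}$. The drift toward $b_\mathrm{x}$ comes entirely from the \emph{coupling with $y$ through the switching condition} $S_1=S_2$, which determines how many intermediate steps are spent in each period; treating the $x$-dynamics in isolation, as your plan does, loses this mechanism. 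Without the paper's freeze-one-agent device (or some substitute that explicitly tracks the joint crossing of the switching surface), you have no argument that $x_{\overline h}<x_{\underline h}$ in the first regime, and the sign relations in (\ref{converge_condition}) do not follow.
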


\begin{proof}[Sketch of the proof] 
We use the backtracking line search method~\citep{convex_optimization} to calculate the learning rate, which is affected by the CP periods of the current and future steps. For the concave game, only the baseline CP period (2) could be the CP period during the entire solution process, according to the backtracking line search, the gradient $F_2$ reduce gradually, and combined with Theorem \ref{dynamics_theorem}, the gradient will reduce to zero and reach the equilibrium point satisfy (\ref{3NE}). 

When switching happens, agents' individual peak periods are different, and we analyze the objective function $-f_{\mathrm{x},1},-f_{\mathrm{x},2}$ change following the backtracking line search criteria by judging where the CP periods is in the current and next steps. Given a trajectory starts from period 1 at $h$, switch to period 2 at $h+1$, and back to period 1 at $h+2$, suppose agent x's individual peak period is 1, we then show agent x updates its decision $x$ when switching from period 1 to 2, and agent y update its decision $y$ when switching from period 2 to 1, i.e., the $\tau_{\mathrm{y},h}=0, \tau_{\mathrm{x},h+1}=0$, correspondingly, agent x's objective $-f_{\mathrm{x},1}$ decrease and agent y's objective $-f_{\mathrm{y},2}$ decrease. The reason for the change is that their individual peak period aligns with the system's CP period. Note that this learning rate is determined by their payoff functions and gradients, allowing the gradient for agent x to increase its decision variable while the gradient for agent y keeps the decision variable. By the same analysis, we know agent x's objective $-f_{\mathrm{x},2}$ increase and agent y's objective $-f_{\mathrm{y},1}$ increase in the trajectory starting from period 2, switched to period 1, and back to period 2. We then show these results still hold when the trajectory stays more steps in the period that between two transitions. This proves the reduction of the distance between the objective functions in the two periods. Combined with Theorem \ref{dynamics_theorem}, the entire system finally will satisfy (\ref{1NE}) and reach the equilibrium point.
The details are provided in the appendix.
\end{proof}

This Theorem shows that the finite learning rate can be chosen in each step using the backtracking line search. Following this updating rule (\ref{gradient_algorithm}), combined with Theorem \ref{dynamics_theorem}, a gradient-based algorithm can reach the equilibrium point. Specifically, the difference in each agent's payoff function in the two periods reduces gradually, and depending on whether the game is concave or not, the difference can be reduced to zero or until baseline CP periods' gradients reaches zero. 

As Theorem \ref{dynamics_theorem} shows, the system is asymptotically stable in $\mathcal{X}_{\mathrm{s}}$. Combined with Theorem \ref{convergency_theorem}, this confirms that the algorithm computes the NE as described in Theorem \ref{n_agent_game}. When applying this game framework to a practical CP shaving system, the convergence process naturally reflects real agent interactions in the gaming process. Specifically, the system dynamics alternate between periods 1 and 2, while agents iteratively update their decisions based on the best response strategy. That is, given the opponents' (aggregated) strategy, each agent determines an optimal response based on its own payoff. This iterative process follows the system dynamics and ultimately converges to the NE described in Theorem \ref{n_agent_game}.
Note that we don't use a higher-order gradient descent method, such as Newton's method, because the fast updating with higher-order gradient information lets the solution in each period converge to each period's stable point too fast to realize the converge on the overall switched system.

\section{Impact of Customers Strategic Behavior}
In this section, we analyze gaming agents' strategic behavior in the two-agent two-period setting. First, we state a benchmark centralized CP shaving model for comparison, then analyze agents' strategic behavior from both an economic perspective with the efficiency loss, and a technical perspective with the peak shaving effectiveness. 

\subsection{Centralized CP shaving}
We first introduce the centralized CP shaving model, which assumes a central operator has direct control over both agents in the formulated two-period setting to minimize their total cost. This model represents a centralized realization of the CP shaving game. Consequently, the centralized model maximizes the total objective function of both agents and is formulated as follows
\begin{align}
    (x^*,y^*) \in\arg \max_{x, y} f_{\mathrm{x}}(x, y) +f_{\mathrm{y}}(x, y),\label{cost_cen}
\end{align}

We now present the following proposition, demonstrating that the centralized CP shaving model is equivalent to a travail centralized peak shaving model. This model minimizes the total peak demand of both agents while also incorporating peak demand charges and shifting costs.
\begin{proposition}\label{centralizd_proposition}
    \emph{Centralized peak shaving model}.
    The centralized CP shaving model (\ref{cost_cen}) is equivalent to the trivial concave peak shaving model as follows:
        \begin{align}
            (x^*,y^*) \in \arg\max_{x,y}\ -\pi \max \{S_{1}(x,y),S_2(x,y)\} - \alpha_{\mathrm{x}} x^2 - \alpha_{\mathrm{y}}y^2\label{convex_centralized_mode}
        \end{align}
\end{proposition}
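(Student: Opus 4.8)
The plan is to show that the two objective functions are in fact pointwise identical on the entire strategy set $\mathcal{X}$, which immediately forces their argmax sets to coincide. First I would add the two payoff functions $f_{\mathrm{x}}(x,y)$ and $f_{\mathrm{y}}(x,y)$ from (\ref{cost_game}) and collect the terms multiplying $I(x,y)$ and $(1-I(x,y))$ separately. Because both agents share the single indicator $I(x,y)$, which depends only on the sign of $S_1(x,y)-S_2(x,y)$, the peak-charge contributions combine cleanly: the coefficient of $I(x,y)$ becomes $-\pi[(X_1+x)+(Y_1+y)]$ and the coefficient of $(1-I(x,y))$ becomes $-\pi[(X_2-x)+(Y_2-y)]$, while the quadratic penalties simply add.

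The key observation is that these two aggregates are precisely the system demands defined in (\ref{agenta}), namely $(X_1+x)+(Y_1+y) = S_1(x,y)$ and $(X_2-x)+(Y_2-y) = S_2(x,y)$. Substituting yields
\[
    f_{\mathrm{x}}(x,y)+f_{\mathrm{y}}(x,y) = -\pi S_1(x,y)\, I(x,y) - \pi S_2(x,y)\,(1-I(x,y)) - \alpha_{\mathrm{x}} x^2 - \alpha_{\mathrm{y}} y^2.
\]
I would then invoke the definition of the step function (\ref{step_function}): when $S_1 \geq S_2$ we have $I = 1$, so the charge term equals $-\pi S_1 = -\pi\max\{S_1, S_2\}$, whereas when $S_1 < S_2$ we have $I = 0$, so the charge term equals $-\pi S_2 = -\pi\max\{S_1, S_2\}$. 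In both cases $-\pi S_1 I - \pi S_2(1-I) = -\pi\max\{S_1(x,y), S_2(x,y)\}$, so $f_{\mathrm{x}}+f_{\mathrm{y}}$ agrees with the objective in (\ref{convex_centralized_mode}) at every $(x,y)\in\mathcal{X}$. Since the two objectives are equal as functions, their maximizers are identical, which is the claimed equivalence.

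To justify the word \emph{concave} in the statement, I would finally note that $\max\{S_1, S_2\}$ is the maximum of two affine functions of $(x,y)$ and is therefore convex, so $-\pi\max\{S_1, S_2\}$ is concave; adding the strictly concave quadratic $-\alpha_{\mathrm{x}} x^2 - \alpha_{\mathrm{y}} y^2$ with $\alpha_{\mathrm{x}},\alpha_{\mathrm{y}}>0$ preserves concavity, confirming that (\ref{convex_centralized_mode}) is a concave maximization problem.

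The only point requiring care — and the closest thing to an obstacle — is the tie-breaking convention on the switching surface $S_1 = S_2$: one must verify that the paper's convention $I=1$ there is consistent with the $\max$ formulation. This holds trivially, since $S_1 = S_2$ forces $-\pi S_1 = -\pi S_2 = -\pi\max\{S_1, S_2\}$ independently of how the tie is broken, so the indicator-based objective and the $\max$-based objective agree even on the boundary. Everything else is direct algebraic substitution.
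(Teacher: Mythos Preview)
Your proposal is correct and follows essentially the same approach as the paper: both add $f_{\mathrm{x}}+f_{\mathrm{y}}$, collect the indicator terms into $-\pi S_1 I - \pi S_2(1-I)$, identify this with $-\pi\max\{S_1,S_2\}$, and then note concavity. Your treatment is slightly more thorough in that you explicitly check the tie-breaking case $S_1=S_2$ and spell out the convexity-of-max argument, but the underlying argument is identical.
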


\begin{proof}
    Take $f_\mathrm{x},f_{\mathrm{y}}$ as defined in (\ref{agenta}) into (\ref{cost_cen}), we have
        \begin{align}
         f_\mathrm{x}(x, y) &+f_{\mathrm{y}}(x, y) =    -\pi(X_{1}+x) I(x,y) 
        -\pi(X_{2}-x) (1-I(x,y))
        \nonumber\\
        &-\pi(Y_{1}+y) I(x,y) 
        -\pi(Y_{2}-y) (1-I(x,y))
        - \alpha_{\mathrm{y}} y^2 - \alpha_{\mathrm{x}} x^2 \nonumber\\
        &= -\pi S_{1}(x,y)I(x,y)
        -\pi S_{2}(x,y) (1-I(x,y))
        - \alpha_{\mathrm{y}} y^2 - \alpha_{\mathrm{x}} x^2\nonumber\\
        &=-\pi \max\{S_1(x,y),S_2(x,y)\} - \alpha_{\mathrm{y}} y^2 - \alpha_{\mathrm{x}} x^2.
        \end{align}
    Note that the peak shaving model in (\ref{convex_centralized_mode}) is concave because maximize two concave (linear) functions $S_1(x,y),S_2(x,y)$ is concave; thus, we prove the Proposition.
\end{proof}



        

This model assumes direct control over each agent's demand, achieving peak shaving at minimal cost. However, while its performance is strong, the assumption of direct control over each agents' demand is unrealistic in practice, as agents (customers) typically have independent strategies and privacy concerns. Therefore, we consider \eqref{convex_centralized_mode} as a benchmark to compare with our CP shaving game model and analyze peak shaving effectiveness and efficiency loss in the following sections.


\subsection{Peak shaving ratio analysis}
We show in the following theorem that the CP shaving game can achieve the same effectiveness in reducing the system peak demand at its equilibrium.
\begin{theorem}\label{peak_shacing}
\emph{Peak shaving performance}.
    The peak shaving effectiveness of the CP shaving game model $G$ at equilibrium is always 1, i.e.,
    \begin{align}
    \frac{\max \{S_1(x^*,y^*),S_2(x^*,y^*)\}}
    {\max \{S_1(x_{\mathrm{cen}}^*,y_{\mathrm{cen}}^*),S_2(x_{\mathrm{cen}}^*,y_{\mathrm{cen}}^*)\}} = 1,
    \end{align}
    for all $\pi, \alpha_\mathrm{x},\alpha_{\mathrm{y}}, X,Y>0$; where $x^*,y^*$ is the game equilibrium results and $x_{\mathrm{cen}}^*, y_{\mathrm{cen}}^*$ is the centralized peak shaving results.
\end{theorem}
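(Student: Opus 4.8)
The plan is to observe that the system peak depends on the agents' strategies only through the total shifted quantity $z := x+y$, and then to show that this total shift coincides at the game equilibrium and at the centralized optimum in every regime. Writing $S_1(x,y) = S_{\mathrm{b},1}+z$ and $S_2(x,y)=S_{\mathrm{b},2}-z$, the peak $\max\{S_1,S_2\}$ is a convex piecewise-linear function of $z$ that strictly decreases for $z<b$, attains its minimum value $S$ at $z=b$, and strictly increases for $z>b$ (using $b=\tfrac{S_{\mathrm{b},2}-S_{\mathrm{b},1}}{2}$ and $S=\tfrac{S_{\mathrm{b},2}+S_{\mathrm{b},1}}{2}$ from Definition \ref{Special_points}). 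Hence it suffices to compare the equilibrium total shift $z^*=x^*+y^*$ with the centralized total shift $z_{\mathrm{cen}}=x_{\mathrm{cen}}^*+y_{\mathrm{cen}}^*$.

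First I would reduce the centralized problem \eqref{convex_centralized_mode} to a one-dimensional concave program in $z$. For a fixed $z$ the peak term is fixed, so the optimal split minimizes $\alpha_{\mathrm{x}}x^2+\alpha_{\mathrm{y}}y^2$ subject to $x+y=z$; the first-order condition gives $\alpha_{\mathrm{x}}x=\alpha_{\mathrm{y}}y$ and a minimal shifting cost of $\alpha_{\mathrm{eff}}z^2$ with $\alpha_{\mathrm{eff}}=\tfrac{\alpha_{\mathrm{x}}\alpha_{\mathrm{y}}}{\alpha_{\mathrm{x}}+\alpha_{\mathrm{y}}}$. The key identity is $\tfrac{\pi}{2\alpha_{\mathrm{eff}}}=r_{\mathrm{x}}+r_{\mathrm{y}}$. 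The centralized objective then becomes $-\pi\max\{S_{\mathrm{b},1}+z,S_{\mathrm{b},2}-z\}-\alpha_{\mathrm{eff}}z^2$, which is strictly concave in $z$. Examining its one-sided derivatives at the kink $z=b$, the right derivative $-\pi-2\alpha_{\mathrm{eff}}b$ is always negative, while the left derivative equals $2\alpha_{\mathrm{eff}}(r_{\mathrm{x}}+r_{\mathrm{y}}-b)$. Consequently $z_{\mathrm{cen}}=r_{\mathrm{x}}+r_{\mathrm{y}}$ when $b>r_{\mathrm{x}}+r_{\mathrm{y}}$, and $z_{\mathrm{cen}}=b$ when $b\le r_{\mathrm{x}}+r_{\mathrm{y}}$.

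Next I would read off $z^*$ from the equilibria in Theorem \ref{n_agent_game}. In the concave game \eqref{fully_coopera2}, $z^*=r_{\mathrm{x}}+r_{\mathrm{y}}$; in the quasiconcave game \eqref{fully_compet1}, $z^*=b_{\mathrm{x}}+b_{\mathrm{y}}=b$; and in each of the three non-concave subcases \eqref{mix2} the two coordinates again sum to $b$, so $z^*=b$. It then remains to align the game regimes with the centralized threshold: the concave condition \eqref{concave} is exactly $b>r_{\mathrm{x}}+r_{\mathrm{y}}$, the quasiconcave condition forces $b=b_{\mathrm{x}}+b_{\mathrm{y}}\le r_{\mathrm{x}}+r_{\mathrm{y}}$, and the non-concave condition \eqref{non-concave} explicitly imposes $b\le r_{\mathrm{x}}+r_{\mathrm{y}}$. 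Thus whenever $b>r_{\mathrm{x}}+r_{\mathrm{y}}$ both solutions shift $z=r_{\mathrm{x}}+r_{\mathrm{y}}$, and whenever $b\le r_{\mathrm{x}}+r_{\mathrm{y}}$ both shift $z=b$; in either case $z^*=z_{\mathrm{cen}}$, so the two peaks are identical and the ratio equals $1$.

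I expect the main obstacle to be the bookkeeping that bridges Proposition \ref{game_proper}'s algebraic game-type conditions to the single scalar threshold $b$ versus $r_{\mathrm{x}}+r_{\mathrm{y}}$ that governs the centralized solution — in particular, verifying that every non-concave subcase yields $z^*=b$ and that the quasiconcave inequalities force $b\le r_{\mathrm{x}}+r_{\mathrm{y}}$. The one-dimensional reduction of the centralized model and the handling of the non-smooth kink at $z=b$ are the remaining technical points, but both become routine once the effective-penalty identity $\tfrac{\pi}{2\alpha_{\mathrm{eff}}}=r_{\mathrm{x}}+r_{\mathrm{y}}$ is established.
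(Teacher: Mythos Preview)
Your proposal is correct and follows essentially the same approach as the paper: compute the centralized optimum (the paper does this via a Lagrangian with the constraint $x+y=b$, while you do it via the cleaner one-dimensional reduction in $z$ with the effective penalty $\alpha_{\mathrm{eff}}$), then observe that the peak depends only on the total shift $x+y$ and verify case by case from Theorem~\ref{n_agent_game} that $z^*=z_{\mathrm{cen}}$ equals $r_{\mathrm{x}}+r_{\mathrm{y}}$ in the concave regime and $b$ otherwise. Your packaging via $\alpha_{\mathrm{eff}}$ and the one-sided derivatives at the kink is a bit more streamlined, but the logical skeleton is identical to the paper's.
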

\begin{proof}[Sketch of the proof]
    We prove this Theorem by first analytically writing the best solution for the centralized model shown in Proposition \ref{centralizd_proposition}, and from Theorem \ref{n_agent_game}, we have the game model equilibrium. Then take the solutions of the centralized model and the game model under the same condition into the peak shaving effectiveness definition in this Theorem. The detailed proof is provided in the appendix.
\end{proof}

This Theorem shows the game model always reaches the same peak shaving performance when compared with the centralized model. The reason is that agents in both the game model and centralized model shift demand as much as possible to avoid CP charge without burdening more by their shifting penalty. Specifically, under concave game conditions, both agents' shifting capability is constrained by their critical points, preventing them from fully balancing their demand—similar to the centralized model. Under quasiconcave and non-concave game conditions, system demand is balanced across the two periods, also aligning with the centralized model. This suggests that the game model is practically applicable, as utilities and operators can implement the game framework for CP shaving without compromising peak shaving performance. In other words, operators can still achieve their peak shaving targets while using the game model. However, although the overall peak shaving performance is the same, agents' individual demand shifting is different due to the information barrier, which reflects as cost of reaching the peak shaving performance. We then analyze the cost by showing the efficiency loss in the following section. 


\subsection{Efficiency loss analysis}
In this section, we study the efficiency loss affected by gaming agents' strategic behavior in all three game structures, and we provide the following main results.
\begin{theorem}\label{PoA_agent_equity}
    \emph{Efficiency loss with agent equity.} 
    Given the efficiency loss defined as  
    \begin{subequations}
    \begin{align}
        P  =  \frac{f_{\mathrm{x}}(x^*,y^*) + f_{\mathrm{y}}(x^*,y^*)}{f_{\mathrm{x}}(x_{\mathrm{cen}}^*,y_{\mathrm{cen}}^*) +f_{\mathrm{y}}(x_{\mathrm{cen}}^*, y_{\mathrm{cen}}^*)},\label{poa}
    \end{align}
    under the quasiconcave and non-concave game condition as described in Proposition \ref{game_proper}, the efficiency loss increases with the disparities among agents, as measured by the marginal shifting cost $\alpha_{\mathrm{x}}x^*,\alpha_{\mathrm{y}}y^*$, i.e.,
    \begin{align}
        \frac{\partial P } {\partial [(\alpha_{\mathrm{x}}x^*-\alpha_{\mathrm{y}}y^*)^2]} > 0
    \end{align}
     where $x^*,y^*$ is the game equilibrium result and $x_{\mathrm{cen}}^*,y_{\mathrm{cen}}^*$ is the centralized peak shaving results.          
    \end{subequations}
\end{theorem}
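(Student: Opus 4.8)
The plan is to reduce the ratio $P$ to a ratio of total shifting costs and then express the efficiency loss as an explicit increasing function of the squared marginal-cost disparity $(\alpha_\mathrm{x}x^* - \alpha_\mathrm{y}y^*)^2$.

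First I would record that, under both the quasiconcave and non-concave conditions of Proposition \ref{game_proper}, the game equilibrium from Theorem \ref{n_agent_game} satisfies $x^* + y^* = b$ in every case (directly for the quasiconcave NE $x^*=b_\mathrm{x},\,y^*=b_\mathrm{y}$ since $b_\mathrm{x}+b_\mathrm{y}=b$, and by inspection of each of the three non-concave NE). Hence the system is balanced, $S_1(x^*,y^*) = S_2(x^*,y^*) = S$, the indicator gives $I=1$, and summing the payoffs yields $f_\mathrm{x}(x^*,y^*)+f_\mathrm{y}(x^*,y^*) = -\pi S - \alpha_\mathrm{x}(x^*)^2 - \alpha_\mathrm{y}(y^*)^2$. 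Next, using Proposition \ref{centralizd_proposition}, I would minimize $\pi\max\{S_1,S_2\} + \alpha_\mathrm{x}x^2 + \alpha_\mathrm{y}y^2$ by first optimally splitting a fixed total shift $t=x+y$, which by a one-line Lagrange argument equalizes the marginal shifting costs, $\alpha_\mathrm{x}x_{\mathrm{cen}}^* = \alpha_\mathrm{y}y_{\mathrm{cen}}^*$, and reduces the shifting cost to $\bar\alpha\, t^2$ with $\bar\alpha = \alpha_\mathrm{x}\alpha_\mathrm{y}/(\alpha_\mathrm{x}+\alpha_\mathrm{y})$. Since $\pi/(2\bar\alpha) = r_\mathrm{x}+r_\mathrm{y}$ and both game regimes satisfy $b \le r_\mathrm{x}+r_\mathrm{y}$, the scalar objective $\pi\max\{S_{\mathrm{b},1}+t,\,S_{\mathrm{b},2}-t\} + \bar\alpha t^2$ is decreasing on $[0,b]$ and increasing beyond $b$, so the centralized optimum also balances the system at $t=b$. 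Thus both solutions achieve peak $S$ (consistent with Theorem \ref{peak_shacing}), and $P$ collapses to the cost ratio $P = C^*/C_{\mathrm{cen}}^*$, where $C = \pi S + \alpha_\mathrm{x}x^2 + \alpha_\mathrm{y}y^2$.

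The crux is a cancellation. Since both $x^*+y^*=b$ and $x_{\mathrm{cen}}^*+y_{\mathrm{cen}}^*=b$, I would write $x^* = x_{\mathrm{cen}}^* + \Delta$ and $y^* = y_{\mathrm{cen}}^* - \Delta$ and expand the game shifting cost; the linear term in $\Delta$ carries the factor $\alpha_\mathrm{x}x_{\mathrm{cen}}^* - \alpha_\mathrm{y}y_{\mathrm{cen}}^*$, which vanishes by the equal-marginal-cost property of the centralized optimum, leaving $C^* - C_{\mathrm{cen}}^* = (\alpha_\mathrm{x}+\alpha_\mathrm{y})\Delta^2$. Relating $\Delta$ back to the disparity via $\alpha_\mathrm{x}x^* - \alpha_\mathrm{y}y^* = (\alpha_\mathrm{x}x_{\mathrm{cen}}^* - \alpha_\mathrm{y}y_{\mathrm{cen}}^*) + (\alpha_\mathrm{x}+\alpha_\mathrm{y})\Delta = (\alpha_\mathrm{x}+\alpha_\mathrm{y})\Delta$ gives the clean closed form
\[
P = 1 + \frac{(\alpha_\mathrm{x}x^* - \alpha_\mathrm{y}y^*)^2}{(\alpha_\mathrm{x}+\alpha_\mathrm{y})\,C_{\mathrm{cen}}^*}.
\]

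Finally I would conclude by differentiating: the coefficient $1/[(\alpha_\mathrm{x}+\alpha_\mathrm{y})C_{\mathrm{cen}}^*]$ is strictly positive because $\alpha_\mathrm{x},\alpha_\mathrm{y}>0$ and the centralized cost $C_{\mathrm{cen}}^* = \pi S + \bar\alpha b^2 > 0$, so $\partial P/\partial[(\alpha_\mathrm{x}x^*-\alpha_\mathrm{y}y^*)^2] > 0$. I expect the main obstacle to be two-fold: establishing that the centralized optimum balances the system, which is exactly where the regime condition $b\le r_\mathrm{x}+r_\mathrm{y}$ enters and must be verified separately for the quasiconcave and non-concave cases; and stating the differentiation cleanly, since the disparity and $C_{\mathrm{cen}}^*$ are both ultimately functions of the underlying parameters, so the partial derivative should be read as holding the positive centralized baseline $(\alpha_\mathrm{x}+\alpha_\mathrm{y})C_{\mathrm{cen}}^*$ fixed while the numerator's disparity varies, which is precisely the monotone dependence that the closed form exhibits.
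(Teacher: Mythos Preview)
Your proposal is correct and follows essentially the same approach as the paper: both compute the centralized optimum (which balances the system under these regimes), use $x^*+y^*=b$ at the NE, and show that the numerator of $P$ exceeds the denominator by exactly $(\alpha_\mathrm{x}x^*-\alpha_\mathrm{y}y^*)^2$ while the denominator reduces to the constant $(\alpha_\mathrm{x}+\alpha_\mathrm{y})\pi S+\alpha_\mathrm{x}\alpha_\mathrm{y}b^2$. Your $\Delta$-parametrization, which exploits the equal-marginal-cost condition $\alpha_\mathrm{x}x_{\mathrm{cen}}^*=\alpha_\mathrm{y}y_{\mathrm{cen}}^*$ to kill the linear term, is a cleaner route to the same identity than the paper's direct substitution and expansion, and your explicit closed form $P=1+(\alpha_\mathrm{x}x^*-\alpha_\mathrm{y}y^*)^2/[(\alpha_\mathrm{x}+\alpha_\mathrm{y})C_{\mathrm{cen}}^*]$ makes the monotonicity (and the sense of the partial derivative) more transparent than the paper's final sentence.
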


\begin{proof}[Sketch of the proof] 
    From Theorem \ref{n_agent_game}, we know the NE of the game model under quasiconcave and non-concave game conditions. Combined with the centralized model solution obtained from Theorem \ref{peak_shacing}, we show the difference of nominator and denominator of the efficiency loss as defined in (\ref{poa}) can be expressed as a Euclidean distance between agents' marginal shifting cost, i.e., $(\alpha_{\mathrm{x}}x^*-\alpha_{\mathrm{y}}y^*)^2$. The detail is provided in the appendix.
 \end{proof}

Under non-concave and quasiconcave game conditions, although the overall CP charge is always the same as the centralized model due to the same peak shaving effectiveness, the shifting cost $\alpha_{\mathrm{x}}x^{*2}+\alpha_{\mathrm{y}}y^{*2}$ increases due to the information barrier. We show in this theorem that the efficiency loss increases with agents' disparities, quantified by their marginal shifting cost. This insight provides a pathway for designing mechanisms that balance agents' marginal shifting costs, which also enhances fairness while improving overall system effectiveness.
For instance, agents with both a large shifting penalty parameter and a high shifting amount bear a higher marginal shifting cost. Here, the penalty parameter reflects greater comfort loss from demand shifting, while the high shifting amount arises from significant demand differences requiring adjustment due to CP charges. In such cases, operators or utilities could regulate these agents to shift less demand or provide incentives to reduce their penalty parameters. 
As some policies already suggest subsidizing disadvantaged customers, this analysis offers guidance on how to allocate subsidies effectively among customers~\citep{PGE_policy,NY_policy}. By balancing marginal shifting costs, this approach simultaneously improves system fainness and efficiency.

 
\begin{theorem}\label{PoA_CPgame_type}
    \emph{Efficiency loss with CP shaving game type.} 
    For given $\pi, S, \alpha_\mathrm{x},\alpha_\mathrm{y}>0$, the efficiency loss defined as Theorem \ref{PoA_agent_equity}
    is highest at equilibrium for quasiconcave games, followed by non-concave games, and lowest for concave games, where it is always equal to 1, i.e.,
    \begin{align}
        P(\text{Quasiconcave game}) \geq P(\text{Non-concave game})  \geq P(\text{Concave game}) =1.\nonumber
    \end{align}
\end{theorem}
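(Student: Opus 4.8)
The plan is to reduce the three-way comparison to a single scalar, the mismatch $M := \alpha_{\mathrm{x}}x^* - \alpha_{\mathrm{y}}y^*$ between the agents' marginal shifting costs at equilibrium, and to exploit an explicit form of the centralized optimum. First I would solve the benchmark \eqref{convex_centralized_mode} in closed form. Writing the total shift $t=x+y$, the outer term $\max\{S_1,S_2\}$ is minimized by balancing ($t=b$), while for each fixed $t$ the shifting cost splits optimally as $x_{\mathrm{cen}}^* = \alpha_{\mathrm{y}}t/(\alpha_{\mathrm{x}}+\alpha_{\mathrm{y}})$, $y_{\mathrm{cen}}^* = \alpha_{\mathrm{x}}t/(\alpha_{\mathrm{x}}+\alpha_{\mathrm{y}})$, which equalizes marginal costs and yields cost $\tfrac{\alpha_{\mathrm{x}}\alpha_{\mathrm{y}}}{\alpha_{\mathrm{x}}+\alpha_{\mathrm{y}}}t^2$. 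The unconstrained stationary point is $t=r_{\mathrm{x}}+r_{\mathrm{y}}$, so the centralized optimum balances ($t=b$) exactly when $b\le r_{\mathrm{x}}+r_{\mathrm{y}}$ and otherwise stops at $t=r_{\mathrm{x}}+r_{\mathrm{y}}$, where the split becomes $(x_{\mathrm{cen}}^*,y_{\mathrm{cen}}^*)=(r_{\mathrm{x}},r_{\mathrm{y}})$.

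For the concave game $b>r_{\mathrm{x}}+r_{\mathrm{y}}$, so the centralized solution is $(r_{\mathrm{x}},r_{\mathrm{y}})$, which coincides with the concave NE \eqref{fully_coopera2}; numerator and denominator of $P$ are then identical and $P=1$. For the quasiconcave and non-concave games $b\le r_{\mathrm{x}}+r_{\mathrm{y}}$ and, by Theorem \ref{n_agent_game}, the NE balances demand ($x^*+y^*=b$), so both models incur the same coincident-peak charge $-\pi S$ (consistent with Theorem \ref{peak_shacing}); only the shifting cost differs. Expanding the game shifting cost about the centralized split and using $\alpha_{\mathrm{x}}x_{\mathrm{cen}}^* = \alpha_{\mathrm{y}}y_{\mathrm{cen}}^*$ to cancel the cross term, I would obtain the excess cost $\Delta=M^2/(\alpha_{\mathrm{x}}+\alpha_{\mathrm{y}})$ and hence
\begin{equation*}
P = 1 + \frac{M^2}{(\alpha_{\mathrm{x}}+\alpha_{\mathrm{y}})\,D}, \qquad D := \pi S + \tfrac{\alpha_{\mathrm{x}}\alpha_{\mathrm{y}}}{\alpha_{\mathrm{x}}+\alpha_{\mathrm{y}}}\,b^2 > 0 ,
\end{equation*}
which both recovers Theorem \ref{PoA_agent_equity} and shows $P\ge 1$ with equality iff $M=0$. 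This already yields $P(\text{non-concave})\ge 1 = P(\text{concave})$.

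The remaining inequality $P(\text{quasiconcave})\ge P(\text{non-concave})$ is the crux. Holding $\pi$, $S$, and the imbalance $b$ fixed, $D$ is fixed and $P$ is strictly increasing in $M^2$, so it suffices to compare the attainable $|M|$. Substituting the NE of Theorem \ref{n_agent_game} and using $\alpha_{\mathrm{x}}r_{\mathrm{x}}=\alpha_{\mathrm{y}}r_{\mathrm{y}}=\pi/2$, the quasiconcave mismatch is $M=(\alpha_{\mathrm{x}}+\alpha_{\mathrm{y}})b_{\mathrm{x}}-\alpha_{\mathrm{y}}b$, an affine function of the split $b_{\mathrm{x}}$ over the capability interval whose endpoints are exactly the points at which one agent reaches a critical bound. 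Evaluating $M$ at those endpoints reproduces precisely the constant mismatch values of the non-concave cases: e.g.\ $b_{\mathrm{x}}=r_{\mathrm{x}}$ gives $M=\tfrac{\pi}{2}-\alpha_{\mathrm{y}}(b-r_{\mathrm{x}})$, matching the case $b_{\mathrm{x}}>r_{\mathrm{x}}$ with NE $(r_{\mathrm{x}},\,b-r_{\mathrm{x}})$, and symmetrically for the $b_{\mathrm{y}}>r_{\mathrm{y}}$ and $b_{\mathrm{y}}<-r_{\mathrm{y}}$ cases at the endpoints $b_{\mathrm{x}}=b-r_{\mathrm{y}}$ and $b_{\mathrm{x}}=b+r_{\mathrm{y}}$. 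Since each non-concave mismatch is thus an endpoint of the quasiconcave interval of $M$, the quasiconcave maximum of $|M|$ is at least every non-concave $|M|$, giving $\max_{\text{quasi}}P\ge P(\text{non-concave})$ and closing the chain.

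I expect this last step to be the main obstacle. One must verify the endpoint correspondence for all three non-concave sub-cases (including the lower bound $b_{\mathrm{y}}=-r_{\mathrm{y}}$), check feasibility of the relevant endpoints under the standing constraints $0<b\le r_{\mathrm{x}}+r_{\mathrm{y}}$ and $b_{\mathrm{x}}>0$ so the boundary configuration genuinely lies in the quasiconcave region, and be explicit that ``highest efficiency loss'' means the worst case over admissible splits—so that the comparison is between the maxima over each game class rather than between arbitrary individual configurations, which is essential since some quasiconcave splits (those near $M=0$) give $P$ below a given non-concave value.
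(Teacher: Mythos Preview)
Your approach is essentially the same as the paper's: both reduce everything to the scalar mismatch $M=\alpha_{\mathrm{x}}x^*-\alpha_{\mathrm{y}}y^*$, establish $P=1$ for the concave game by identifying the NE \eqref{fully_coopera2} with the centralized optimum, and then compare $M^2$ across the quasiconcave and non-concave regimes. Your closed form $P=1+M^2/\big((\alpha_{\mathrm{x}}+\alpha_{\mathrm{y}})D\big)$ is a clean repackaging of the paper's expression \eqref{poa_quasi}--\eqref{poa_quasi_last}.

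The one genuine methodological difference is in the final step. The paper takes a non-concave configuration $(b_{\mathrm{x}},b_{\mathrm{y}})$ with $b_{\mathrm{x}}>r_{\mathrm{x}}$ and argues directly that $(\alpha_{\mathrm{x}}b_{\mathrm{x}}-\alpha_{\mathrm{y}}b_{\mathrm{y}})^2>(\alpha_{\mathrm{x}}r_{\mathrm{x}}-\alpha_{\mathrm{y}}(b-r_{\mathrm{x}}))^2$ from $b_{\mathrm{x}}>r_{\mathrm{x}}$ and $b_{\mathrm{y}}<b-r_{\mathrm{x}}$; that is, it compares the non-concave mismatch to the quantity $\alpha_{\mathrm{x}}b_{\mathrm{x}}-\alpha_{\mathrm{y}}b_{\mathrm{y}}$ evaluated at the \emph{same} parameters, treating the latter as the ``quasiconcave'' value even though that point does not lie in the quasiconcave region. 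You instead fix $b$ and view the non-concave mismatch as the boundary value of the affine map $b_{\mathrm{x}}\mapsto M(b_{\mathrm{x}})$ on the quasiconcave interval, then take the maximum. This is the more honest formulation, and your caveats---checking that the relevant endpoints $r_{\mathrm{x}}$, $b\pm r_{\mathrm{y}}$ are actually feasible quasiconcave splits, and reading the theorem as a worst-case comparison across each game class---address subtleties the paper's proof leaves implicit. One small point: the paper fixes only $\pi,S,\alpha_{\mathrm{x}},\alpha_{\mathrm{y}}$, not $b$, so you should flag that your per-$b$ argument is a refinement that then needs to be aggregated (or that you are proving the sharper fixed-$b$ statement).
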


\begin{proof}[Sketch of the proof] 
    We first show $P = 1$ is always true for the concave game due to the same solution structure between the centralized model and concave game model based on the results from Theorems \ref{n_agent_game} and \ref{peak_shacing}. Then from Theorem \ref{PoA_agent_equity}, we know the efficiency loss expression as Euclidean distance between agents' marginal shifting cost under quasiconcave and non-concave game conditions, indicating $P \geq 1$. Thus, quasiconcave games and non-concave games always cause higher (or equal) efficiency loss than concave games. We then show by fixing $\pi, S, \alpha_\mathrm{x},\alpha_\mathrm{y}>0$, the agent's baseline demand $X_{1},X_{2},Y_1,Y_2$ can vary to cause different game structures, so as to different solution structure $x^*,y^*$. Combined with the fact that quasiconcave and non-concave games always balance system demand, i.e., $x^* +y^* = b$ from Theorem \ref{n_agent_game}, we prove the quasiconcave game solution shows a higher difference between $x^*, y^*$, indicating higher efficiency loss than a non-concave game. The detailed proof is provided in the appendix. 
 \end{proof}

This theorem demonstrates the impact of the CP shaving game type on the efficiency loss, where the game type connects to the agents' flexibility (shifting capability). We first fix the system conditions $\pi, S$ while allowing the agents' conditions to vary. From Theorem \ref{n_agent_game}, we know that agents' flexibility is shaped by the game structure, which in turn is influenced by the shifting penalty parameters $\alpha_\mathrm{x},\alpha_\mathrm{y}$ and baseline demand conditions $X_{1},X_{2},Y_1,Y_2$. To control the influence these parameters, we also fix $\alpha_\mathrm{x},\alpha_\mathrm{y}$, making the efficiency loss dependent only on $x^*,y^*$, which reflects the difference in the game structure. As the balance point (demand difference) $b_{\mathrm{x}}=(X_{2}-X_{1})/2$ increases, agent x's flexibility decreases, and the game type transitions from a quasiconcave game to a non-concave game, and finally to a concave game. This follows from the conditions 
$\{0\leq b_{\mathrm{x}} \leq r_{\mathrm{x}}\} 
\cap \{-r_{\mathrm{y}} \leq b_{\mathrm{y}} \leq r_{\mathrm{y}}\}$ 
to $\{0\leq b\leq r_{\mathrm{x}} + r_{\mathrm{y}} \} 
\cap \{b_\mathrm{x} > r_{\mathrm{x}} \cup b_\mathrm{y} < -r_{\mathrm{y}} \cup b_\mathrm{y} > r_{\mathrm{y}}\}$ 
to $\{r_{\mathrm{x}}+r_{\mathrm{y}}<b\}$ as stated in Proposition \ref{game_proper}.
Thus, the efficiency loss increases with agents' flexibility and is also reflected in the game type change. This highlights the role of inflexible agents in overall system performance, despite utilities and operators generally favoring flexible agents for their ability to enhance grid stability, integrate renewable energy, and respond to pricing signals~\citep{DR}. 

It is also evident that fixing $X_{1},X_{2},Y_1,Y_2$ while allowing $\alpha_{\mathrm{x}},\alpha_{\mathrm{y}}$ to vary yields the same results. Note that this theorem also shows that under the concave game condition, the system's performance is always equivalent to that of the centralized model, indicating no harm to utility companies and agents when applying the concave CP shaving game.

\section{Extension to Multi-agent Games}\label{multi_section}
In this section, we consider a generalization of our model from the two-agent two-period CP shaving game to the multi-agent two-period CP shaving game, and we analyze the NE, stability/convergence, as well as the customers' strategic behavior. 

\subsection{Multi-agent CP shaving game}
We denote $X_{i,1},X_{i,2}$ as agent $i$'s baseline demand in periods 1 and 2, respectively,  and extend the system baseline demand $S_{\mathrm{b},1},S_{\mathrm{b},2}$ from the sum of two agents' demands to the sum over $N$ agents at each period. Without loss of generality, We maintain the assumption that the baseline demand in period 2 is greater than in period 1, i.e., $S_{\mathrm{b},1} < S_{\mathrm{b},2}$. We use $\alpha_i$ to denote agent $i$'s penalty parameter, and $x_i$ as its demand-shifting strategy. Note that we do not impose any assumptions on individual agents' baseline demands, ensuring that all $x_i$ remain unrestricted. We continue to use $S_1,S_2$ to represent the total system demand, incorporating all agents' demand shifts. The multi-agent game model is then formally defined as $G'=(N,\mathcal{X},U)$, where
\begin{itemize}
    \item $N=\{1,2,...,|N|\}$ with index $i$ is the agent set, where $|\cdot|$ represents the number of elements in a set (otherwise, it denotes the absolute value). Specifically, we use $-i$ to refer to all agents except agent $i$.
    \item $\mathcal{X}=\times_{i\in N}\mathcal{X}_{i}$ is the strategy set formed by the product topology of each agent's individual strategy set $\mathcal{X}_{i}$. 
    \item $U =\{f_i|i\in N\}$ is the payoff function set, where $f_i: \mathcal{X} \rightarrow \mathbb{R}$.
\end{itemize}  
Then agent $i$'s payoff function as described in (\ref{agenta}) can be generalized as follows:
\begin{subequations}\label{multi_agent_game}
    \begin{align}
         \max_{x_i} f_i (x_i,x_{-i}) &= -\pi(X_{i,1}+x_i) I(x_i,x_{-i})-\pi(X_{i,2}-x_i) (1-I(x_i,x_{-i}))
        - \alpha_{i} x_i^2, \\
        I(x_i,x_{-i}) &= \begin{cases} 
      1 & S_1(x_i,x_{-i}) - S_2(x_i,x_{-i})\geq 0 \\
      0 & S_1(x_i,x_{-i}) - S_2(x_i,x_{-i}) < 0 
        \end{cases},\label{step_function} \\
        S_1(x_i,x_{-i}) &= \sum_{i\in N}(X_{i,1}+x_i) = S_{\mathrm{b},1}+\sum_{i\in N}x_i, \\
        S_2(x_i,x_{-i}) &= \sum_{i\in N}(X_{i,2}-x_i) = S_{\mathrm{b},2}-\sum_{i\in N}x_i ,
    \end{align}
\end{subequations}

Accordingly, the system average demand $S$ and the balance point $b$ as defined in Definition \ref{Special_points} are extended from the two-agent case to the $N$ agent case. Thus, we can conclude that the multi-agent CP shaving game $G'$ retains the properties described in Proposition \ref{game_proper}, meaning that the CP shaving game $G'$ is 
\begin{subequations}\label{multiagent_condition}
\begin{enumerate}
    \item Concave if 
    \begin{align}
        b = \sum_{i\in N}b_i> \sum_{i\in N}r_{i} > 0;\label{concaveG'}
    \end{align}
    \item Quasiconcave if
    \begin{align}
        -r_{i} \leq b_{i} \leq r_{i},\forall i\in N;\label{quasiconcaveG'}
    \end{align}
    \item Non-concave if 
    \begin{align}
        \{0 \leq b \leq \sum_{i\in N}r_{i}\} \cap \{\exists i \in N, b_{i}<-r_{i} \cup \exists i \in N, b_{i}>r_{i} \}.\label{non_concaveG'}
    \end{align}
\end{enumerate}
\end{subequations}
The concave and quasiconcave conditions are intuitive and can be derived following the same process outlined in Proposition \ref{game_proper}. The non-concave condition, on the other hand, is determined by the complementary set of the concave and quasiconcave conditions with respect to $\mathbb{R}$.

In terms of NE solutions, the definition of NE should be slightly modified from Definition \ref{Nash_Equ_defi} to account for all agents. Specifically, $(x^*,x_{-i}^*)$ is a pure strategy NE of the game $G'$ if and only if 
\begin{align}
    f_i(x_i^*,x_{-i}^*) \geq f_{i}(x_i,x_{-i}^*), i\in N, x_i\in \mathcal{X}_i,x_{-i}\in \mathcal{X}_{-i}.\label{NE_multi_agent}
\end{align}

Unlike the two-agent setting, analyzing each agent's capability case by case becomes infeasible in the multi-agent setting. Therefore, we analyze the NE separately for each game type. First, for the concave game, we establish the existence of a unique NE following~\citep{rosen1965}. The NE is reached when all agents shift demand up to their critical points, as defined in Definition \ref{Special_points}:
\begin{align}
    x_i^* &= r_{i}, i\in N.\label{multi_concave_solution}
\end{align}
Noted that when all agents adopt $x_i^*=r_{i}$, the CP period remains period 2 and $S_{\mathrm{b},1} < S_{\mathrm{b},2}$. In this case, all agents shift demand from period 2 to period 1 until reaching their respective critical points.

The NE of the quasiconcave game is less straightforward to derive. To analyze it rigorously, we introduce the following proposition.
\begin{proposition}\emph{Existence and uniqueness of NE in multi-agent quasiconcave CP shaving game.} \label{uniques_exist_theorem}
    The quasiconcave CP shaving game $G'$ as described in (\ref{multi_agent_game}) and satisfy (\ref{quasiconcaveG'}) has a unique pure-strategy NE $(x_i^*,x_{-i}^*)$ as defined in (\ref{NE_multi_agent}), where
        \begin{align}
         &x_i^* = b_{i}, i\in N, 
        S_1(x_i^*,x_{-i}^*) = S_{\mathrm{b},1}+\sum_{i\in N}x_i^* = S_{\mathrm{b},2} - \sum_{i\in N}x_i^* = S_2(x_i^*,x_{-i}^*).\label{unique_NE_condition}
        \end{align}
\end{proposition}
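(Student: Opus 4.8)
The plan is to reduce each agent's problem to a one-dimensional best-response calculation and then classify all equilibria by the aggregate shift. First I would fix $x_{-i}$, write $T_{-i}=\sum_{j\ne i}x_j$, and observe that period 1 is the CP period if and only if $x_i \ge \theta_i := b - T_{-i}$. Thus agent $i$'s payoff is the concatenation of two branches: for $x_i \ge \theta_i$ it equals $-\pi(X_{i,1}+x_i)-\alpha_i x_i^2$ (a downward parabola with vertex at $-r_i$), and for $x_i < \theta_i$ it equals $-\pi(X_{i,2}-x_i)-\alpha_i x_i^2$ (vertex at $r_i$). The two branches meet at $\theta_i$ with jump $2\pi(b_i-\theta_i)$, which is nonnegative exactly when $\theta_i \le b_i$; this is where the own-balance point $b_i$ enters, and where the hypothesis $-r_i \le b_i \le r_i$ starts to matter.

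Reading off the best response from this structure gives four regimes: if $\theta_i \le -r_i$ the interior vertex $-r_i$ is feasible in the period-1 branch and is the unique maximizer; if $-r_i \le \theta_i \le b_i$ the period-1 branch is decreasing and the jump is nonnegative, so $x_i=\theta_i$ is the unique maximizer; if $b_i < \theta_i \le r_i$ the period-2 branch is still increasing up to $\theta_i$ with a strictly positive limiting value $f_{i,2}(\theta_i)$ that exceeds the attained value $f_{i,1}(\theta_i)$, so the supremum is not attained and no best response exists; and if $\theta_i > r_i$ the interior vertex $r_i$ of the period-2 branch is optimal. This case analysis is the technical core, and the main obstacle is precisely the discontinuity: I must argue attainment carefully, because in the ``non-capable'' window $(b_i,r_i]$ the best response is empty, and the quasiconcavity assumption $-r_i \le b_i \le r_i$ is exactly what will keep every candidate equilibrium out of that window.

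Existence then follows by direct verification at $x_i^*=b_i$: here $T_{-i}^*=\sum_{j\ne i}b_j = b - b_i$, so $\theta_i=b_i$, which falls in the regime $-r_i \le \theta_i \le b_i$ where $x_i=\theta_i=b_i$ is the unique best response; simultaneously $\sum_i x_i^*=\sum_i b_i = b$ yields $S_1=S_2=S$, giving the balanced-demand claim in \eqref{unique_NE_condition}. For uniqueness I would classify an arbitrary NE by its total shift $T^*=\sum_i x_i^*$, using the identity $x_i^*-\theta_i = T^*-b$ (the same for every agent). If $T^*>b$, every agent is strictly in the period-1 region, which by the best-response classification forces $x_i^*=-r_i$ and hence $T^*=-\sum_i r_i<0<b$, a contradiction; if $T^*<b$, every agent is strictly in the period-2 region, forcing $x_i^*=r_i$ and $T^*=\sum_i r_i \ge \sum_i b_i = b$ (using $b_i\le r_i$), again contradicting $T^*<b$. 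Therefore $T^*=b$, so each $\theta_i=x_i^*$ and being a best response requires $x_i^*\le b_i$ for every $i$; combined with $\sum_i x_i^*=b=\sum_i b_i$ this forces $x_i^*=b_i$ for all $i$, establishing uniqueness.
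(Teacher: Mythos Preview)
Your argument is correct and takes a genuinely different route from the paper. The paper proves existence via Reny's theorem for discontinuous games (verifying that $\sum_i f_i$ is upper semicontinuous and that the game is payoff secure), and proves uniqueness by invoking Rosen's diagonally-strictly-concave condition within each CP region and then applying an iterative partitioning argument: split $N$ into two disjoint subsets, treat each as an aggregated agent, invoke the two-agent Lemma~\ref{lemma_NE2agent} to conclude the subset totals balance, and recurse down to singletons. Your approach instead computes the one-dimensional best-response correspondence explicitly, verifies $x_i^*=b_i$ is a fixed point by direct substitution, and handles uniqueness by the clean observation that $x_i^*-\theta_i=T^*-b$ is the same for every agent, so the aggregate $T^*$ determines which regime every agent simultaneously occupies. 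This is more elementary and self-contained---no Reny, no Rosen, no recursion---and it exposes exactly where the hypothesis $-r_i\le b_i\le r_i$ is used (to ensure the ``no best response'' window $(b_i,r_i]$ is avoided at equilibrium, and to get $\sum_i r_i\ge b$ in the $T^*<b$ branch). The paper's machinery would transfer more readily to non-quadratic penalties, but for the stated model your direct argument is tighter.
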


\begin{proof}[Sketch of the proof]
    To prove this proposition, we establish two supporting lemmas: one for the existence of NE (Lemma \ref{existence}) and another for its uniqueness (Lemma \ref{uniqueness}). The proof of existence follows the framework of~\citep{reny1999}, which ensures existence by demonstrating that the sum of agents' payoff functions is upper semi-continuous in $x_i\in \mathcal{X}$ and game $G'$ is payoff security, which we formally define in the proof. For uniqueness, we build upon the results of~\citep{rosen1965}. Specifically, we first establish the uniqueness of NE in both periods 1 and 2. Then, leveraging the two-agent NE characterization from Lemma \ref{lemma_NE2agent}, we extend these results to the multi-agent setting by iteratively partitioning the agents into two groups and applying the two-agent results. The full proof is provided in the appendix.
\end{proof}

This proposition establishes that the quasiconcave CP shaving game $G'$ has a unique NE, which always occurs at the agents' balance points $b_i$. Notably, under both concave and quasiconcave conditions, the NE results align with those in the two-agent setting, as the extension from two agents to multiple agents follows a straightforward generalization. This suggests an approach where the NE results of the two-agent non-concave CP shaving game $G$ serve as a foundation for analyzing the $N$-agent non-concave CP shaving game $G'$, whose NE is evidently more complex. Fundamentally, Theorem \ref{n_agent_game} highlights a key insight: system demand is always balanced across the two periods, i.e., $x^* + y^* = b$. In this process, the less flexible agent—determined by shifting penalty parameters and baseline demand—reaches its critical point first, while the more flexible agent continues adjusting its demand to balance system demand over the two periods.

We define agents whose baseline peak demand coincides with the system baseline CP period as \emph{CP-period agents}, while those whose baseline peak demand does not align with the system baseline CP period are referred to as \emph{non-CP-period agents}. The set containing all CP-period agents is denoted as $N_{\mathrm{cp}}$, and the set of all non-CP-period agents as $N_{\mathrm{ncp}}$, such that $N_{\mathrm{cp}}\cup N_{\mathrm{ncp}} = N$ and $N_{\mathrm{cp}}\cap N_{\mathrm{ncp}} =\emptyset$. We now present the following proposition to analyze the equilibrium conditions of the non-concave multi-agent CP shaving game $G'$.


\begin{proposition}\emph{NE in the non-concave multi-agent CP shaving game.}\label{N_agent_extropolate}
        Given non-concave multi-agent CP shaving game $G'$, under the condition of 
        \begin{subequations}
        \begin{align}
        \{\sum_{i\in N_\mathrm{cp}} \min\{r_{i},b_{i}\}  < \sum_{i\in N_\mathrm{cp}}b_{i}\} \cap \{-\sum_{i\in N}r_i \leq b 
        \leq \sum_{i\in N}r_i\};,\label{multi_agent_non_concave_condition}
        \end{align}   
        the hybrid NE solution, where for each CP-period agent and for the whole non-CP-period agent set, is given by
        \begin{align}
        x_i^* = \min\{r_i,b_i\},i\in N_{\mathrm{cp}},\sum_{i\in N_{\mathrm{ncp}}}x_i^* = b-\sum_{i\in N_{\mathrm{cp}}}\min\{r_i,b_i\}.\label{multi_agent_non_concave1}
         \end{align}                 
        Otherwise, the hybrid NE solution, where for each non-CP-period agent and the overall CP-period agent set, is given by
        \begin{align}
        x_i^* = \max\{-r_i,b_i\},i\in N_{\mathrm{ncp}}, \sum_{i\in N_{\mathrm{cp}}}x_i^* = b-\sum_{i\in N_{\mathrm{ncp}}}\max\{-r_i,b_i\},\label{multi_agent_non_concave2}
         \end{align}  
        and the corresponding condition is
        \begin{align}
            \{\sum_{i\in N_\mathrm{ncp}} \max\{-r_i,b_i\}  > \sum_{i\in N_\mathrm{ncp}}b_i\} \cap  \{-\sum_{i\in N}r_i \leq b 
        \leq \sum_{i\in N}r_i \};
        \end{align}
        \end{subequations}
\end{proposition}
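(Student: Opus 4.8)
The plan is to lift the two-agent non-concave characterization of Theorem~\ref{n_agent_game} to $|N|$ agents by splitting the population into the CP-period agents $N_{\mathrm{cp}}$ and the non-CP-period agents $N_{\mathrm{ncp}}$ and playing one group against the other. The first step is to show that at any NE the system is balanced, $\sum_{i\in N}x_i^*=b$, so that $S_1=S_2=S$ and, by the tie-breaking convention in~(\ref{multi_agent_game}), period~1 is the CP period. This follows from the non-concavity bound $-\sum_i r_i\le b\le\sum_i r_i$: if the heavy period were strictly larger, every agent would have a strictly profitable shift toward its period-2 optimum $r_i$, and $b\le\sum_i r_i$ guarantees those shifts can drive the system all the way to balance, exactly as in the two-agent argument behind Theorem~\ref{n_agent_game}.

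With balance in hand I would compute each agent's best response at the switching surface. For a CP-period agent ($b_i>0$) the branches $f_{i,1},f_{i,2}$ cross at $b_i$ and the jump $f_{i,1}(x)-f_{i,2}(x)=2\pi(b_i-x)$ shows its best response is $\min\{r_i,b_i\}$: it equals $b_i$ in the capable case (continuous crossover) and is pinned at $r_i$ in the upper-non-capable case, where $f_{i,1}(r_i)>f_{i,2}(r_i)$ makes the optimum genuinely attained on the period~1 side; a non-CP-period agent is mirror-symmetric with $\max\{-r_i,b_i\}$. Summing these single-agent optima yields total shift $b-D_{\mathrm{cp}}+D_{\mathrm{ncp}}$, where $D_{\mathrm{cp}}=\sum_{i\in N_{\mathrm{cp}}}(b_i-\min\{r_i,b_i\})\ge0$ is the deficit from upper-non-capable CP agents and $D_{\mathrm{ncp}}=\sum_{i\in N_{\mathrm{ncp}}}(\max\{-r_i,b_i\}-b_i)\ge0$ the excess from lower-non-capable non-CP agents. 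Condition~(\ref{multi_agent_non_concave_condition}) is exactly $D_{\mathrm{cp}}>0$, while the complementary ``otherwise'' branch is $D_{\mathrm{cp}}=0<D_{\mathrm{ncp}}$, which is forced under non-concavity because some agent must be non-capable.

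To assemble the equilibrium, in Case~1 ($D_{\mathrm{cp}}>0$) the CP group's single-agent optima fall short of balancing the system, so each CP agent is frozen at $\min\{r_i,b_i\}$ while the non-CP group collectively supplies the residual $b-\sum_{i\in N_{\mathrm{cp}}}\min\{r_i,b_i\}$; Case~2 is the mirror image with the groups interchanged and $\max\{-r_i,b_i\}$ replacing $\min\{r_i,b_i\}$. I would make this rigorous as in the uniqueness half of Proposition~\ref{uniques_exist_theorem}, iteratively partitioning the agents into two blocks and invoking the two-agent best-response Lemma~\ref{lemma_NE2agent}, which introduces the caps $\min\{r_i,b_i\}$ and $\max\{-r_i,b_i\}$ one agent at a time. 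Because the separable quadratic penalties make the balancing group's contribution to both the system balance and the common CP charge depend only on its aggregate shift, any split of the residual across that group is again an equilibrium (in the generalized sense of the next paragraph), which is exactly why the proposition pins it only through $\sum_i x_i^*$.

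The main obstacle is the payoff discontinuity on the balance surface: a balancing agent pushed past its own point $b_i$ has no \emph{attained} best response, since its supremum is approached from the period-2 side while the convention assigns period~1 at the tie, so the profile is not a Nash equilibrium in the naive sense. I would dispose of this exactly as in the existence half of Proposition~\ref{uniques_exist_theorem}: $G'$ is payoff secure and $\sum_i f_i$ is upper semi-continuous, so Reny's theorem~(\citep{reny1999}) furnishes the equilibrium selected at the switching surface, equivalently the balanced profile meeting the total-payoff condition $\sum_i f_{i,1}(x_i^*)=\sum_i f_{i,2}(x_i^*)$ of~(\ref{1NE}). The delicate step is checking that this semi-continuous selection is simultaneously consistent with every \emph{binding} agent's attained best response and with the prescribed aggregate of the slack group in both cases; once that is verified, disaggregating from the two blocks back to the individual $x_i^*$ closes the argument.
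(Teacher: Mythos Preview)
Your plan is essentially the same as the paper's: both arguments reduce the $|N|$-agent problem to the two-agent non-concave case of Theorem~\ref{n_agent_game} via the CP/non-CP split, show the system balances at the surface $\sum_i x_i^*=b$, and then argue that the constrained group sits at its individual caps $\min\{r_i,b_i\}$ (or $\max\{-r_i,b_i\}$) while the other group absorbs the residual in aggregate.

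The organizational device differs. The paper introduces two \emph{virtual agents}---aggregated CP and non-CP players with baseline $X_{\mathrm{cp},t}=\sum_{i\in N_{\mathrm{cp}}}X_{i,t}$ and critical point $r_{\mathrm{cp}}=\sum_{i\in N_{\mathrm{cp}}}r_i$---applies Theorem~\ref{n_agent_game} to them directly, and then argues that the actual agent set inherits the same best-response rationale but with a strictly smaller attainable shift $\sum_{i\in N_{\mathrm{cp}}}\min\{r_i,b_i\}<r_{\mathrm{cp}}$, so each CP agent is pinned at its own cap. You instead work bottom-up from individual best responses at the balance surface, package the shortfalls into the deficits $D_{\mathrm{cp}},D_{\mathrm{ncp}}$, and invoke the iterative-partition trick from Lemma~\ref{uniqueness} together with Reny's existence theorem to handle the discontinuity. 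The paper's virtual-agent device is slightly cleaner for the aggregate statement and avoids re-running the partition argument; your route is more explicit about why the balanced profile is an equilibrium despite the non-attained suprema on the period-2 side, a point the paper's own proof passes over. Both reach the same hybrid NE, and neither resolves the overlap case where $D_{\mathrm{cp}}>0$ and $D_{\mathrm{ncp}}>0$ simultaneously.
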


\begin{proof}[Sketch of the proof]
    The proof of this proposition is based on introducing two virtual agents, each representing the aggregated baseline demand and shifting penalty conditions of the CP-period and non-CP-period agent sets. While these virtual agents share the same baseline conditions as their respective agent sets, their strategy structure differs, as the agent sets' strategies result from the aggregation of individual agents' decisions. Using Theorem \ref{n_agent_game}, we derive the best strategies for the virtual agents. From the best response analysis, we establish that both the agent sets and the virtual agents exhibit the same strategic behavior due to their identical baseline conditions. Specifically, both will shift demand away from or toward the CP period to minimize costs, with cost reduction directly related to shifting amounts. Finally, by linking the virtual agents' strategies to those of the agent sets, we derive the equilibrium conditions for the CP-period and non-CP-period agent sets. The detailed proof is provided in the appendix.
\end{proof}

This proposition establishes that the equilibrium solution can be analytically determined for one group of agents, either the CP-period agent set $N_{\mathrm{cp}}$ or the non-CP-period agent set $N_{\mathrm{ncp}}$. However, for the remaining set, only the aggregate performance can be determined, rather than individual agent strategies. As a result, we refer to this as a hybrid NE, where one subset of agents follows a strict NE while the other retains internal flexibility, adjusting their individual strategies while maintaining the overall balance at the set level. In practice, more flexible agents, those with lower shifting penalties and greater baseline demand differences, will contribute more to the overall shifting performance of their set. A key insight from this result is that system demand remains balanced under non-concave game conditions, consistent with the findings from the two-agent setting. This conclusion provides a foundation for evaluating system efficiency losses and peak shaving effectiveness.

We then analyze the stability of the equilibrium point in the multi-agent CP shaving game $G'$ by building on the two-agent game results presented in Theorem \ref{dynamics_theorem}. Extending the gradient vector to incorporate all agents' shifting decision variables $x_i,i\in N$, we still denote the system dynamics as (\ref{switched_system}). we continue to represent the system dynamics using (\ref{switched_system}). Following the same proof structure, we derive multiple Lyapunov functions for periods 1 and 2, demonstrating that the globally uniform asymptotically stability also holds for the multi-agent CP shaving game $G'$. Under concave and quasiconcave conditions, the unique NE described in (\ref{multi_concave_solution}) and Proposition \ref{uniques_exist_theorem} serve as the stable equilibrium point in the system dynamics. In the non-concave case, the system dynamics in periods 1 and 2 converge to a switching surface where the payoffs of both periods are equal, ensuring system demand remains balanced, as characterized by the hybrid NE in Proposition \ref{N_agent_extropolate}.
The strategy set $\mathcal{X}_{\mathrm{s}}$ is then extended to incorporate all agents' decisions as follows:
\begin{align}
    \mathcal{X}_{\mathrm{s}} = \{x_i|\sum_{i\in N}\alpha_{i} x_i^2 + \pi(S_{\mathrm{b},1}+\sum_{i\in N}x_i) >0 
    \cup \sum_{i\in N}\alpha_{i} x_i^2 
    + \pi(S_{\mathrm{b},2} - \sum_{i\in N}x_i) >0 \}. \label{multi_agent_strategy_set}
\end{align}

With this stability property, we further show that a finite learning rate can be selected such that a gradient-based algorithm using (\ref{gradient_algorithm}) as an updating rule converges to the equilibrium point in the multi-agent CP shaving game $G'$. The key distinction in the multi-agent setting is that convergence shifts from each agent’s individual payoff across the two periods to the aggregated payoff of all CP-period agents and non-CP-period agents. This also confirms that under non-concave conditions, agents can internally adjust their demand-shifting strategies while maintaining overall system demand balance across the two periods.
\begin{remark}\emph{Determination of equilibrium point in the multi-agent CP shaving game.}\label{global_convergence_N_agent}
    A finite learning rate vector $\tau_{\mathrm{cp},h}\in \mathbb{R}^{N_{\mathrm{cp}}}$ for CP-period agents in the finite difference approximation to the system dynamics in the multi-agent CP shaving game can be selected such that when $\sum_{i\in N_{\mathrm{cp}}}X_{i,1} +x_i \geq \sum_{i\in N_{\mathrm{cp}}} X_{i,2} - x_i$,
    \begin{subequations}
    \begin{align}
        \sum_{i\in N_{\mathrm{cp}}}-f_{i,1}(x_{i,\overline{h}}) < \sum_{i\in N_{\mathrm{cp}}} -f_{i,1}(x_{i,\underline{h}}),
        \sum_{i\in N_{\mathrm{cp}}} -f_{i,2}(x_{i,\overline{h}}) > \sum_{i\in N_{\mathrm{cp}}} -f_{i,2}(x_{i,\underline{h}}). 
    \end{align}
     When $\sum_{i\in N_{\mathrm{cp}}}X_{i,1} +x_i < \sum_{i\in N_{\mathrm{cp}}} X_{i,2} - x_i$,   
        \begin{align}
        \sum_{i\in N_{\mathrm{cp}}}-f_{i,1}(x_{i,\overline{h}}) > \sum_{i\in N_{\mathrm{cp}}} -f_{i,1}(x_{i,\underline{h}}),
        \sum_{i\in N_{\mathrm{cp}}} -f_{i,2}(x_{i,\overline{h}}) < \sum_{i\in N_{\mathrm{cp}}} -f_{i,2}(x_{i,\underline{h}}). 
    \end{align}          
    \end{subequations}
    The same also holds for the learning rate vector $\tau_{\mathrm{ncp},h}\in \mathbb{R}^{N_\mathrm{ncp}}$ for non-CP-period agents.
\end{remark}

To conclude, the game framework remains effective in the multi-agent, two-period setting, as the (hybrid) NE continues to exist and exhibits globally uniform asymptotically stable within the strategy set defined in (\ref{multi_agent_strategy_set}). Additionally, a gradient-based algorithm, using an updating rule that approximates the system dynamics of the multi-agent CP shaving game, can reliably compute the equilibrium point. Building on this foundation, we next analyze the impact of customers' strategic behavior in the multi-agent setting.

\subsection{Customers strategic behavior in multi-agent CP shaving game}
In this section, we analyze the peak shaving effectiveness and efficiency loss in the multi-agent setting. The centralized model is formulated as $(x_{i,\mathrm{cen}}^*, x_{-i,\mathrm{cen}}^*)\in \arg\max_{x_i,x_{-i}} \sum_{i\in N} f_i(x_i,x_{-i})$, representing the optimal solution that maximizes the total agent payoffs. Combined with Proposition \ref{centralizd_proposition}, under the concave game condition, the solution to the concave centralized model can be directly obtained using the first-order optimality conditions.

The peak shaving effectiveness, as defined in Theorem \ref{peak_shacing}, extends from the two-agent setting to the multi-agent setting by replacing agent y to $-i$, denoting all agents except $i$. It remains equal to 1 at the equilibrium of the game model because system demand is always balanced under all game conditions, just as in the centralized model.
The efficiency loss in the multi-agent setting is given by $P_N = \sum_{i\in N}f_{i}(x_i^*,x_{-i}^*)/\sum_{i\in N}f_{i}(x_{\mathrm{cen},i}^*,x_{\mathrm{cen},-i}^*)$. Since the centralized model is concave, derived from Proposition \ref{centralizd_proposition}, and $x_{\mathrm{cen},i}^*,x_{\mathrm{cen},-i}^*$ is its unique minimizer, we always have $P\geq 1$. Specifically, under the concave game condition, the game model is equivalent to the centralized model, ensuring that $P=1$. 

As the number of agents increases, the efficiency loss under quasiconcave and non-concave games is influenced by changes in the game structure. This leads to the following remark.
\begin{remark}\label{PoA_multi_agent}
    \emph{Game type with agent numbers.}
    As $N$ increases, the game structure will more likely be a non-concave game. 
    \end{remark}
It is intuitive that the non-concave game type becomes more prevalent as the number of agents increases because the probability that all agents are either capable or non-capable decreases exponentially. As analyzed in Theorem \ref{PoA_CPgame_type}, agents' flexibility influences the game structure and, consequently, the efficiency loss. This implies that smaller systems are more sensitive to variations in agent flexibility, whereas larger systems can mitigate the impact of highly flexible agents on overall efficiency. 

This insight provides guidance for operators and utilities implementing game-based CP shaving programs. Specifically, selecting more flexible customers to form a larger CP shaving system while grouping a few inflexible agents into a smaller CP shaving system can enhance efficiency. These two systems would operate independently, each determining its own CP periods based on the participating customers. This targeted structuring can improve overall system performance and mitigate efficiency loss.

\section{Numerical Example}
In this section, we use numerical simulations to show the CP shaving game solution, and we show that the numerical test aligns with our theoretical analysis. We set the CP charge price to $\pi=1$ and set three two-agent two-period cases and one multi-agent two-period case to show the CP shaving game performance. In the multi-agent setting, we also test the influence of the number of agents on efficiency loss.

\subsection{Two-agent two-period CP shaving game}
We set the baseline demand as $X_{1} = 3, X_{2} = 10,Y_{1} = 6,Y_{2} = 3$, then change the shifting penalty parameters to change the agent capability.

(1) Set $\alpha_{\mathrm{x}} = 0.1, \alpha_{\mathrm{y}}=0.2$, then all agents' are capable according to Definition \ref{capably} and the game is quasiconcave. We show the solution path in Fig. \ref{quasiconcave_case}, where the system alternates between periods 1 and 2. With each transition back to the previous period, the cost either decreases or increases in a way that reduces the cost difference for each agent between the two periods. Finally, the cost distance reduces to zero for each agent as they balance their demand over the two periods, and the system converges to the equilibrium point, corresponding to (\ref{fully_compet1}). Compared with the solution from the centralized model, we observe that there are huge shifting changes and an increase in the system's overall cost due to anarchy, reflected as the efficiency loss is 1.125, indicating anarchy increases the system cost by 12.5\% compared to the centralized method. The peak shaving ratio is the same because they all balance the system demand.
\begin{figure}
    \centerline{\includegraphics[width=0.8\textwidth]{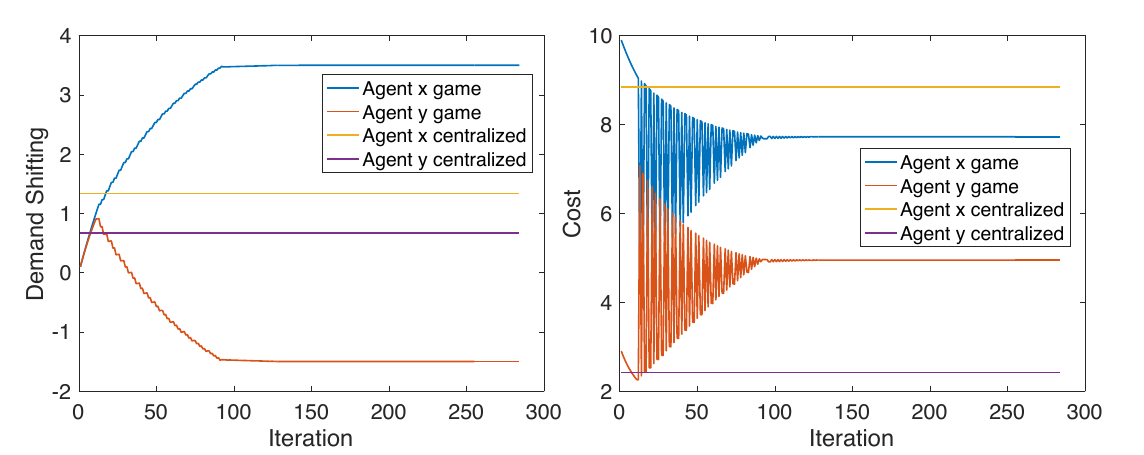}}
    \caption{Convergence performance of two-agent quasiconcave CP shaving game.}
    \label{quasiconcave_case}
\end{figure}

(2) Set $\alpha_{\mathrm{x}} = 0.1, \alpha_{\mathrm{y}}=0.5$, then agent y is non-capable, and agent x is capable according to Definition \ref{capably}, and the game is non-concave. We show the solution trajectory in Fig. \ref{non_concave_case}, which is similar to the quasiconcave condition. However, each agent's cost difference over the two periods does not reach zero upon convergence, as neither agent fully self-balances its demand. Instead, the more flexible agent (agent x) achieves greater cost reduction by lowering its demand during the CP period. The results align with (\ref{mix2}). Compared to the quasiconcave game, we observe that the demand shifting resulting from the game model more closely approximates the centralized solution, and the increase in system cost due to anarchy is reduced, with an efficiency loss of 1.0941.
\begin{figure}
    \centerline{\includegraphics[width=0.8\textwidth]{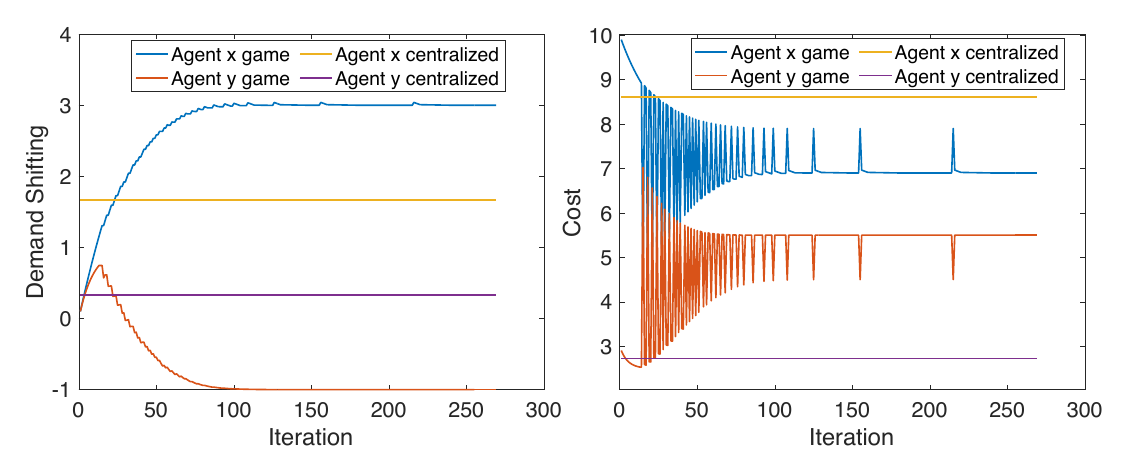}}
    \caption{Convergence of two-agent non-concave CP shaving game.}
    \label{non_concave_case}
\end{figure}

(3) Set $\alpha_{\mathrm{x}} = 0.6, \alpha_{\mathrm{y}}=0.5$, then both agents can't change the CP period together, and the game is concave, where only period 2 is active during the entire solution trajectory. Thus, as we show in Fig. \ref{concave_case}, the trajectory doesn't jump and gradually reduces to the critical point, which corresponds to (\ref{fully_coopera2}). Obviously, the demand shifting converges to the same point as the centralized model, and the efficiency loss is 1, indicating these two models are exactly equivalent.
\begin{figure}
    \centerline{\includegraphics[width=0.8\textwidth]{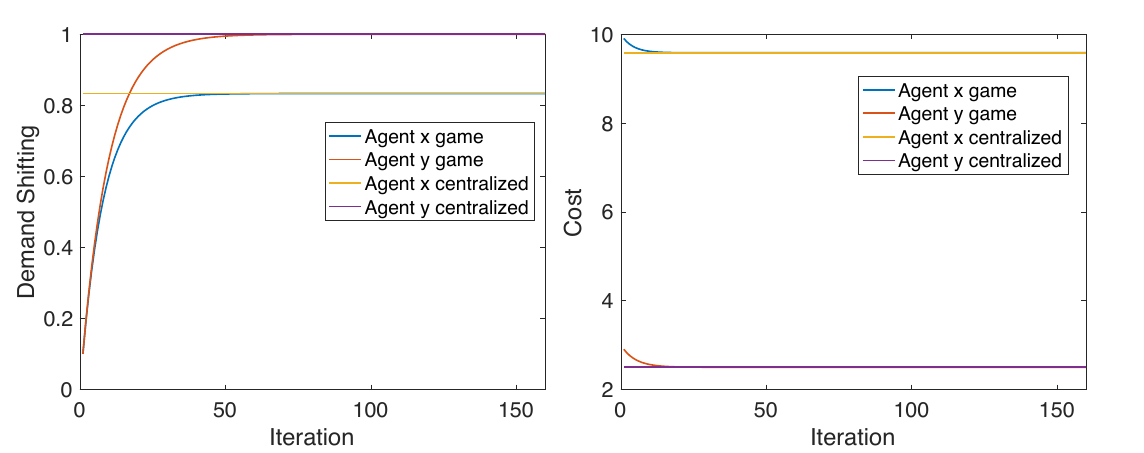}}
    \caption{Convergence performance of two-agent concave CP shaving game.}
    \label{concave_case}
\end{figure}

\subsection{Multiple agents and two-period CP shaving game}
We create a six-agent system with baseline demand and shifting penalty parameters, as Table \ref{talbe11} shows. We first notice the game is non-concave because all agents can change CP period together, but not all agents are capable, i.e., agents 3 and 4 are non-capable. Also, we know agents 1, 3, and 6 are non-CP-period agents, and agents 2, 4, and 5 are CP-period agents.

Fig. \ref{multiagent_converge}(a) shows the convergence and cost trajectory, and Fig. \ref{multiagent_converge}(b) shows the cost trajectory of the non-CP period agent set and the CP period agent set. Noted that the results align with our analysis in Proposition \ref{N_agent_extropolate}. Specifically, non-CP period agents shift demand either to the balance point or until they reach their critical limit. For example, agents 1 and 6 balance their demand across the two periods, while agent 3 reaches its shifting limit. After approximately 120 iterations, system demand balances over two periods (Fig. \ref{multiagent_converge}(b)), and CP-period agents begin internally adjusting their demand. Note that agents 4 and 5, which have higher shifting penalty parameters than agent 2, gradually reduce their demand shifting, while agent 2 increases its shifting. From the cost trajectory (Fig. \ref{multiagent_converge}(a)), agent 4, with the highest shifting penalty, reduces its demand shifting more significantly than agent 5.
Compared to the centralized model, the CP shaving game equilibrium deviates significantly, with all agents shifting more demand due to the information barrier. This increases the cost and results in an efficiency loss of 1.1317 while maintaining the same peak shaving performance.
\begin{table}[]
\centering
\caption{Agents' parameters and solution of the game and centralized model}
\begin{tabular}{@{}cccccccc@{}}
\toprule
Agent   & 1   & 2   & 3   & 4   & 5   & 6   & Total           \\ \midrule
Baseline demand 1  & 7   & 3   & 10  & 1   & 2   & 5   & 28               \\
Baseline demand 2  & 3   & 13  & 4   & 4   & 6   & 3   & 33               \\
Penalty parameter & 0.2 & 0.1 & 0.4 & 0.5 & 0.2 & 0.1 & \textbackslash{} \\ 
Centralized shifting & 0.36 & 0.72 & 0.18 & 0.15 & 0.36 & 0.73 & 2.5 \\
Game shifting & -2 & 3.85 & -1.25 & 0.93 & 1.97 & -1 & 2.5 \\
Efficiency loss & \multicolumn{7}{c}{1.1317}     \\ 
\bottomrule
\end{tabular}
\label{talbe11}
\end{table}

\begin{figure}
\centering
\subfigure[Individual agent]{\includegraphics[width=0.64\textwidth]{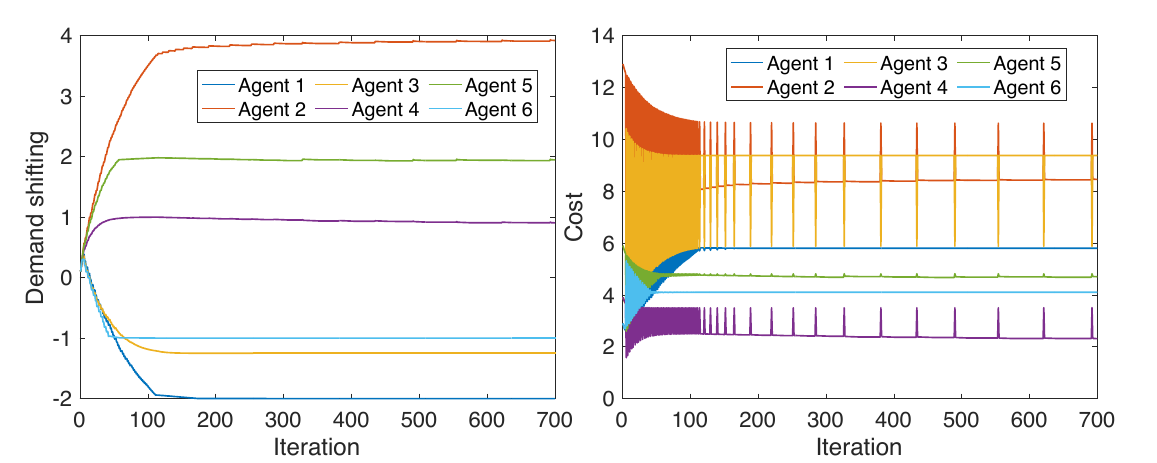}}
\subfigure[CP period and non-CP period agent]
{\includegraphics[width=0.34\textwidth]{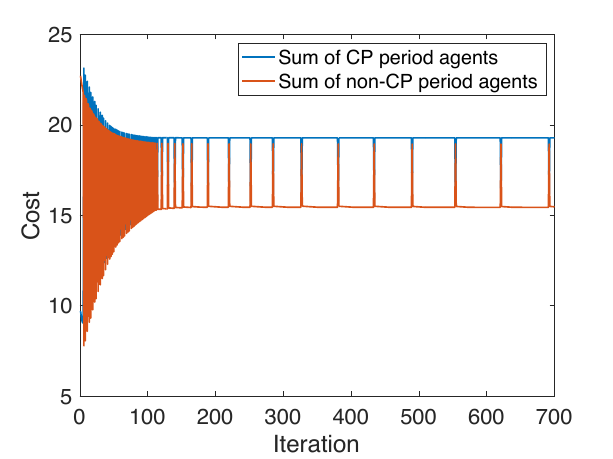}}
\caption{Convergence performance of multi-agent non-concave CP shaving game.}
\label{multiagent_converge}
\end{figure}


\subsection{Agent number impacts on efficiency loss}
In this section, we randomly generate agent samples for systems with varying agent numbers while ensuring that the generated samples satisfy the non-concave and quasiconcave game conditions described in (\ref{quasiconcaveG'}) and (\ref{non_concaveG'}). We allow the baseline system demand consumption to vary freely, thus getting the range $\sum_{i\in N} -r_{i} < b < \sum_{i\in N} r_{i}$. We set the agent's $i,i\in N$ baseline demand as $X_{i,1},X_{i,2}\in(0,15)$ and penalty parameters as $\alpha_{i} \in (0,0.5)$. We loop the agent number from 2 to 50, and each agent number generates 1000 samples to calculate the efficiency loss. We present the results in Fig. \ref{poa_figure}. As our theoretical analysis indicates, efficiency loss is more variable when the agent number is small; also, the game type is more varied with quasiconcave, concave, and non-concave, affected by agents' flexibility, indicating that the small system is more sensitive to the agents' flexibility. When the agent number increases, all game conditions become non-concave, the mean efficiency loss converges, and the variance decreases. This means the large system is more stable and can eliminate the agent's flexibility influence on efficiency loss.


\begin{figure}
    \centerline{\includegraphics[width=0.8\textwidth]{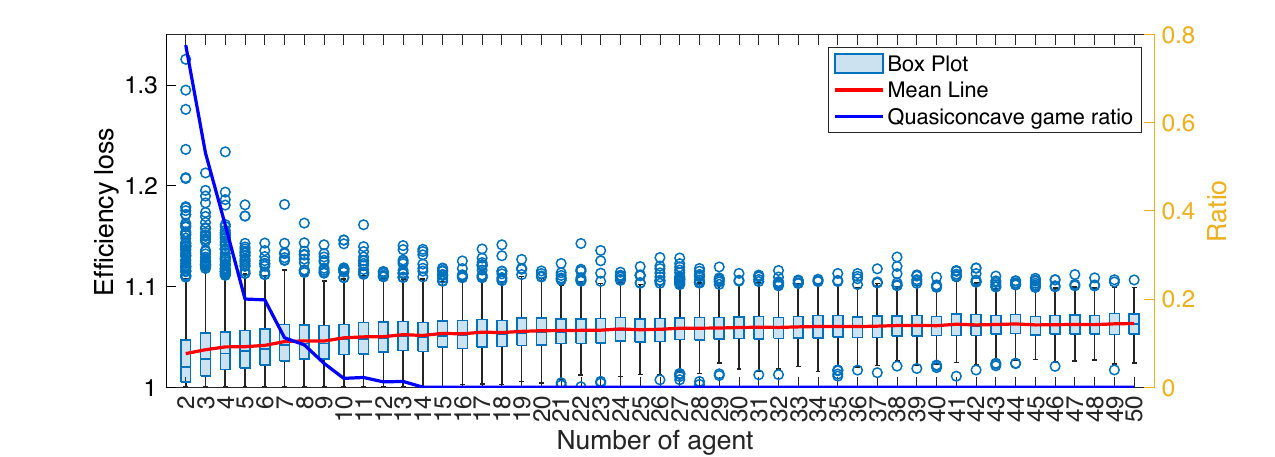}}
    \caption{Agent number impacts on efficiency loss. Each box contains 1000 samples, the circle denotes the outlier, the box upper and lower line denotes the quartiles, and the short blue line inside the box is the median value.}
    \label{poa_figure}
\end{figure}

\subsection{Real-world applications}
In this section, we apply our CP shaving game model to a real-world case in Texas. We utilize data from ERCOT’s 4CP program, under which customers pay a coincident peak charge based on their average demand during one system peak hour in each of the four summer months (June through September). The 4CP charge rate is generally determined by the transmission cost of service (TCOS), which was approximately \$66.76 per kilowatt-year in 2024, implying that a 1 kW peak demand results in \$66.76 in annual electricity bill~\citep{CP_rate}. ERCOT publicly provides each participant’s CP demand record along with the exact CP hours~\citep{4CP_record}. In this study, we focus on the year 2024, during which the CP hours occurred on June 30 (17:45), July 1 (17:00), August 20 (17:00), and September 20 (16:00), and the total participants is 136 (excluding 6 entities with zero CP demand). The final CP demand is the average of these four hourly CP demands.

Due to the proprietary nature of detailed consumption data, and given that our primary focus is evaluating CP demand reduction under a game-theoretic framework, we aggregate all hours from June 1 to September 30 outside the four CP intervals into a single non-CP period. We then use the average demand across this aggregated set to represent each participant’s off-peak consumption.
Using ERCOT’s system-wide hourly demand data~\citep{load_profile}, we estimate the average non-CP period demand for each participant. This approach is based on the assumption that large commercial and industrial customers—the primary participants in the 4CP program—have electricity consumption patterns that are largely driven by their operational capacity and schedules, and typically exhibit relatively consistent load shapes throughout the day~\citep{real_case1, real_case2}. Although some temporal fluctuations exist, averaging over a multi-period window helps smooth out short-term volatility. Based on this, we estimate each participant’s average non-CP period demand by applying its demand ratio in the CP period to the system-wide average non-CP period demand, with an added 50\% random variation to capture heterogeneity across customers. 

We then determine each company's shifting cost parameter based on its demand value. Specifically, we calculate a minimum shifting cost parameter $\alpha_i'$ using the period 2 demand, which, under our assumption, is the baseline CP period. This ensures that the customer's demand does not become negative during the game, and is given by the expression $\alpha_i' = \pi / 2X_{i,2}$. To capture demand flexibility and introduce heterogeneity into the simulation, we add a 20\% random fluctuation to the minimum cost parameter.

Fig. \ref{real_case}(a) presents the CP charges based on actual CP demand data and the charges after applying the CP shaving game model. Markers are used to distinguish CP-period and non-CP-period customers. We can see that customers with higher baseline demand tend to bear a larger share of CP charges in the real system. Because their consumption has a greater impact on the system peak, and because they generally face lower shifting costs, these customers tend to benefit more under the game framework. System-wide, the game leads to a significant reduction in CP charges compared to the real CP-influenced demand data—approximately \$651 million. This effect is further illustrated in the convergence behavior shown in Fig. \ref{real_case}(b), where the total cost (i.e., CP charge plus shifting cost) declines for CP-period customers and slightly increases for non-CP-period customers. The shape of the convergence curve also suggests that the game exhibits quasiconcavity, which is a result of the way we specify the shifting cost parameters. The observed reduction in CP charges arises because we assume all customers in the game model strategically participate to minimize their peak-related cost. In practice, not all customers are interested in investing in CP-shaving efforts. Therefore, our results suggest that increasing customer participation has the potential to further drive down the CP demand and improve system efficiency.


We also compare the game results with those from a centralized peak shaving model, and observe that both approaches always achieve the same peak shaving effectiveness. To evaluate efficiency loss, we vary the random variation in the demand ratio across five levels, 25\%, 50\%, 75\%, 100\%, and 125\%, and generate 50 random samples for each level to assess robustness. We show the efficiency loss in Fig.~\ref{real_case}(c), which increases with the magnitude of variation in non-CP demand relative to CP demand. This is because greater variation requires more interactions to reach convergence. Still, the efficiency loss is mainly within 1.05 and demonstrates robust performance. This aligns with our theoretical analysis and can be attributed to the stabilizing effect of a large number of participating agents.


\begin{figure}
\centering
\subfigure[CP demand performance]{\includegraphics[width=0.32\textwidth]{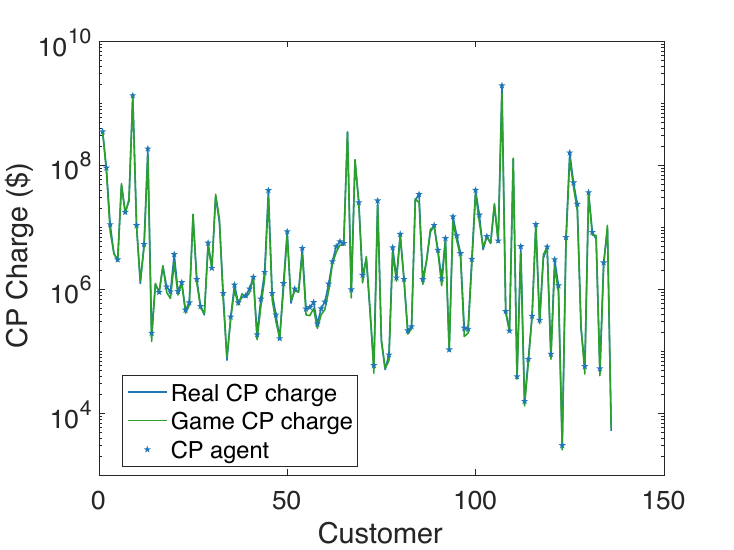}}
\subfigure[Convergence curve of the CP game]
{\includegraphics[width=0.32\textwidth]{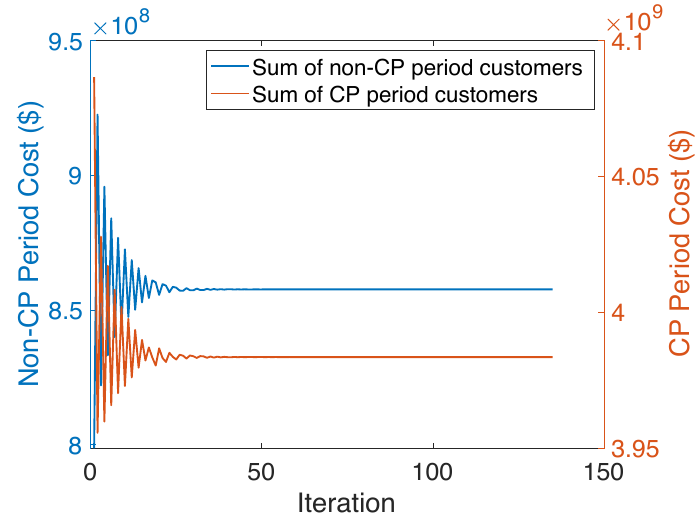}}
\subfigure[Uncertainty impact on demand pattern and shifting cost parameter]{\includegraphics[width=0.32\textwidth]{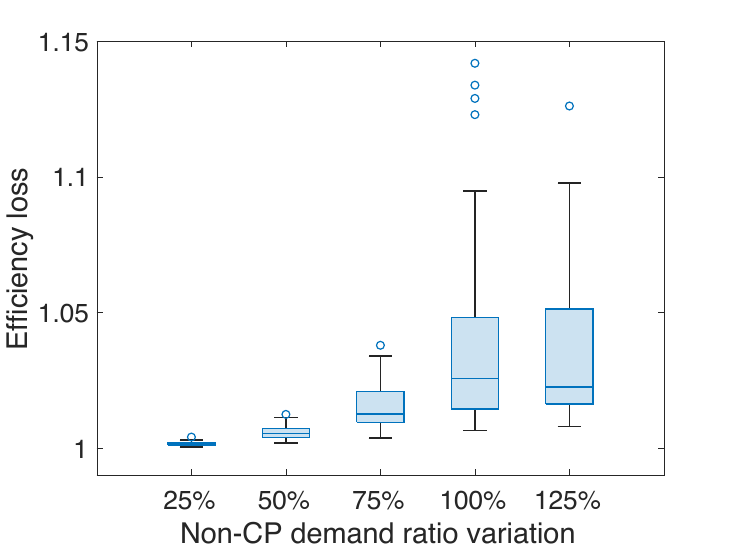}}
\caption{Results with Texas 4CP program.}
\label{real_case}
\end{figure}

\section{Discussion and Conclusion}
Although CP shaving has become a common practice, the interactive influence of agents on CP time remains largely unexplored. This paper takes a first step toward understanding the application of the game framework in the context of CP shaving. We propose a theoretical game framework and show its applicability for the CP shaving problem by analyzing the game structure and analytically deriving the NE for each game structure under a two-agent two-period setting. We prove that the equilibrium points are globally uniformly asymptotically stable, provided that all agents’ demand shifting remains within the baseline limit. We also show that the gradient-based algorithm, with an updating rule that approximates the asymptotically stable process via finite differences, can efficiently compute the equilibrium points. Our analysis reveals that agents' strategic interactions in the game framework achieve the same peak-shaving effectiveness as an equivalent centralized peak-shaving model. However, under quasiconcave and non-concave conditions, the CP shaving game results in efficiency losses. Our findings also persist for a general setting with more than two agents. These theoretical findings are further validated through a real-world case study based on ERCOT's 4CP program, demonstrating the practical relevance of the proposed game model. 


Our insights on peak shaving effectiveness have the potential to inform utilities and operators that are concerned with applying game models. By analyzing the efficiency loss in quasiconcave and non-concave games within the two-agent setting, we show that efficiency decreases with the disparities between agents, as measured by their marginal shifting costs. This insight informs the design of CP shaving mechanisms that simultaneously achieve effectiveness and fairness by aligning marginal shifting costs - either by incentivizing non-flexible agents with higher comfort loss or regulating them to shift less demand. We show that efficiency decreases as the game transitions from concave to non-concave and then to quasiconcave, corresponding to greater agent flexibility. This suggests that higher flexibility amplifies system inefficiency, which contradicts the common perception that utilities and operators generally favor flexible agents for their ability to provide grid services. Combined with our multi-agent analysis, which shows that efficiency loss is sensitive to agent flexibility in small systems but stabilizes in larger ones, we suggest that forming large systems with flexible agents while grouping inflexible agents into smaller systems can effectively increase efficiency.


Our work has a few limitations. First of all, we formulate the demand-shifting penalty as a quadratic function. Future research is needed to examine more generic functions and even obtain empirical models from data-driven methods. Another limitation is that we only look at a two-period scheduling problem with simultaneous decisions. Additional research is needed to study the CP shaving game under a multi-period or a sequential context, where in the latter, agents demand shifting decisions are made stage by stage given the non-anticipatory price and demand realization. This also motivates the last limitation, which is that we did not consider the incomplete information of the game model. In practice, agents' payoff structure should be private information that can't be observed by others, but agents may get others' previous decisions and system peak period to update their beliefs gradually. We would expect contextual optimization to be a promising way to infer the private payoff function from other observed features and embed it into the stochastic formulation.
\bibliographystyle{apalike}
\bibliography{Ref}

\newpage

\begin{center}
    \setlength{\baselineskip}{2.0\baselineskip} 
    {\fontsize{23}{30}\selectfont  Appendix: Gaming on Coincident Peak Shaving: Equilibrium and Strategic Behavior} 
\end{center}

\appendices
\renewcommand{\thesection}{Appendix \Alph{section}}

\section{Proof of Proposition \ref{game_proper}}
\begin{proof}
We separate two conditions to analyze the continuity and concavity due to the indicator function. 

(1) We begin by analyzing the concave game conditions. For the game to be concave, the CP period must always align with the baseline CP period throughout the game. This allows us to eliminate the indicator function, reducing the payoff function to the quadratic form described in (\ref{1a}), which is concave and continuous. Since altering the CP period requires the total demand shifting of all agents to exceed the system balance point $b$, and the maximum demand shifting within one period is limited by their respective critical points, the concavity condition is given by $b > r_{\mathrm{x}} + r_{\mathrm{y}}$. Similarly, if baseline CP period is 1, the condition is given by $b <- r_{\mathrm{x}} -r_{\mathrm{y}}$. 


(2) We now establish the conditions for quasiconcave/discontinuous games. When the CP period shifts during the gaming process, the presence of the indicator function clearly introduces discontinuity in the game. To analyze this, we express agent x's payoff function in terms of the switching point $c_\mathrm{x} = b - y$
\begin{subequations}
    \begin{align}
        f_{\mathrm{x}}(x,y) &= -(X_{1}+x)I(x,y)-(X_{2}-x)(1-I(x,y)) - \alpha_{\mathrm{x}}x^2, \label{switching_point}\\
        &I(x,y) = \begin{cases} 
      1 & x-c_\mathrm{x}\geq 0 \\
      0 & x-c_\mathrm{x} < 0 
        \end{cases},\label{step_function}
    \end{align}             
\end{subequations}
     
According to the definition of quasiconcavity, for all $x',x'' \in \mathcal{X}_{\mathrm{x}}$ and $\lambda \in[0,1]$, agent x's payoff function $f_{\mathrm{x}}(x,y)$ should satisfy the following for all $y\in \mathcal{X}_{\mathrm{y}}$,
    \begin{align}
         f_\mathrm{x}(\lambda x' + (1-\lambda)x'',y)\geq \min \{f_\mathrm{x}(x',y), f_\mathrm{x}(x'',y)\}.\label{quasi_concave_property}
    \end{align} 
    This quasiconcavity property is influenced by the switching point $c_\mathrm{x}$ and the critical point $r_{\mathrm{x}}$, leading to three possible cases for further analysis. Note that although by our consumption, the demand shifting $x$ of agent x is non-negative, the demand shifting $y$ of agent y is unrestricted. 
    
    i) Agent x can switch the CP period regardless of whether it occurs in period 1 or 2, meaning its switching point does not exceed its critical point in either period, i.e., $-r_{\mathrm{x}} \leq c_\mathrm{x} \leq r_{\mathrm{x}}$. This leads to the following conditions:
      \begin{align}
         -r_{\mathrm{x}} \leq b-y,\ r_{\mathrm{x}} \geq b-y,\label{9b}   
      \end{align}  
    Graphically, the payoff function is monotonically increasing before the switching point and decreasing afterward, forming a peak at $c_\mathrm{x}$
    
    ii) Agent x can switch the CP period only if it occurs in period 1, meaning its switching point exceeds its critical point in period 2 but not in period 1, i.e., $r_{\mathrm{x}} < c_\mathrm{x}$. Additionally, the function satisfies the l.s.c. condition, as defined in Definition \ref{Nash_Equ_defi}, i.e., $-(X_{2}-c_\mathrm{x}) \leq  -(X_{1}+c_\mathrm{x})$. This implies that
    \begin{align}
        &r_{\mathrm{x}} < b - y,
        \ \frac{X_{2}-X_{1}}{2} = b_{\mathrm{x}} \leq  b - y, \label{9d}
    \end{align}
    Graphically, the function exhibits quadratic concavity to the left of the switching point, monotonic decrease to the right, and l.s.c. at the switching point.

    iii) Agent x can switch CP period only if it occurs in period 2, meaning its switching point exceeds its critical point in period 1 but not in period 2, i.e., $-r_{\mathrm{x}} > c_\mathrm{x}$. Also, the function satisfies the u.s.c. conditions as defined in Definition \ref{Nash_Equ_defi}, i.e., $-(X_{2}-c_\mathrm{x}) \leq -(X_{1}+c_\mathrm{x})$. This implies that
    \begin{align}
        &-r_{\mathrm{x}} > b - y,\ 
        \frac{X_{2}-X_{1}}{2} = b_{\mathrm{x}} \geq b - y,   \label{9f}
    \end{align}
    Graphically, the function exhibits quadratic concavity to the right of the switching point, monotonic decrease to the left, and u.s.c. at the switching point.

    By combining the three cases, we require that at least one of them holds for both agents' payoff functions across all possible strategies of the other agent to ensure the game remains quasiconcave. From the graphical representation, we observe that if an agent's payoff function satisfies case i), the optimal strategy is at its switching point $x=c_\mathrm{x},y=c_\mathrm{y}$. If an agent's payoff function satisfies case ii) or iii), the optimal strategy is at its critical point, specifically $x=r_{\mathrm{x}}, y=r_{\mathrm{y}}$ for case ii) and $x=-r_{\mathrm{x}}, y=-r_{\mathrm{y}}$ for case iii). Now, suppose agent x's payoff function satisfies case i). According to condition (\ref{9b}), agent y's strategy needs to satisfy both $y \geq b-r_{\mathrm{x}}$ and $y \leq b+r_\mathrm{x}$. If agent y's payoff function satisfies case ii) or iii), we obtain either $r_{\mathrm{x}} + r_{\mathrm{y}} \geq b$ or $-r_\mathrm{x} - r_\mathrm{y} \leq b$, which correspond to the concave game conditions as previously described, except when equality holds, i.e., $r_{\mathrm{x}} + r_{\mathrm{y}} = b$ or $-r_\mathrm{x} - r_\mathrm{y} = b$. These equality conditions imply that their critical points equal to their balance points, meaning $b_{\mathrm{x}} = r_{\mathrm{x}}$ and $b_{\mathrm{y}} = r_{\mathrm{y}}$. Otherwise, for the game to be quasiconcave, both agents' payoff functions must satisfy case (i). This implies that both agents can fully balance their demand over the two periods, and their maximum demand shifting are given by $x = b_{\mathrm{x}}, y = b_{\mathrm{y}}$. Substituting these into condition (\ref{9b}), we derive the conditions that ensure quasiconcavity.
    \begin{align}
        -r_{\mathrm{x}}\leq b_{\mathrm{x}} \leq r_{\mathrm{x}}, \ -r_{\mathrm{y}}\leq b_{\mathrm{y}} \leq r_{\mathrm{y}}. \label{39}
    \end{align}
    
    Otherwise, suppose agent x's payoff function satisfies case ii), according to the condition (\ref{9d}), agent y's best strategy needs to satisfy $y < b-r_{\mathrm{x}},y\leq b-b_{\mathrm{x}}$. We know agent y's best strategy is $y =-r_{\mathrm{y}}$ or $y=r_{\mathrm{y}}$ when its payoff function satisfies cases ii) or iii), respectively. As $r_{\mathrm{y}}$ are independent to $r_{\mathrm{x}}$ and $b_{\mathrm{x}}$, agent y's payoff function satisfies cases ii) or iii) can't guarantee the quasiconcave conditions. Thus, we conclude that each agent's payoff function is quasiconcave if (\ref{39}) is true, which also shows each agent is capable according to Definition \ref{capably}.
    Given our baseline assumption, agent x's condition can further simplified to $0 \leq b_{\mathrm{x}} \leq r_{\mathrm{x}}$.
    
    
    (3) Other than the above two conditions, the game is non-concave and still discontinuous due to the indicator function. In this case, at least one agent is non-capable, and they can change the CP period together. We can directly write the conditions as the complementary set of concave and quasiconcave conditions with respect to $\mathbb{R}$, i.e., (\ref{non-concave}). Also, note that if one agent's payoff function is quasiconcave or non-concave, the payoff functions of the other agents must also be quasiconcave or non-concave, which excludes the conditions that one agent's payoff function is concave and the other is quasiconcave. This finishes the proof of this proposition.
\end{proof}

\section{Proof of Theorem \ref{n_agent_game}}
\emph{Overview of the proof}: We begin by deriving a Lemma to establish the NE for the quasiconcave CP shaving game $G$, providing insight into the best response rationale. We then proceed to prove the theorem by analyzing the shifting capabilities of both agents.
\begin{lemma}\emph{NE under quasiconcave two-agent setting.}\label{lemma_NE2agent}
    Under the quasiconcave two-agent, two-period setting as described in Proposition \ref{game_proper}, the NE of the CP shaving game, as defined in Definition \ref{Nash_Equ_defi}, is given by $x^* = b_\mathrm{x}, y^* = b_\mathrm{x}$ and satisfies the condition $S_1(x^*,y^*) = S_2(x^*,y^*)$.
\end{lemma}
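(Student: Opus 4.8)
The plan is to characterize each agent's best-response correspondence and then show that the claimed profile is its unique fixed point. First I would record the structure established in Proposition \ref{game_proper}: under the quasiconcave condition (\ref{quasi_discounti}) each agent is capable, so for a fixed opponent strategy $y$ agent x's payoff $f_\mathrm{x}(\cdot,y)$ consists of the concave piece $f_{\mathrm{x},2}$ (peak at $r_\mathrm{x}$) on the period-2 region $\{x<c_\mathrm{x}\}$ and the concave piece $f_{\mathrm{x},1}$ (peak at $-r_\mathrm{x}$) on the period-1 region $\{x\ge c_\mathrm{x}\}$, glued at the switching point $c_\mathrm{x}=b-y$. The sign of the jump there is controlled by the elementary identity $f_{\mathrm{x},1}(c_\mathrm{x})-f_{\mathrm{x},2}(c_\mathrm{x})=2\pi(b_\mathrm{x}-c_\mathrm{x})$, so the closed (period-1) branch, which the indicator selects at equality, dominates the open (period-2) branch exactly when $c_\mathrm{x}\le b_\mathrm{x}$.

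Next I would verify that the profile $x^*=b_\mathrm{x}$, $y^*=b_\mathrm{y}$ is an equilibrium. Since $b_\mathrm{x}+b_\mathrm{y}=b$, substitution gives $S_1(x^*,y^*)=S_{\mathrm{b},1}+b=S=S_{\mathrm{b},2}-b=S_2(x^*,y^*)$, establishing the balance condition. At this profile $c_\mathrm{x}=b-y^*=b_\mathrm{x}=x^*$, so agent x sits exactly at its switching point, and the jump vanishes ($b_\mathrm{x}-c_\mathrm{x}=0$), reflecting that agent x's own demand is equalized across the two periods. Using $-r_\mathrm{x}\le b_\mathrm{x}\le r_\mathrm{x}$ from (\ref{quasi_discounti}), the left branch $f_{\mathrm{x},2}$ is increasing up to $c_\mathrm{x}$ and the right branch $f_{\mathrm{x},1}$ is decreasing after $c_\mathrm{x}$, so the common value $f_{\mathrm{x},1}(b_\mathrm{x})=f_{\mathrm{x},2}(b_\mathrm{x})$ is the global maximum; hence $x^*=b_\mathrm{x}$ is a best response to $y^*$, and the symmetric argument applies to agent y.

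For uniqueness I would proceed in two stages. I first rule out any equilibrium off the switching surface: if period 1 were strictly the CP ($S_1>S_2$), both agents would locally maximize $f_{\cdot,1}$ toward the unconstrained optima $(-r_\mathrm{x},-r_\mathrm{y})$ with the indicator slack, forcing $-r_\mathrm{x}-r_\mathrm{y}>b$, which is impossible because $b>0$; symmetrically a strict period-2 CP would require $r_\mathrm{x}+r_\mathrm{y}<b$, contradicting $b=b_\mathrm{x}+b_\mathrm{y}\le r_\mathrm{x}+r_\mathrm{y}$ from (\ref{quasi_discounti}). Hence every equilibrium satisfies $S_1=S_2$, i.e. $x^*+y^*=b$, so each agent is automatically located at its own switching point. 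I then invoke the jump identity: for agent x's best response to be attained at $c_\mathrm{x}$ (rather than lost to the never-achieved supremum of the open period-2 branch) one needs $c_\mathrm{x}\le b_\mathrm{x}$, equivalently $y\ge b_\mathrm{y}$; the symmetric condition for agent y is $x\ge b_\mathrm{x}$. Together with $x+y=b_\mathrm{x}+b_\mathrm{y}$ these inequalities force $x=b_\mathrm{x}$ and $y=b_\mathrm{y}$.

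I expect the delicate point to be attainment of the best response at the discontinuity: because the indicator assigns the boundary $S_1=S_2$ to period 1, the period-2 branch contributes only a supremum that is never realized, so an agent's best response can fail to exist unless the jump points the favorable way. Carefully tracking these half-open intervals and the $I=1$-at-equality convention — rather than the concavity estimates, which are routine — is what both certifies optimality at the balance point and, through the inequalities $c_\mathrm{x}\le b_\mathrm{x}$ and $c_\mathrm{y}\le b_\mathrm{y}$, simultaneously delivers uniqueness.
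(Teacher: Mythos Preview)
Your proposal is correct and takes a genuinely different route from the paper's own argument. The paper first argues informally that in the quasiconcave regime the agents stand in a ``fully competitive'' (adversarial) relationship, which it uses to motivate a min--max best-response formula $x^*=\max\{-r_\mathrm{x},\,b-y^*,\,b_\mathrm{x}\}$ (resp.\ $\min\{r_\mathrm{x},\,b-y^*,\,b_\mathrm{x}\}$) and then observes that under the capability bounds (\ref{quasi_discounti}) these collapse to $x^*=b_\mathrm{x}$, $y^*=b_\mathrm{y}$. You instead run a direct fixed-point analysis of the best-response correspondence: the explicit jump identity $f_{\mathrm{x},1}(c_\mathrm{x})-f_{\mathrm{x},2}(c_\mathrm{x})=2\pi(b_\mathrm{x}-c_\mathrm{x})$ together with the monotonicity of the two branches lets you verify optimality at the balance point, rule out strict-CP equilibria via the elementary bounds $-r_\mathrm{x}-r_\mathrm{y}<0<b\le r_\mathrm{x}+r_\mathrm{y}$, and force uniqueness on the switching surface through the attainment condition $c_\mathrm{x}\le b_\mathrm{x}$. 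Your route is more self-contained and sidesteps the informal zero-sum analogy; the paper's route is looser as a proof but supplies the cooperative/competitive interpretation that the rest of the paper exploits. One small point worth tightening: the phrase ``never-achieved supremum of the open period-2 branch'' tacitly assumes $r_\mathrm{x}\ge c_\mathrm{x}$; when $r_\mathrm{x}<c_\mathrm{x}$ the period-2 optimum is attained at $r_\mathrm{x}$, but then $f_{\mathrm{x},2}(r_\mathrm{x})>f_{\mathrm{x},2}(c_\mathrm{x})>f_{\mathrm{x},1}(c_\mathrm{x})$ still blocks $c_\mathrm{x}$ from being a best response, so the necessary inequality $c_\mathrm{x}\le b_\mathrm{x}$ survives and your conclusion is unaffected.
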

\begin{proof}
    Since we are dealing with the quasiconcave condition, we assume $x$ is non-negative, while $y$ remains unrestricted. Thus, we conduct a comprehensive analysis without imposing specific baseline demand assumptions.
    From the best response perspective, for a given instance, suppose the CP period is 1, meaning $S_1 = S_2+ \delta$, where $\delta$ is an infinitesimal number. If agent x benefits by reducing $x$, shifting the CP period to 2, this implies that agent x's demand in period 1 is greater than in period 2, i.e., $X_{1}+x \geq X_{2}-x$. Now, considering agent y, if its demand in period 1 is lower than in period 2 ($Y_{1}+y < Y_{2}-y$), the CP period shift from 1 to 2 harms agent y, leading them to increase $y$ in an attempt to push the CP period back to 1. Conversely, if $Y_{1}+y \geq Y_{2}-y$, shifting the CP period to 2 benefits agent y as it can shift $y$ away from period 2, meaning its best response remains to increase $y$. Thus, regardless of agent y's demand distribution, its best response is always adversarial to agent x.
    
    This demonstrates that both agents are  \emph{fully competitive}—if agent x benefits from changing their strategy, that same strategy directly harms agent y. This mirrors a zero-sum game setting, where one agent's gain is the other's loss. Given this competitive structure, we can apply the following min-max formulation to analyze their best responses.
        \begin{subequations}
            \begin{align}
               x^* &= \arg\max_{x} \min_{y} f_\mathrm{x}(x,y) = \arg\max_{x} 
               \left\{\begin{aligned}
           -\pi(X_{1}&+x)-\alpha_{\mathrm{x}}x^2,   \\
           &x \geq b_{\mathrm{x}}, x \geq b-y^*\\
           -\pi(X_{2}&-x)-\alpha_{\mathrm{x}}x^2,   \\
           &x < b_{\mathrm{x}}, x< b-y^*
        \end{aligned}.  \right.
        \end{align}
        where $y^*$ is the best strategy of agent y following the same structure, and the first and second case corresponds to CP periods 1 and 2. By applying the first-order optimality condition, the best strategy of $x$ is 
        \begin{align}
            x^* = \left\{\begin{aligned} 
            &\max\{-r_{\mathrm{x}}, b-y^*, b_{\mathrm{x}}\}, 
            &S_{1}\geq S_{2}\\
            &\min\{r_{\mathrm{x}}, b-y^*, b_{\mathrm{x}}\}, 
            &S_{1} < S_{2}\\
            \end{aligned}.  \right. \label{x_1^*}
        \end{align}
        The best strategy for agent y follows the same structure
        \begin{align}
            &y^* = \arg\max_{y} \min_{x} f_{\mathrm{y}}(x,y) 
            = \left\{\begin{aligned} 
            &\max\{-r_{\mathrm{y}}, b - x^*, b_{\mathrm{y}}\}, 
            &S_{1}\geq S_{2} \\
            &\min\{r_{\mathrm{y}}, b-x^*, b_{\mathrm{y}}\}, 
            &S_{1} <S_{2} \\
            \end{aligned}.  \right.\label{x_2^*}
        \end{align}
        \end{subequations}

        According to (\ref{quasi_discounti}), we know $-r_{\mathrm{x}} \leq b_{\mathrm{x}}\leq r_{\mathrm{x}}$, $-r_{\mathrm{y}} \leq b_{\mathrm{y}}\leq r_{\mathrm{y}}$, then the (\ref{x_1^*}) and (\ref{x_2^*}) can be simplified to $y^* = \max\{ b-x^*, b_{\mathrm{y}}\}$ and $x^* = \max \{b-y^*, b_{\mathrm{x}}\}$, which shows the only mutual best strategy are $x^* = b_{\mathrm{x}},\ y^* = b_{\mathrm{y}}$ (the NE), and the conditions are 
        \begin{align}
            X_{2} + Y_{2} - x^* - y^* = S_2
            = X_{1} + Y_{1} +x^* + y^* = S_1.
        \end{align}

        This shows the NE will always be obtained when $S_1 = S_2$ under the quasiconcave condition. Essentially, during the game, periods 1 and 2 are active interactively and finally converges at the connecting point of both periods, i.e., $S_1=S_2$. This means there is a unique equilibrium point in the entire strategy set $\mathcal{X}$. Indeed, all agents will minimize the payment associated with $S_1, S_2$ imbalance as any imbalance results in a significant CP charge change caused by the opponent's strategy.
\end{proof}

This lemma establishes the NE under the quasiconcave game condition and highlights that agents' best response is to adopt adversarial strategies when their individual peak periods differ to reduce their CP charge. In contrast, the concave game is straightforward to analyze, as both agents maximize quadratic independent payoff functions, making their best strategies simply their critical points of the CP period. Building on this foundation, we now present the proof of the theorem.

\begin{proof}[Proof of Theorem \ref{n_agent_game}]
The basic idea behind proving this theorem is to analyze whether agents are capable of fully balancing their demand. If they are not, we further determine whether they are upper non-capable or lower non-capable, as each condition corresponds to a different NE. The key rationale is that both agents' best response is to shift demand to reduce CP charge, but their shifting capability imposes a constraint. Thus, the key aspect of the proof lies in analyzing the relationship between their critical point and balance point.

Given our assumption that $S_{\mathrm{b},1}<S_{\mathrm{b},2}$ and $X_1<X_2$, if agent x is non-capable, it must be upper non-capable, satisfying $b_{\mathrm{x}}>r_{\mathrm{x}}$. This corresponds to scenario ii) as described in Proposition \ref{game_proper}. Graphically, this means that agent x's payoff function is quadratic (concave) in the left part of the switching point, monotonically decreasing in the right part, but does not satisfy l.s.c. at the switching point. Meanwhile, agent y remains unrestricted in its decision-making.

We then separate many scenarios according to whether they are non-capable agents, upper non-capable, or lower non-capable, to analyze the NE and corresponding conditions. 


(1) $X_{1} < X_{2}, Y_{1} < Y_{2}$, meaning both agents have higher individual demand in the baseline CP period, which is period 2 by assumption. The analysis is divided into four scenarios, depending on whether each agent is capable or non-capable.
\begin{itemize}
    \item Both agents are capable agents, i.e., $0 \leq b_{\mathrm{x}} \leq r_{\mathrm{x}}, 0 \leq b_{\mathrm{y}} \leq r_{\mathrm{y}}$, the game is quasiconcave, meaning both agents can fully balance their demand over two periods. By Lemma \ref{lemma_NE2agent}, the NE is $x^* = b_{\mathrm{x}}, y^* = b_{\mathrm{y}}$.
    \item Agent x is non-capable while agent y is capable, i.e., $b_{\mathrm{x}} > r_{\mathrm{x}} >0, 0\leq b_{\mathrm{y}} \leq r_{\mathrm{y}}$, agent x must be upper non-capable since $X_{1} < X_{2}$, and its best strategy is to shift demand up to its critical point in period 2, so, $x^* = r_{\mathrm{x}}$. Since agent y is capable, it can further shift demand to help balance the system. Essentially, agent y can reduce costs by shifting demand away from CP period 2 but must do so within its shifting capability. The optimal strategy for agent y depends on whether it can fully balance the system demand given agent x's strategy. Thus, the best strategy for agent y is $y^* = \min \{r_{\mathrm{y}}, b-x^*\}$. This distinction determines whether the game remains concave or becomes non-concave, leading to two scenarios. 
    To determine the scenario, we compare the two terms. i) If $ b-x^* = b - r_{\mathrm{x}} > r_{\mathrm{y}}$, meaning agent y cannot shift enough to change the CP period given agent x's strategy, then the game remains concave, satisfying $r_{\mathrm{x}} +r_{\mathrm{y}} < b$. In this case, the best strategy is $y^* = r_{\mathrm{y}}$, with the conditions $r_{\mathrm{x}} < b_{\mathrm{x}},b_{\mathrm{y}} \leq r_{\mathrm{y}}$ and $r_{\mathrm{x}} +r_{\mathrm{y}} - b_{\mathrm{x}} < b_{\mathrm{y}}$
    ii) Otherwise, when $r_{\mathrm{x}} +r_{\mathrm{y}} \geq b$, agent y can shift enough demand to change the CP period, making the game non-concave. In this scenario, agent y's best strategy is $y^* = b-x^* = b-r_\mathrm{x}$. The corresponding condition is $r_{\mathrm{x}} < b_{\mathrm{x}}, b_{\mathrm{y}} \leq r_{\mathrm{y}}$ and $b_{\mathrm{y}} \leq r_{\mathrm{x}} +r_{\mathrm{y}} - b_{\mathrm{x}}$.
    \item Agent x is capable while agent y is non-capable, i.e., $0\leq b_{\mathrm{x}} \leq r_{\mathrm{x}}, b_{\mathrm{y}} > r_{\mathrm{y}}$. Similar to prior scenario, agent y can only be upper non-capable. Thus, by symmetry, we can interchange x and y to derive the NE and corresponding conditions. The equilibrium strategies are $y^* = r_\mathrm{y}, x^* = r_\mathrm{x}$ under the condition of $b_{\mathrm{x}} \leq r_{\mathrm{x}},b_\mathrm{y}>r_\mathrm{y}$ and $r_{\mathrm{x}} +r_{\mathrm{y}} - b_{\mathrm{y}} < b_{\mathrm{x}} \leq r_{\mathrm{x}}$, while $y^* = r_\mathrm{y}, x^* = b-r_\mathrm{y}$ under the condition of $b_{\mathrm{x}} \leq r_{\mathrm{x}},b_\mathrm{y}>r_\mathrm{y}$ and $b_{\mathrm{x}} \leq r_{\mathrm{x}} +r_{\mathrm{y}} - b_{\mathrm{y}}$;
    \item Both agents are (upper) non-capable, i.e., $b_{\mathrm{x}} > r_{\mathrm{x}}, b_{\mathrm{y}} > r_{\mathrm{y}}$, the game is concave and the NE of both agents are their critical point $x^* = r_{\mathrm{x}},y^* = r_{\mathrm{y}}$.
       
\end{itemize}

(2) $X_{1} < X_{2}, Y_{1} \geq Y_{2}$, agent x has higher individual demand in the baseline CP period, while agent y has lower demand. We again consider four scenarios.
\begin{itemize}
    \item Both agents are capable, i.e., $0 \leq b_{\mathrm{x}} \leq r_{\mathrm{x}}, -r_{\mathrm{y}} \leq b_{\mathrm{y}} \leq 0$, similar to the first scenario in case (1), the NE remains the same as in that scenario.
    \item Agent x is upper non-capable while agent y is capable, i.e., $b_{\mathrm{x}} > r_{\mathrm{x}}, -r_{\mathrm{y}}\leq b_{\mathrm{y}} \leq 0$. Similar to the second scenario in case (1), agent x's best strategy remains $x^* = r_\mathrm{x}$, but agent y's best strategy differs due to the baseline demand conditions, given by $y^* = \max\{-r_\mathrm{y}, b-x^*\}$. From the conditions of this scenario, we have $-r_\mathrm{y} - b_{\mathrm{y}}\leq 0, r_\mathrm{x} - b_\mathrm{x} < 0$, leading to $-r_\mathrm{y} + r_\mathrm{x} - b_{\mathrm{y}} - b_\mathrm{x} = -r_\mathrm{y} + r_\mathrm{x} - b<0$. This implies $-r_\mathrm{y} < b-r_{\mathrm{x}}$, so the best strategy for agent y is $y^* = b-r_{\mathrm{x}}$. The corresponding conditions are $b_{\mathrm{x}} > r_{\mathrm{x}},-r_{\mathrm{y}}\leq b_{\mathrm{y}}$ and $-r_\mathrm{y} + r_\mathrm{x} - b_\mathrm{x}<b_\mathrm{y}$.
    \item Agent x is capable while agent y is non-capable, i.e., $0\leq b_\mathrm{x} \leq r_\mathrm{x}, -r_\mathrm{y} > b_\mathrm{y}$, agent y must be lower non-capable since $Y_{1} \geq Y_{2}$, and its best strategy is to shift demand up to its critical point in period 1, so $y^* = -r_\mathrm{y}$. Similar to the analysis in the second scenario in case (1), agent x's best strategy is determined by whether it can fully balance the system demand given agent y’s strategy, i.e., $x^*=\max\{-r_\mathrm{x},b-y^*\}$. Since $X_1<X_2$, we must have $x \geq 0$, thus $x^* = b-y^* = b+r_\mathrm{y}$, and the conditions for agent x is $0\leq b_\mathrm{x} \leq r_\mathrm{x}$, for agent y is $b+ r_\mathrm{y} \geq -r_\mathrm{x}$, equivalent to $b_\mathrm{y} \geq -r_\mathrm{x} -r_\mathrm{y} -b_\mathrm{x}$
    \item Both agents are non-capable, i.e., $b_{\mathrm{x}} > r_{\mathrm{x}}, -r_{\mathrm{y}} > b_{\mathrm{y}}$. Given their baseline conditions, both agents must compete, and their best strategy is determined by which agent can shift more demand before reaching their critical point. The equilibrium strategy is thus given by
    \begin{align}
        x^* = \min\{r_{\mathrm{x}},b-y^*\}, y^* = \max\{-r_{\mathrm{y}},b-x^*\}.
    \end{align}
    i) If $x^* = r_{\mathrm{x}}$ then it must hold that $r_{\mathrm{x}} < b-y^*$. Suppose $y^* = -r_{\mathrm{y}}$, then for agent y, we require $-r_{\mathrm{y}} > b-x^* = b-r_\mathrm{x}$. This results in the conditions $r_{\mathrm{x}} < b-y^* = b+ r_{\mathrm{y}}$ for agent x. Obviously, the condition of agents x and y are conflict. Thus, the NE for both agents is given by $x^* = r_{\mathrm{x}}, y^* = b -x^* =b-r_{\mathrm{x}}$ with the corresponding condition $r_{\mathrm{x}} < b_\mathrm{x},-r_{\mathrm{y}} > b_\mathrm{y}$ and $-r_{\mathrm{y}} < b-r_\mathrm{x}$.
    ii) Otherwise, if $r_{\mathrm{x}} > b-y^*$, then $x^* = b-y^*$. Suppose $y^* = b-x^*$, then for agent y, we require $-r_{\mathrm{y}} < b-x^*$, which simplifies to $r_{\mathrm{x}} > b-y^* =x^*$ for agent x and $-r_{\mathrm{y}} < b-x^* = y^*$ for agent y. This implies that agent x does not reach the critical point in period 2, and agent y does not reach the critical point in period 1. Therefore, their best strategies would be the balance points, i.e., $x^* =b_\mathrm{x}, y^* = b_\mathrm{y}$. However, this contradicts the conditions of this scenario that $b_\mathrm{x} > r_\mathrm{x}, -r_\mathrm{y} > b_\mathrm{y}$. Thus, the NE for both agents are $x^* =b-y^* = b+r_{\mathrm{y}}, y^* = -r_{\mathrm{y}}$. The correspond conditions are $r_{\mathrm{x}} < b_\mathrm{x},-r_{\mathrm{y}} > b_\mathrm{y}$ and $, -r_{\mathrm{y}} > b-r_{\mathrm{x}}$.
\end{itemize} 

To conclude, combining all the scenarios, we obtain the NE and corresponding solutions for different game structures. 

The game is \emph{quasiconcave} when both agents are capable, i.e., $0 \leq b_\mathrm{x}\leq r_\mathrm{x}$ for agent x and $-r_\mathrm{y} \leq b_\mathrm{y} \leq r_\mathrm{y}$ for agent y, which aligns with (\ref{quasi_discounti}) as described in Proposition \ref{game_proper}. The NE in this case is $x^* =b_\mathrm{x}, y^* = b_\mathrm{y}$.

The game is \emph{concave} if the sum of agents’ shifting capacities is insufficient to change the CP period, which is the union of the following 
\begin{subequations}
    \begin{align}
        r_\mathrm{x} < b_\mathrm{x},\ r_{\mathrm{x}} +r_{\mathrm{y}} - b_{\mathrm{x}} < b_{\mathrm{y}} \leq r_{\mathrm{y}},\\
        b_\mathrm{y} > r_\mathrm{y},\ r_{\mathrm{x}} +r_{\mathrm{y}} - b_{\mathrm{y}} < b_{\mathrm{x}} \leq r_{\mathrm{x}},\\
        r_\mathrm{x} < b_\mathrm{x}, r_\mathrm{y} < b_\mathrm{y}.
    \end{align}
\end{subequations}
This is equivalent to $r_{\mathrm{x}} +r_{\mathrm{y}} < b$ and the same to (\ref{concave}) as described in Proposition \ref{game_proper}. The NE in this case is $x^* = r_{\mathrm{x}}, y^* = r_{\mathrm{y}}$.

The game is \emph{non-concave} when $0\leq b \leq r_{\mathrm{x}} +r_{\mathrm{y}}$. We have three cases for the NE and corresponding conditions. i) $x^* =r_\mathrm{x},y^* = b-r_\mathrm{x}$, the conditions are the union of the following 
\begin{subequations}
    \begin{align}
        r_\mathrm{x} < b_\mathrm{x},
        \ b_\mathrm{y} \leq r_\mathrm{y},
        \ b_{\mathrm{y}} \leq r_{\mathrm{x}}+ r_{\mathrm{y}} - b_{\mathrm{x}},\\
        r_\mathrm{x} < b_\mathrm{x},
        \ -r_{\mathrm{y}}\leq b_{\mathrm{y}}
        \ b_{\mathrm{y}} > r_{\mathrm{x}} - r_{\mathrm{y}} - b_{\mathrm{x}},\label{38b}\\
        r_\mathrm{x} < b_\mathrm{x},
        \ b_{\mathrm{y}} < -r_{\mathrm{y}},
        \ b_{\mathrm{y}} > r_{\mathrm{x}} - r_{\mathrm{y}} - b_{\mathrm{x}},\label{38c}
    \end{align}
    which is equivalent to $0\leq b\leq r_{\mathrm{x}}+ r_{\mathrm{y}}$ and $r_\mathrm{x} < b_\mathrm{x}$. The reason is that given this conditions, automatically, we have $-r_{\mathrm{y}}\leq b_{\mathrm{y}}$ and indicates $b_{\mathrm{y}} > r_{\mathrm{x}} - r_{\mathrm{y}} - b_{\mathrm{x}}$ (\ref{38b}) and (\ref{38c}).
\end{subequations}
ii) $x^* =b-r_\mathrm{y},y^* = r_\mathrm{y} $, the conditions are $b<r_{\mathrm{x}}+ r_{\mathrm{y}}$ and $r_\mathrm{y} < b_\mathrm{y}$.
iii) $x^* =b+r_\mathrm{y},y^* =- r_\mathrm{y} $, the conditions are the union of the following 
\begin{subequations}
    \begin{align}
        b_\mathrm{x} \leq r_\mathrm{x},
        \ -r_\mathrm{y} > b_\mathrm{y},
        \ b_{\mathrm{y}} \geq -r_{\mathrm{x}}- r_{\mathrm{y}} - b_{\mathrm{x}},\\
        r_\mathrm{x} <b_\mathrm{x},
        \ -r_\mathrm{y} > b_\mathrm{y},\ b_{\mathrm{y}} \leq r_{\mathrm{x}}- r_{\mathrm{y}} - b_{\mathrm{x}},\label{39b}
    \end{align}
  which is equivalent to $0\leq b \leq r_{\mathrm{x}} + r_{\mathrm{y}}$ and $-r_\mathrm{y} > b_\mathrm{y}$. Still, the reason is that given this conditions, automatically, we have $b_\mathrm{x} <r_\mathrm{x} + 2r_\mathrm{y}$, indicating $b_{\mathrm{y}} \leq r_{\mathrm{x}}- r_{\mathrm{y}} - b_{\mathrm{x}}$ (\ref{39b})
\end{subequations}
These results are consistent with (\ref{non-concave}) in Proposition \ref{game_proper} and (\ref{mix2}) in this theorem. This completes the proof.
\end{proof}

\section{Proof of Theorem \ref{dynamics_theorem}}
\emph{Overview of the proof}:
    The basic idea of proving this Theorem is to show that the dynamical systems of each period (\ref{dynamics_system1}) and (\ref{dynamics_system2}) are asymptotically stable in the strategy set $\mathcal{X}$, then, add the switching logic to show the system is global uniform asymptotically stable in $\mathcal{X}_{\mathrm{s}}$ as described in Theorem \ref{dynamics_theorem}, i.e., local uniform asymptotically stable in $\mathcal{X}$. We specify the proof process as follows:
    \begin{itemize}
        \item We first prove dynamical systems of each period (\ref{dynamics_system1}) and (\ref{dynamics_system2}) is asymptotically stable in the strategy set $\mathcal{X}$. (Lemma \ref{lemma_local_stable}).
        \item Then we prove the overall system (\ref{switched_system}) with the switching logic is global uniform asymptotically stable in $\mathcal{X}_{\mathrm{s}}$. 
    \end{itemize}
 
We first provide Lemma \ref{lemma_local_stable} to show the system stability within each period.
\begin{lemma}\emph{System stability of one period.} \label{lemma_local_stable}
    The system dynamics in period 1 described by (\ref{dynamics_system1}) is asymptotically stable in $\mathcal{X}$, i.e., for every starting point $x,y\in \mathcal{X}$, the solution $(x(k),y(k))$ to the (\ref{dynamics_system1}) converges to an equilibrium point $(x^*,y^*)$ as $k \rightarrow \infty$, where $F_1(x^*,y^*)=0$.
\end{lemma}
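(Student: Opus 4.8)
The plan is to apply Lyapunov's direct method to the decoupled linear field $F_1$. First I would locate the equilibrium of (\ref{dynamics_system1}) by setting $F_1(x^*,y^*)=0$; this yields $x^*=-\pi/(2\alpha_{\mathrm{x}})=-r_{\mathrm{x}}$ and $y^*=-\pi/(2\alpha_{\mathrm{y}})=-r_{\mathrm{y}}$, the unique zero of the gradient field. Since the two components are decoupled — $\dot x$ depends only on $x$ and $\dot y$ only on $y$ — (\ref{dynamics_system1}) is a pair of scalar linear ODEs with negative coefficients $-2\alpha_{\mathrm{x}},-2\alpha_{\mathrm{y}}$, so a quick sanity check by explicit integration already gives $x(k)=x^*+(x(0)-x^*)e^{-2\alpha_{\mathrm{x}}k}\to x^*$, and analogously for $y$. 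The formal argument simply packages this observation into a Lyapunov function.

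For the Lyapunov function I would use the deviation of the period-1 total payoff from its maximum, namely
\begin{align}
V_1(x,y)=\big[f_{\mathrm{x},1}(x^*)+f_{\mathrm{y},1}(y^*)\big]-\big[f_{\mathrm{x},1}(x)+f_{\mathrm{y},1}(y)\big].\nonumber
\end{align}
Because each summand is a strictly concave quadratic maximized at the corresponding critical point, a short computation collapses this to $V_1(x,y)=\alpha_{\mathrm{x}}(x-x^*)^2+\alpha_{\mathrm{y}}(y-y^*)^2$, which is positive definite about $(x^*,y^*)$, radially unbounded, and vanishes only at the equilibrium. This $V_1$ is the natural candidate because the dynamics (\ref{dynamics_system1}) are exactly gradient ascent on $f_{\mathrm{x},1}+f_{\mathrm{y},1}$, in the spirit of the dynamic of \citet{rosen1965}.

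Differentiating along trajectories I would then compute $\dot V_1=-\nabla\big(f_{\mathrm{x},1}+f_{\mathrm{y},1}\big)\cdot[\dot x,\dot y]^T=-\|F_1(x,y)\|^2\le 0$, with equality iff $F_1=0$, i.e. only at $(x^*,y^*)$. Hence $V_1$ is a strict Lyapunov function — positive definite and radially unbounded with a negative-definite derivative — so by Lyapunov's theorem the equilibrium is (globally) asymptotically stable on all of $\mathcal{X}$, which is the claim. The same construction with $V_2(x,y)=\alpha_{\mathrm{x}}(x-r_{\mathrm{x}})^2+\alpha_{\mathrm{y}}(y-r_{\mathrm{y}})^2$ handles the period-2 dynamics (\ref{dynamics_system2}) by the identical argument, so both subsystems are asymptotically stable.

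There is no substantial obstacle within this lemma itself, since in a single period the field is linear and decoupled; the only care needed is verifying the exact identity $V_1=\alpha_{\mathrm{x}}(x-x^*)^2+\alpha_{\mathrm{y}}(y-y^*)^2$ (so positive definiteness and radial unboundedness are transparent) and confirming that $\dot V_1$ is strictly negative off the equilibrium. The genuine difficulty is deferred to the next step of Theorem \ref{dynamics_theorem}, where the switching logic couples the two subsystems and a single smooth Lyapunov function no longer suffices: there one must patch $V_1$ and $V_2$ across the switching surface $\{S_1=S_2\}$ and invoke multiple-Lyapunov-function theory for switched systems to upgrade these per-period results to global uniform asymptotic stability on $\mathcal{X}_{\mathrm{s}}$.
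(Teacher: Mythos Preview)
Your argument is correct but proceeds along a slightly different line than the paper. The paper follows Rosen's device directly: it takes $\lVert F_1\rVert^2$ itself as the Lyapunov-like quantity and shows $\tfrac{1}{2}\tfrac{d}{dk}\lVert F_1\rVert^2 = F_1^T G F_1$ with $G=-2\,\mathrm{diag}(\alpha_{\mathrm{x}},\alpha_{\mathrm{y}})$, so negative definiteness of $G+G^T$ gives $\tfrac{d}{dk}\lVert F_1\rVert^2\le -2\epsilon\lVert F_1\rVert^2$ and hence $\lVert F_1\rVert\to 0$. You instead use the payoff deficit $V_1=\alpha_{\mathrm{x}}(x-x^*)^2+\alpha_{\mathrm{y}}(y-y^*)^2$ as a genuine Lyapunov function and get $\dot V_1=-\lVert F_1\rVert^2$. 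Both are valid; the paper's route is the more general Rosen argument (it would survive non-quadratic concave payoffs via the Jacobian condition), while your potential-based $V_1$ is more elementary here and, incidentally, is essentially the same function the paper later adopts as $\mathcal{V}_1$ when proving Theorem~\ref{dynamics_theorem} for the switched system---so your choice anticipates the next step nicely.
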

\begin{proof}
    The key is to show the rate of change of $\lVert F_1(x,y)\rVert^2$ is always negative for $F_1(x,y)\neq 0$~\citep{rosen1965}. We have 
        \begin{align}
            \frac{dF_1}{dk} &= G \frac{dx}{dk} = G\dot{x},\label{sys_dynamic_deriv}
        \end{align}
        where $G$ is the Jacobian of $F_{1}(x,y)$, and $G = -2\text{diag}(\alpha_\mathrm{x},\alpha_\mathrm{y})$, where $\text{diag}(\cdot):\mathbb{R}^2 \to \mathbb{R}^{2 \cdot 2}$.

        \begin{subequations}
        Now, according to (\ref{dynamics_system1}) and combining with the (\ref{sys_dynamic_deriv}), we have
        \begin{align}
        \frac{1}{2}\frac{d\lVert F_1^2\rVert}{dk} &= \frac{1}{2}\frac{dF_1^T F_1}{dk} = F_1^T \frac{dF_1}{dk} = F_1^T G F_1 = \frac{1}{2}F_1^T(G + G^T)F_1. \label{f_norm_detivative}
        \end{align}
    Because the $G+G^T$ is negative definite, we conclude that, for some $\epsilon>0$, (\ref{f_norm_detivative}) is equivalent to 
    \begin{align}
        \frac{1}{2}\frac{d\lVert F_1^2\rVert}{dk} \leq -\epsilon \lVert F_1\rVert^2 \label{final_result_lemma_local}
    \end{align}        
    Thus, $\lim_{k\rightarrow \infty}\lVert F_1\rVert = 0$, so that $(x(k),y(k))\rightarrow (x^*,y^*)$, where $(x^*,y^*)$ is the equilibrium point and $F_1(x^*,y^*)=0$. Following the interior trajectory theorem from~\citep{rosen1965}, we know $(x^*,y^*) \in \mathcal{X}$, which proves this Lemma by showing system (\ref{dynamics_system1}) is asymptotically stable.
     \end{subequations}
\end{proof}

From Lemma \ref{lemma_local_stable}, the system (\ref{dynamics_system1}) is asymptotically stable in $\mathcal{X}$, and this asymptotically stable result can be extrapolated to the dynamical systems of period 2 described by (\ref{dynamics_system2}).
We then add the switched logic to study global uniform asymptotically stability at the equilibrium point satisfied (\ref{equilbrium_point}) in $\mathcal{X}_{\mathrm{s}}$ and prove the Theorem.

\begin{proof}[Proof of Theorem \ref{dynamics_theorem}]
From Lemma \ref{lemma_local_stable}, all system dynamics within each period are asymptotically stable in the strategy set $\mathcal{X}$. We then rewrite the gradient of the payoff functions in the two periods as follows, 
\begin{subequations}
\begin{align}
    &F_1 = A[x,y]^T+C_1, F_2 = A[x,y]^T + C_2,\\
    &A = [-2\alpha_{\mathrm{x}}, 0; 0, -2\alpha_{\mathrm{y}}], C_1 = -[\pi, \pi]^T, C_2 = -C_1.
\end{align}   
\end{subequations}

We prove the Theorem based on the multiple Lyapunov function method~\citep{switching_dynamics}. Since our system dynamics is liner, the basic idea is to (i) find the functions $\mathcal{V}_j>0$ in each system (period) $j=1,2$, for $\dot{\mathcal{V}}_j \neq 0$, the function $\mathcal{V}_j$ is always decreased along the solution of the $j$th system in the region where this system is active, i.e., it is peak;
(ii) on the switching surface $S_1(x,y)=S_2(x,y)$ the function $\mathcal{V}_j$'s value match.  

    We then choose $\mathcal{V}_j, j= 1,2$ as follows: 
    \begin{subequations}
    \begin{align}
    &\mathcal{V}_1 = -[x,y] \frac{A}{2} [x,y]^T - C_1^T [x,y]^T +d_1,
    \mathcal{V}_2 = -[x,y] \frac{A}{2} [x,y]^T - C_2^T [x,y]^T +d_2, \\
    &d_1 = \pi S_{\mathrm{b},1}, 
    d_2 = \pi S_{\mathrm{b},2}.
        \end{align}
    \end{subequations}
    Note that $-\partial \mathcal{V}_1/\partial [x,y]^T = F_1(x,y)$ and $-\partial \mathcal{V}_2/\partial [x,y]^T = F_2(x,y)$.
    
       We first show the regions that guarantees function $\mathcal{V}_j>0$ 
       \begin{subequations}\label{local_range}
           \begin{align}
               \alpha_{\mathrm{x}} x^2 + \alpha_{\mathrm{y}} y^2 + \pi(x+y + S_{\mathrm{b},1}) > 0, \\
                \alpha_{\mathrm{x}} x^2 + \alpha_{\mathrm{y}} y^2 + \pi(-x-y + S_{\mathrm{b},2}) > 0,
           \end{align}
       \end{subequations}
       which is the same as (\ref{local_stable}) shows in this Theorem. 
      We then show the rate of change of $\dot{\mathcal{V}}_j$ is negative.
    \begin{subequations}\label{vdot_condition}
      \begin{align}
         &\dot{\mathcal{V}}_1
         = \frac{\partial \mathcal{V}_1}{\partial [x,y]^T} F_1 = -(\frac{A^T+A}{2}[x,y]^T + C_1)(A[x,y]^T+C_1) = [x,y] A' [x,y]^T + [x,y]B' - C_1^TC_1, \\
         &A' =[-4\alpha_{\mathrm{x}}^2,0;0,-4\alpha_{\mathrm{y}}^2], 
         B' = -\frac{A^T + A}{2}C_1 - A^TC_1 =  [-4\pi\alpha_{\mathrm{x}}, -4\pi\alpha_{\mathrm{y}}]^T.
      \end{align}  
    We have the critical point (equilibrium point) when $\partial(\dot{\mathcal{V}}_1)/\partial [x,y]^T=0$
      \begin{align}
        [x',y']^T &= -(A'^T+A')^{-1}  B' = -[\frac{\pi}{2\alpha_{\mathrm{x}}},\frac{\pi}{2\alpha_{\mathrm{y}}}]^T.\label{equilbrium_point_solution}
      \end{align}
      It is easy to see $A'$ is negative definite. To show $\dot{\mathcal{V}}_1<0$ except the equilibrium point, we need to show $\dot{\mathcal{V}}_1(x',y') = 0$, so that other point must less than zero.
      \begin{align}
          \dot{\mathcal{V}}_1(x',y') &= B'^T ((A'^T+A')^{-1})^T A' (A'^T+A')^{-1}  B' - B'^T ((A'^T+A')^{-1})^T B' - C_1^TC_1 \nonumber\\
          &= -\frac{1}{4}B'^T (A'^{-1})^T B' - C_1^TC_1 =0.
      \end{align}
          \end{subequations}

      In terms of $\dot{\mathcal{V}}_2$, we have
          \begin{align}
             \dot{\mathcal{V}}_2
         &= \frac{\partial \mathcal{V}_2}{\partial [x,y]^T} F_2 = -(\frac{A^T+A}{2}[x,y]^T + C_2)(A[x,y]^T+C_2). 
          \end{align}
          Due to $C_2 = -C_1$, the structure of $\dot{\mathcal{V}}_2$ is the same to $\dot{\mathcal{V}}_1$ and we can obtain $\dot{\mathcal{V}}_2 <0$ except the equilibrium (critical) point $[x',y']^T = [\pi/2\alpha_{\mathrm{x}}, \pi/2\alpha_{\mathrm{y}}]^T$, where $\dot{\mathcal{V}}_2 (x',y')= 0$.

        Then, we show $\mathcal{V}_1 = \mathcal{V}_2$ on the switching surface $S_1(x,y) =S_2(x,y)$, where $x + y= b$ and $f_{\mathrm{x},1} + f_{\mathrm{y},1}= f_{\mathrm{x},2} + f_{\mathrm{y},2}$, and the key is to show $-C_1^T[x,y]^T+d_1 = -C_2^T[x,y]^T +d_2$,
          \begin{subequations}
          \begin{align}
             -C_1^T[x,y]^T+d_1 &= \pi (x + y) + \pi S_{\mathrm{b},1} =  \pi b + \pi S_{\mathrm{b},1} = \frac{S_{\mathrm{b},2}+S_{\mathrm{b},1}}{2},\\
             -C_2^T[x,y]^T+d_2 &= -\pi (x + y) + \pi S_{\mathrm{b},2} =  -\pi b + \pi S_{\mathrm{b},2} = \frac{S_{\mathrm{b},2}+S_{\mathrm{b},1}}{2}.
          \end{align}
          
        Then, we can conclude that $\mathcal{V}_j$ is always decrease except $\dot{\mathcal{V}}_j=0$. Also, the rate of decrease of $\mathcal{V}_j$ along solutions is not affected by switching, and asymptotic stability is uniform with respect to $j$. 
        
        Now, let's analyze the convergent equilibrium point. If the game is concave, only the baseline CP period 2 is active. Then the function value $\mathcal{V}_2$ decreases over the period until $\dot{\mathcal{V}}_2 =0$, and the system reaches the stable points. According to the Theorem \ref{n_agent_game}, the stable point is the unique equilibrium point as described in (\ref{3NE}). 
        
        Otherwise, both periods will be active sequentially, and $\mathcal{V}_j$ will decrease over time until $\mathcal{V}_1 = \mathcal{V}_2$. The reason is that the dynamics from one period always push the other period active, e.g., $F_1$ in period 1 always decreases $x,y$, which pushes the solution past the switching surface and active period 2. Thus, as $\mathcal{V}_1 = \mathcal{V}_2$ on the switching surface, both periods are finally stable on the switching surface. This means $(x(k),y(k))\rightarrow (x^*,y^*)$ when $k\rightarrow \infty$, where $(x^*,y^*)$ is the stable point. According to the Theorem \ref{n_agent_game}, the stable point should be the unique equilibrium point and satisfy (\ref{1NE}). Thus, we prove the global uniform asymptotically stable of the overall system in (\ref{local_stable}), and the equilibrium point satisfies (\ref{equilbrium_point}). 
        
      \end{subequations}
\end{proof}

    

\section{Proof of Theorem \ref{convergency_theorem}}
\begin{proof}
    The key to proving this Theorem is to select the learning rate based on the backtracking line search method. The learning rate depends on the CP periods on which the current and future steps lie, as well as the payoff functions and the gradients that each agent follows. 

    Suppose the current step is $h$, we express the backtracking line search condition for agent x to choose the learning rate $\tau_{\mathrm{x}}$ for the finite difference approximation (\ref{gradient_algorithm}) as follows (agent y's learning rate $\tau_{\mathrm{y}}$ follow the same rule), 
    \begin{align}
        -f_{\mathrm{x},j} (x_{h+1}) &< -f_{\mathrm{x},j} (x_{h}) - \beta_1 \tau_{\mathrm{x},h} \lVert F_{j}(x_{h},y_h) \rVert^2, \label{line_search}
    \end{align}            
    where $\beta_1$ is the parameter within $[0,0.5]$; $f_{\mathrm{x},j}$ means the function can take either $f_{\mathrm{x},1}$ or $f_{\mathrm{x},2}$ determined by which CP period the current and next step lies, and agent's decision is decoupled within each period. For example, if the current step $x_{h}$ lies in period 1, $j$ will take 1 for step $h$, and if the future step $x_{h+1}$ lies in period 2, $j$ will take 2 for step $h+1$. Note that $f_\mathrm{x}(x)$ from (\ref{agenta}) is formulated as a payoff (profit); here, we use $-f_\mathrm{x}(x)$ to express the cost. 
    
    When selecting the learning rate, we gradually reduce $\tau_{\mathrm{x},h}$ by $\beta_2 \tau_{\mathrm{x},h},\beta_2\in [0,1]$ until (\ref{line_search}) satisfy. If the next step and current step lie on the same CP period, this condition ensures the objective $-f_{\mathrm{x},j} (x_{h})$ reduces by at least $\beta_1 \tau_{\mathrm{x},h} \lVert F_{j}(x_h,y_h)\rVert^2$. This proves the concave game convergence as all the steps lie in the baseline CP period 2, and the objective function is concave, thus, the objective function gradually reduces until $\lVert F_{2}(x_h,y_h)\rVert = 0$.

    In terms of quasiconcave and non-concave games, there are switches during the algorithm iteration. It is easy to imagine that each agent shifts demand away from the baseline CP period at the beginning, monotonically reducing their costs following the cost function from their baseline CP period. Once they reach the balance point (switching surface), $x_{h} + y_{h}=b$, the solution starts switching between two periods. Noted that when switching happens, both agents' individual peak period must be different, i.e., if $X_{1}+x_{h} > X_{2}-x_{h}$ for agent x, then $Y_{1}+y_{h} < Y_{2}-y_{h}$ must hold for agent y; otherwise, there will be no switching;
    
     We then write the difference between $-f_{\mathrm{x},1}(x)$ and $-f_{\mathrm{x},2}(x)$ as 
        \begin{align}
            -f_{\mathrm{x},1}(x) - (-f_{\mathrm{x},2}(x)) =\pi (X_{1} + x -(X_{2} - x)).\label{difference} 
        \end{align}            
     
    Suppose $X_{1}+x_{h} > X_{2}-x_{h}$ for agent x, we know 
    \begin{align}
        -f_{\mathrm{x},1}(x_{h}) > -f_{\mathrm{x},2}(x_{h}), 
        -f_{\mathrm{y},1}(y_{h}) < -f_{\mathrm{y},2}(y_{h}). \label{condition_switch}
    \end{align}
    Now, consider a trajectory starting from period 1, switching to period 2, and back to period 1, i.e., $-f_{\mathrm{x},1}(x_h),-f_{\mathrm{x},2}(x_{h+1}), -f_{\mathrm{x},1}(x_{h+2})$, our goal is to show $-f_{\mathrm{x},1}(x)$ reduce while $-f_{\mathrm{x},2}(x)$ increase for agent x and $-f_{\mathrm{y},2}(y)$ reduce while $-f_{\mathrm{y},1}(y)$ increase for agent y through the trajectory. 
    Due to the switching from period 1 to 2, the gradient in period 1 $F_1(x_h,y_h)$ must be negative to reduce the $(x_h,y_h)$ so that the CP period changes. We choose the learning rate $\tau_{\mathrm{x},h},\tau_{\mathrm{y},h}$ such that
    \begin{subequations}
    \begin{align}
        -f_{\mathrm{x},2}(x_{h+1}) < -f_{\mathrm{x},1}(x_{h}) - \beta_1 \tau_{\mathrm{x},h} \lVert F_1(x_{h},y_h) \rVert^2,\label{agenti}\\
        -f_{\mathrm{y},2}(y_{h+1}) < -f_{\mathrm{y},1}(y_{h}) - \beta_1 \tau_{\mathrm{y},h} \lVert F_1(x_{h},y_h) \rVert^2,\label{agent-i}
    \end{align}       
    \end{subequations}
    As $X_{1}+x_{h} > X_{2}-x_{h}$ and from (\ref{condition_switch}), we know (\ref{agenti}) is easy to be true and we use the corresponding $\tau_{\mathrm{x},h}$ to update $x_{h}$ following (\ref{line_search}); while (\ref{agent-i}) can't be true, so $y_{h}$ will not be updated. Thus, the switching is caused by the update of the agent x's decision, and we know $x_{h+1} < x_{h}$. Because the gradient is in period 1, $F_1<0$, we know the $x_{h+1} > -r_{\mathrm{x}}$ and suppose to be reduced to reach the critical point in period 1 until converge, which indicates the right part of the critical point in the objective function $-f_{\mathrm{x},1}$, where its gradient $-f'_{\mathrm{x},1}>0$. Thus, we have $-f_{\mathrm{x},1}(x_{h+1}) < -f_{\mathrm{x},1}(x_{h})$.
    
    In period 2, according to the trajectory, the gradient will push the solution back to period 1, which requires $x,y$ increase, and thus, we know $F_2(x_{h+1},y_{h+1})>0$. We then choose the learning rate $\tau_{\mathrm{x},h+1},\tau_{\mathrm{y},h+1}$ such that
    \begin{subequations}
    \begin{align}
        -f_{\mathrm{x},1}(x_{h+2}) < -f_{\mathrm{x},2}(x_{h+1}) - \beta_1 \tau_{\mathrm{x},h+1} \lVert F_2(x_{h+1},y_{h+1}) \rVert^2,\label{agenti1}\\
        -f_{\mathrm{y},1}(y_{h+2}) < -f_{\mathrm{y},2}(y_{h+1}) - \beta_1 \tau_{\mathrm{y},h+1} \lVert F_2(x_{h+1},y_{h+1}) \rVert^2,\label{agent-i1}
    \end{align}       
    \end{subequations}
    Still, from (\ref{condition_switch}), we know (\ref{agenti1}) can't be true and (\ref{agent-i1}) is easy to realized by setting $\tau_{\mathrm{y},h+1}$. Thus, the $x_{h+1}$ will not be updated and $y_{h+1}$ will be updated and push the CP period back to 1. Following a similar analysis, we know $y_{h+2} > y_{h+1}$ and $y_{h+2}<r_{\mathrm{y}}$ and are supposed to increase to reach the critical point in period 2 until converge, which indicates the left part of the critical point in the objective function $-f_{\mathrm{y},2}$, where its gradient $-f'_{\mathrm{y},2}<0$. Thus, we have $-f_{\mathrm{y},2}(y_{h+2}) < -f_{\mathrm{y},2}(y_{h+1})$.

    Now, let's look at the entire trajectory, we have $x_{h+2} =x_{h+1}<x_{h}$ for agent x and $y_{h+2} > y_{h+1} = y_{h}$ for agent y, indicating 
    \begin{subequations}
    \begin{align}
        &-f_{\mathrm{x},1}(x_{h+2}) = -f_{\mathrm{x},1}(x_{h+1}) < -f_{\mathrm{x},1}(x_{h}),\label{equaltya}\\
        &-f_{\mathrm{y},2}(y_{h+2}) < -f_{\mathrm{y},2}(y_{h+1}) = -f_{\mathrm{y},2}(y_{h}).\label{equaltyb}
    \end{align}  
    \end{subequations}

    Considering trajectory starting from period 2, switching to period 1, then back to period 2 can show the $-f_{\mathrm{x},2}(x_{h+2}) > -f_{\mathrm{x},2}(x_{h})$ and $-f_{\mathrm{y},1}(y_{h+2}) > -f_{\mathrm{y},1}(y_{h})$ following the similar analysis, we omit the redundant math.
    
    Now, let's analyze if the trajectory starts from period 1 and stays more steps in period 2 before going back to period 1. Given the switching pair $\underline{h},\overline{h}$ as described in Theorem \ref{convergency_theorem}, i.e, $\underline{h}=\overline{h}=1, \underline{h}<h<\overline{h},h=2$. 
    For agent y, staying in period 2 gradually increases $y$ until it goes back to period 1 or reaches the critical point of $-f_{\mathrm{y},2}$ in period 2, which means converging to the critical point. Similar to (\ref{equaltyb}), we have 
    \begin{subequations}\label{x_-i_change}
    \begin{align}
        &y_{\overline{h}} > y_{\overline{h}-1} > ...> y_{\underline{h}+1} = y_{\underline{h}},\\
        &-f_{\mathrm{y},2}(y_{\overline{h}}) < 
        -f_{\mathrm{y},2}(y_{\overline{h}-1}) <...
        <-f_{\mathrm{y},2}(y_{\underline{h}+1}) = 
        -f_{\mathrm{y},2}(y_{\underline{h}}). 
    \end{align}
    \end{subequations}
    where the first inequality and last equality equality comes from (\ref{equaltyb}).
    
    For agent x, although $x_{h}$ will increase in period 2, the $x_{h},y_{h}$ still within period 2, and we have 
    \begin{subequations}
        \begin{align}
        \max\{x_{h}&+y_{h}|h \in (\underline{h},\overline{h})\} < \min\{x_{\overline{h}} + y_{\overline{h}}, x_{\underline{h}}+y_{\underline{h}} \} \label{important_h_bar_lower_bar}
    \end{align} 
    Because (\ref{agenti1}) can't be true and (\ref{agent-i1}) is easily satisfied and $y_{h}$ gradually increase in period 2 as described in (\ref{x_-i_change}), the system switch must be activated by agent y. This means
    \begin{align}
        y_{\underline{h}} < \max\{y_{h}|h\in (\underline{h},\overline{h})\} < y_{\overline{h}},\label{agent_-i_h_bar}
    \end{align}
    Combined with (\ref{equaltya}), we have
    \begin{align}
        x_{\underline{h}} > \max\{x_{h}|h\in (\underline{h},\overline{h})\} = x_{\overline{h}},\label{agenti_important}
    \end{align}
    where the first inequality is obtained due to two cases: i) if $x_{\overline{h}} + y_{\overline{h}} < x_{\underline{h}} + y_{\underline{h}}$, and we know $y_{\underline{h}} < y_{\overline{h}}$ from (\ref{agent_-i_h_bar}), thus, $x_{\underline{h}} >x_{\overline{h}}$; ii) if $x_{\overline{h}} + y_{\overline{h}} \geq x_{\underline{h}}+y_{\underline{h}}$, we know (\ref{important_h_bar_lower_bar}) is equivalent to
    \begin{align}
       \max\{x_{h}&+y_{h}|h \in (\underline{h},\overline{h})\} < x_{\underline{h}}+y_{\underline{h}},
    \end{align}
    and due to (\ref{agent_-i_h_bar}), $y_{\underline{h}} <  \max\{y_{h}|h\in (\underline{h},\overline{h})\} $, thus, $x_{\underline{h}} >  \max\{x_{h}|h\in (\underline{h},\overline{h})\}$.
    Thus, according to (\ref{agenti_important}), we have
    \begin{align}
         -f_{\mathrm{x},1}(x_{\overline{h}}) < -f_{\mathrm{x},1}(x_{\underline{h}})
    \end{align}
     \end{subequations}

    This proves the Theorem by showing we are able to choose learning rate $\tau_\mathrm{x},\tau_\mathrm{y}$ that satisfy $-f_{\mathrm{x},1},-f_{\mathrm{y},2}$ reduce and $-f_{\mathrm{x},2},-f_{\mathrm{y},1}$ increase when $X_{1}+x > X_{2}-x$. Reversely, when $X_{1}+x \leq X_{2}-x$, we can also choose learning rate such that $-f_{\mathrm{x},1},-f_{\mathrm{y},2}$ increase and $-f_{\mathrm{x},2},-f_{\mathrm{y},1}$ reduce.
        
\end{proof}

\section{Proof of Theorem \ref{peak_shacing}}
\begin{proof}
We first show the optimal solution from the centralized CP shaving model as described in (\ref{cost_cen}). Under the condition of $r_{\mathrm{x}} + r_{\mathrm{y}} \geq b$, we can easily get both agents' solutions are their critical point, i.e., $x_{\mathrm{cen}}^* = r_{\mathrm{x}},y_{\mathrm{cen}}^* = r_{\mathrm{y}}$ by first-order optimality condition.
\begin{subequations}
Otherwise, the system demand will be balanced in the two periods, i.e., $x + y =b$, and we add the constraints with Lagrange multipliers $\lambda$
\begin{align}
    \mathcal{L}(x,y,\lambda) &= \pi\max\{S_{1}(x,y),S_2(x,y)\}+ \alpha_{\mathrm{x}} x^2 + \alpha_{\mathrm{y}} y^2 + \lambda(b -x - y),\\
    \frac{\partial\mathcal{L}}{\partial x} &= \pm \pi +2\alpha_\mathrm{x} x - \lambda = 0,\\
    \frac{\partial\mathcal{L}}{\partial y} &= \pm \pi +2\alpha_{\mathrm{y}} y - \lambda = 0,\\
    \frac{\partial\mathcal{L}}{\partial \lambda} &= x + y =b,\\
    x &= \frac{\alpha_{\mathrm{y}}}{\alpha_\mathrm{x}+\alpha_{\mathrm{y}}}b,
    y = \frac{\alpha_\mathrm{x}}{\alpha_\mathrm{x}+\alpha_{\mathrm{y}}}b,
\end{align}
\end{subequations}
where the sign of $\pm$ is determined by the CP period $S_1,S_2$. Overall, the solution for the centralized model (\ref{cost_cen}) is
\begin{subequations}\label{centralized_solution}
    \begin{align}  
        x_{\mathrm{cen}}^* &= \frac{\alpha_{\mathrm{y}} b}{\alpha_{\mathrm{x}}+\alpha_{\mathrm{y}}},
        y_{\mathrm{cen}}^*= \frac{\alpha_{\mathrm{x}} b}{\alpha_{\mathrm{x}}+\alpha_{\mathrm{y}}}, 
        &\text{if }0 \leq b \leq r_{\mathrm{x}}+r_{\mathrm{y}} \\
        x_{\mathrm{cen}}^* &= r_{\mathrm{x}}, y_{\mathrm{cen}}^* = r_{\mathrm{y}}, 
        &\text{if }r_{\mathrm{x}} +r_{\mathrm{y}} < b
    \end{align}    
\end{subequations}

Note that the peak shaving effectiveness defined in the theorem statement is equal to directly comparing the $x^* + y^*$ with $x_{\mathrm{cen}}^* + y_{\mathrm{cen}}^*$ because the baseline demand is the same in both models. Combine with the game solution in Theorem \ref{n_agent_game}, under the concave game conditions, we have $x^* = x_{\mathrm{cen}}^*=r_{\mathrm{x}}, y^* = y_{\mathrm{cen}}^*=r_{\mathrm{y}}$, indicating the peak shaving effectiveness at equilibrium equal to 1. Under the non-concave game and quasiconcave game conditions, it is easy to see both game model and centralized model will balance system demand, i.e., $x^* + y^* = x_{\mathrm{cen}}^* + y_{\mathrm{cen}}^* = b$, meaning that the peak shaving effectiveness at equilibrium also equal to 1. This proves the theorem.
\end{proof}

\section{Proof of Theorem \ref{PoA_agent_equity}}
\begin{proof}
Recall the centralized model solution from Theorem \ref{peak_shacing}, the $x_{\mathrm{cen}}^*,y_{\mathrm{cen}}^*$ under quasiconcave and non-concave game conditions is
\begin{align}
    x_{\mathrm{cen}}^* = \frac{\alpha_{\mathrm{y}}b}{\alpha_{\mathrm{x}}+\alpha_{\mathrm{y}}}, 
        y_{\mathrm{cen}}^* = \frac{\alpha_{\mathrm{x}}b}{\alpha_{\mathrm{x}}+\alpha_{\mathrm{y}}}.
\end{align}
Although the game solution is different under these two conditions, we can denote it as $x^*$, and according to the definition of efficiency loss (\ref{poa}), we have 
    \begin{subequations}
        \begin{align}
        P = \frac{f_\mathrm{x}(x^*,y^*) + f_{\mathrm{y}}(x^*,y^*)}{f_{\mathrm{x}}(x_{\mathrm{cen}}^*, y_{\mathrm{cen}}^*) + f_{\mathrm{y}}(x_{\mathrm{cen}}^*,y_{\mathrm{cen}}^*)} 
        = \frac{\pi S + \alpha_{\mathrm{x}} x^{*2} + \alpha_{\mathrm{y}} y^{*2}}{\pi S
            + \alpha_{\mathrm{x}} (\frac{\alpha_{\mathrm{y}}b}{\alpha_{\mathrm{x}}+\alpha_{\mathrm{y}}})^2 
            + \alpha_{\mathrm{y}} (\frac{\alpha_{\mathrm{x}}b}{\alpha_{\mathrm{x}}+\alpha_{\mathrm{y}}})^2}
    \end{align}
        By replacing $b = x^*+y^*$, we have
        \begin{align}
        P &= \frac{\pi S + \alpha_{\mathrm{x}}x^{*2} + \alpha_{\mathrm{y}} y^{*2}}
        { \pi S
        + \frac{\alpha_{\mathrm{x}}\alpha_{\mathrm{y}}^2
        (x^*+y^*)^2 + \alpha_{\mathrm{x}}^2\alpha_{\mathrm{y}} (x^*+y^*)^2 }
        {(\alpha_{\mathrm{x}}+\alpha_{\mathrm{y}})^2}} = \frac{(\pi S + \alpha_{\mathrm{x}}x^{*2} + \alpha_{\mathrm{y}} y^{*2}) (\alpha_{\mathrm{x}}+\alpha_{\mathrm{y}})}
        {\pi S(\alpha_{\mathrm{x}}+\alpha_{\mathrm{y}}) +\alpha_{\mathrm{x}} \alpha_{\mathrm{y}} (x^*+y^*)^2} \nonumber\\
        &= \frac{\pi S(\alpha_{\mathrm{x}}+\alpha_{\mathrm{y}}) + \alpha_{\mathrm{x}} \alpha_{\mathrm{y}} (x^{*2} + y^{*2}) + (\alpha_{\mathrm{x}} x^*)^2 + (\alpha_{\mathrm{y}} y^*)^2}
        {\pi S(\alpha_{\mathrm{x}}+\alpha_{\mathrm{y}})  + \alpha_{\mathrm{x}} \alpha_{\mathrm{y}} (x^{*2} + y^{*2})+
        2\alpha_{\mathrm{x}} \alpha_{\mathrm{y}} x^* y^* },
       \label{poa_quasi}
        \end{align}
        \end{subequations}  
        
        The first two terms in the denominator and nominator are the same, thus, the difference between denominator and nominator is
        \begin{align}
        (\alpha_{\mathrm{x}}x^*)^2 + (\alpha_{\mathrm{y}}y^*)^2 - 2\alpha_{\mathrm{x}}\alpha_{\mathrm{y}}x^*y^* = (\alpha_{\mathrm{x}}x^*-\alpha_{\mathrm{y}}y^*)^2,\label{poa_quasi_last}
        \end{align}     
        We then know $P$ will increase with $(\alpha_{\mathrm{x}} x^* - \alpha_{\mathrm{y}} y^*)^2$, and $P$ is continuous/differentiable function regarding $\alpha_\mathrm{x},\alpha_\mathrm{y}$ and $x^*,y^*$, thus $\partial P / \partial[(\alpha_{\mathrm{x}} x^* - \alpha_{\mathrm{y}} y^*)^2]> 0$. Note that the shifting cost is $\alpha_{\mathrm{x}}x^{*2}$ for agent x and $\alpha_{\mathrm{x}}x^* = \partial (\alpha_{\mathrm{x}}x^{*2})/\partial x^*$ is the marginal shifting cost. 
\end{proof}

\section{Proof of Theorem \ref{PoA_CPgame_type}}
\begin{proof}
Under the concave game condition, from Theorems \ref{n_agent_game} and \ref{peak_shacing}, we have 
        \begin{align}
            x^* &= r_{\mathrm{x}}, y^* = r_{\mathrm{y}},
            x_{\mathrm{cen}}^* = r_{\mathrm{x}}, y_{\mathrm{cen}}^* = r_{\mathrm{y}},
        \end{align}
    which shows $x^* = x_{\mathrm{cen}}^*, y^* =y_{\mathrm{cen}}^*$ and $P =1$, meaning that the concave game condition is equivalent to centralized model.
    
    From Theorem \ref{PoA_agent_equity}, we know the efficiency loss under quasiconcave and non-concave game conditions can be written as (\ref{poa_quasi}), and nominator minus denominator is $(\alpha_{\mathrm{x}} x^* - \alpha_{\mathrm{y}} y^*)^2\geq 0$. Thus, $P \geq 1$ under these two conditions, indicating quasiconcave and non-concave games always cause higher (or equal) anarchy than concave games.

    Then Given fixed $\pi, S,\alpha_\mathrm{x},\alpha_{\mathrm{y}}>0$, the efficiency loss (\ref{poa_quasi}) is only affected by the solution structure $x^*,y^*$. Note that $X_{1},X_{2},Y_{1},Y_{2}$ could be variant such that $X_{1}+X_{2}+Y_{1}+Y_{2} = 2S$, which may cause quasiconcave or non-concave game condition, i.e., whether both agents satisfy the following conditions
    \begin{align}
        -\frac{\pi}{2\alpha_{\mathrm{x}}} \leq b_{\mathrm{x}} = \frac{X_{2}-X_{1}}{2} \leq \frac{\pi}{2\alpha_{\mathrm{x}}}.\label{condition_quasiconcave_game_poa_proof}
    \end{align}

    We basically fixed the other parameters that appeared in the efficiency loss expression of (\ref{poa_quasi}) to explicitly show the influence of the game type change. According to Theorem \ref{n_agent_game}, we know both quasiconcave and non-concave games balance the system demand over the two periods, i.e., $x^* + y^* = b$. For the quasiconcave game, $x^* = b_{\mathrm{x}},y^* = b_{\mathrm{y}}$, and satisfy the (\ref{condition_quasiconcave_game_poa_proof}), while for non-concave game, $x^* = r_{\mathrm{x}}, y^* = b - r_{\mathrm{x}}$, and $b_{\mathrm{x}} > r_{\mathrm{x}}$. Suppose $x^* = r_{\mathrm{x}}, y^* = b - r_{\mathrm{x}}$, then the condition is $b_{\mathrm{x}} >r_{\mathrm{x}}$ and $b_{\mathrm{y}} < b-r_{\mathrm{x}}$, indicating 
    \begin{align}
        (\alpha_{\mathrm{x}} b_{\mathrm{x}} - \alpha_{\mathrm{y}}b_{\mathrm{y}})^2 > (\alpha_{\mathrm{x}}r_{\mathrm{x}} - \alpha_{\mathrm{y}}(b-r_{\mathrm{x}}))^2.
    \end{align} 
    If $x^* = b\pm r_{\mathrm{y}}, y^* = \mp  r_{\mathrm{y}}$, the situation is similar and we omit the redundant math here. Thus, the quasiconcave game causes higher efficiency loss than the non-concave game under fixed $\pi, S,\alpha_\mathrm{x},\alpha_{\mathrm{y}}>0$.
     These finish the proof of the Theorem.
\end{proof}

\section{Proof of Proposition \ref{uniques_exist_theorem}}
\emph{Overview of the proof}: We derive two lemmas to show the existence and uniqueness of NE in the multi-agent CP shaving game $G'$, respectively. We first show the CP shaving game $G'$ exists NE (Lemma \ref{existence}) and show the NE exists in Lemma \ref{existence} is unique (Lemma \ref{uniqueness}). We first introduce two concepts.
    
(1) \emph{Payoff security (Reny~\citep{reny1999}).} 
    Agent $i$ can secure the payoff $f_i(x_i,x_{-i})-\epsilon \in \mathbb{R}$ at $x_i,x_{-i}\in \mathcal{X}$ iff for every $\epsilon>0$, there exist a $\hat{x}_i\in \mathcal{X}_{i}$ such that $f_i(\hat{x}_i,x_{-i}')\geq f_i(x_i,x_{-i})-\epsilon$ for every $x_{-i}'$ in some neighborhood of $x_{-i}$. Furthermore, we say that a game $G'$ is payoff secure iff every agent $i$ can secure payoff for every $x_i \in \mathcal{X}_i$.
 
(2) \emph{Diagonally strictly concave (Rosen~\citep{rosen1965}).} 
        Define the pseudo-gradient of the sum of all agents' payoff functions $\sum_{i\in N}f_i(x_i,x_{-i})$ and total differential operator $\nabla$ as 
        \begin{subequations}
        \begin{align}
            F(x_i,x_{-i}) =[\nabla_1 f_1(x_i,x_{-i}),\cdots, \nabla_{N} f_{N}(x_i,x_{-i})]^T.\label{pseudo_gradient}
        \end{align}
        Then the function $\sum_{i\in N}f_i(x_i,x_{-i})$ is diagonally strictly concave for $x_i,x_{-i}\in \mathcal{X}$ if for every $x_{\mathrm{a}},x_{\mathrm{b}}\in \mathbb{R}^N$ and $x_{\mathrm{a}},x_{\mathrm{b}}\in \mathcal{X}$, we have
        \begin{align}
            (x_{\mathrm{a}}-x_{\mathrm{b}}) F(x_{\mathrm{b}}) + (x_{\mathrm{b}}-x_{\mathrm{a}}) F(x_{\mathrm{a}}) > 0.
        \end{align}
        \end{subequations}  
        
Among them, payoff security means every agents can secure a payoff value in any strategy profile if they have a strategy that provides at least this value, even if other agents slightly change their strategies. We also have the sufficient conditions for a diagonally strictly concave function from Rosen~\citep{rosen1965}, namely that the symmetric matrix $G(x_i,x_{-i})+G^T(x_i,x_{-i})$ be negative definite for $x_i,x_{-i}\in \mathcal{X}$, where $G(x_i,x_{-i})$ is the Jacobian of $F(x_i,x_{-i})$ with respect to all $x_i$. 
We then introduce the existence lemma.
\begin{lemma}\emph{Existence.} \label{existence}
    The multi-agent two-period quasiconcave CP shaving game $G'$ as described in (\ref{multi_agent_game}) and (\ref{quasiconcaveG'}), has a pure-strategy NE $(x_i^*,x_{-i}^*)$ as defined in (\ref{NE_multi_agent}).
\end{lemma}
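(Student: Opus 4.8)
The plan is to invoke Reny's existence theorem for discontinuous games~\citep{reny1999}: a compact, quasiconcave game that is \emph{better-reply secure} admits a pure-strategy NE, and a convenient sufficient condition for better-reply security is that the game be payoff secure while the aggregate payoff $\sum_{i\in N} f_i$ is upper semi-continuous. Accordingly, I would establish four ingredients in turn: compactness and convexity of the strategy sets, quasiconcavity of each $f_i$ in its own variable, upper semi-continuity of the aggregate payoff, and payoff security of $G'$. The first three are routine; the fourth carries the argument.

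For the routine ingredients, I would take each strategy set to be the compact convex interval $\mathcal{X}_i = [-r_i, r_i]$, which under the quasiconcave condition (\ref{quasiconcaveG'}) contains the candidate equilibrium shift $b_i$ as well as all relevant best responses, so restricting to it does not alter the game's maximizers. Quasiconcavity of each $f_i$ in $x_i$ then follows directly from Proposition \ref{game_proper}, since $G'$ inherits those properties under (\ref{quasiconcaveG'}). Upper semi-continuity of $\sum_{i\in N} f_i$ is immediate from Proposition \ref{centralizd_proposition}, which rewrites the aggregate as $-\pi\max\{S_1,S_2\}-\sum_{i\in N}\alpha_i x_i^2$; as the pointwise maximum of the two affine functions $S_1,S_2$ is continuous, the aggregate is in fact continuous, hence u.s.c.

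The crux is payoff security. The structural observation I would exploit is that, once the CP period is fixed, agent $i$'s payoff is either $g_{i,1}(x_i)=-\pi(X_{i,1}+x_i)-\alpha_i x_i^2$ or $g_{i,2}(x_i)=-\pi(X_{i,2}-x_i)-\alpha_i x_i^2$, depending only on $x_i$ and not on $x_{-i}$; opponents enter $f_i$ solely through the indicator, i.e.\ through the sign of $S_1-S_2$. Consequently the only discontinuities lie on the switching surface $\sum_{j\in N}x_j=b$. Away from this surface the profile sits strictly in one period, the CP period is locally constant, and $\hat{x}_i=x_i$ secures the payoff by continuity of $g_{i,\ell}$. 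On the surface, where the tie-breaking convention fixes $I=1$ and the current payoff is $g_{i,1}(x_i)$, I would take the securing strategy $\hat{x}_i=x_i+\delta$ with small $\delta>0$: for every $x_{-i}'$ within $\ell_1$-distance $\delta$ of $x_{-i}$ the perturbed aggregate shift exceeds $b$, locking period~1 as CP, while $g_{i,1}(x_i+\delta)\to g_{i,1}(x_i)$ as $\delta\to0$. Choosing $\delta$ small enough that this continuous loss falls below $\epsilon$ secures the payoff. At the boundary $x_i=r_i$, where an upward nudge is infeasible, one nudges downward to lock period~2 instead, which is at least as good because $g_{i,1}(r_i)\le g_{i,2}(r_i)$ when $r_i\ge b_i$; the lower boundary is symmetric.

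With payoff security and upper semi-continuity of the aggregate in hand, $G'$ is better-reply secure, and Reny's theorem delivers a pure-strategy NE $(x_i^*,x_{-i}^*)$ as in (\ref{NE_multi_agent}), proving the lemma. I expect the main obstacle to be the careful treatment of the switching surface under the tie-breaking convention: one must confirm that a single-agent nudge can always lock a CP period that secures at least the current payoff against all sufficiently small opponent deviations, including at the endpoints of $\mathcal{X}_i$, so that the game's sole discontinuity never obstructs payoff security. The condition (\ref{quasiconcaveG'}) enters separately—guaranteeing via Proposition \ref{game_proper} that each $f_i$ is quasiconcave, the hypothesis of Reny's theorem that would otherwise fail.
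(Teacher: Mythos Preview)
Your proposal is correct and follows essentially the same route as the paper: both invoke Reny's theorem by verifying quasiconcavity of each payoff (via Proposition~\ref{game_proper}), upper semi-continuity of the aggregate payoff (which, as you note via Proposition~\ref{centralizd_proposition}, is in fact continuous), and payoff security via a small upward nudge $\hat{x}_i=x_i+\delta$ on the switching surface to lock period~1 as the CP period. Your version is slightly more thorough than the paper's, explicitly fixing the compact strategy sets $\mathcal{X}_i=[-r_i,r_i]$ and handling the boundary case $x_i=r_i$ by nudging downward and invoking $g_{i,2}(r_i)\ge g_{i,1}(r_i)$ under~(\ref{quasiconcaveG'}).
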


\begin{proof}
    According to Reny~\citep{reny1999}, a compact, convex, bounded, and quasiconcave game has a pure strategy NE if the sum of the agent's payoff functions is u.s.c. as defined in Definition \ref{Nash_Equ_defi} across the entire strategy set and if the game satisfies payoff security as defined above. Intuitively, the game is compact, convex, and bounded, meaning that the strategy spaces $\mathcal{X}i, i\in N$ are compact and convex, while the payoff functions $f{i}, i\in N$ remain bounded due to the quadratic shifting penalty term.
    Thus, to establish the existence of a pure-strategy NE, we focus on verifying two key conditions: (a) the sum of the agents' payoff functions, $\sum_{i\in N}f_i(x_i,x_{-i})$, is u.s.c. for all $x_i, x_{-i} \in \mathcal{X}$; and (b) the game $G'$ satisfies the payoff security condition.
    
    (a). By summing all agent's payoff functions, 
        \begin{align}
            \sum_{i\in N}f_i(x_i,x_{-i}) &= - \pi\sum_{i\in N}(X_{i,1} +x_i) I(x_i,x_{-i}) 
            -\pi \sum_{i\in N}(X_{i,2} -x_i) (1-I(x_i,x_{-i})) 
            - \sum_{i\in N}\alpha_{i}x_i^2   \nonumber \\
             &= -\pi S_1 I(x_i,x_{-i}) -\pi S_2 (1-I(x_i,x_{-i})) - \sum_{i\in N}\alpha_{i}x_i^2.  \label{sum_payoff}
        \end{align}
    This function is u.s.c. as per Definition \ref{Nash_Equ_defi} because, on the switching surface where $S_1 = S_2$, the cost term satisfies $\pi S_1 = \pi S_2$.

    (b). We now demonstrate that each agent is payoff secure by analyzing their strategic choices. Given any strategy pair $(x_i,x_{-i})$ and for any $\epsilon>0$, agents $i,-i$ can secure payoffs of at least $f_i(x_i,x_{-i})-\epsilon$ and $f_{-i}(x_i,x_{-i})-\epsilon$, respectively. Note that here $-i$ denotes all agents but $i$.
    
    Suppose the CP period is 1, meaning $S_1 \geq S_2$. If any agent among $-i$ slightly increases their demand shift $x_{-i}$, the CP period remains 1, and agent $i$'s payoff remains unchanged if they maintain their strategy $x_i$.  Conversely, if any agent among $-i$ slightly reduces $x_{-i}$, the CP period may switch to 2. However, agent $i$ can still secure a payoff of at least $f_i(x_i,x_{-i})-\epsilon$ by making a slight reduction in their strategy $x_i$. Specifically, there exists a sufficiently small $\delta>0$ such that $f_i(x_i+\delta,x_{-i}') > f_i(x_i,x_{-i})-\epsilon$ for all cases where the updated demand condition $S_1'=X_{i,1}+x_i+\delta+\sum_{-i\in N}(X_{-i,1}+x_{-i}')>S_2$ still holds. This confirms that agent $i$ is payoff secure. A similar argument applies symmetrically to all other agents $-i$ and the same reasoning extends to the case where the CP period is 2.
    Thus, the CP shaving game is payoff secure. 

    By Reny's theorem~\citep{reny1999}, we conclude that the CP shaving game has a pure-strategy NE.
\end{proof}

\begin{lemma}\emph{Uniqueness.} \label{uniqueness}
    The pure-strategy NE $(x_i^*,x_{-i}^*)$ as described in Lemma \ref{existence} is unique and obtained when (\ref{unique_NE_condition}) hold.
\end{lemma}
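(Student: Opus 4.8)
The plan is to prove uniqueness in two stages, combining Rosen's diagonal-strict-concavity machinery (\citep{rosen1965}) for the smooth interior of each period with the competitive best-response characterization of the two-agent Lemma \ref{lemma_NE2agent} on the switching surface. The discontinuity introduced by the indicator function prevents a direct global application of Rosen's theorem, so the argument must first localize any candidate NE to the surface $S_1=S_2$ and only then pin down the individual strategies.

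\emph{First stage.} Within each period $j\in\{1,2\}$ the payoffs are smooth and concave, and the Jacobian of the pseudo-gradient (\ref{pseudo_gradient}) is $G=-2\,\mathrm{diag}(\alpha_1,\dots,\alpha_{|N|})$, so $G+G^{T}$ is negative definite because every $\alpha_i>0$. Hence $\sum_{i\in N}f_i$ is diagonally strictly concave on each period, and by \citep{rosen1965} the smooth period-$j$ game admits a unique NE, attained at the critical points $x_i=r_i$ (for $j=2$) or $x_i=-r_i$ (for $j=1$). I would then show that no NE of $G'$ can sit in the strict interior of either period's region: where period 2 is strictly the CP ($\sum_i x_i<b$) each agent's payoff $f_{i,2}$ is increasing, so best responses drive $\sum_i x_i$ toward $\sum_i r_i$, and the quasiconcave hypothesis (\ref{quasiconcaveG'}) gives $b=\sum_i b_i\le\sum_i r_i$, so the best-response dynamics cross the switching surface rather than settle; the symmetric statement (using $-\sum_i r_i<b$) rules out period 1. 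Thus every NE satisfies $\sum_{i\in N}x_i^*=b$, i.e. $S_1(x^*)=S_2(x^*)$.

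\emph{Second stage.} On the switching surface I would invoke the two-agent structure by partitioning $N$ into $\{i\}$ and its complement, treating $N\setminus\{i\}$ as a virtual aggregate agent whose balance point is $b-b_i$ (the virtual-agent device used later for Proposition \ref{N_agent_extropolate}). For a balanced profile the switching point faced by agent $i$ is exactly $c_i=b-\sum_{j\ne i}x_j^*=x_i^*$, and the competitive (min-max) reasoning of Lemma \ref{lemma_NE2agent} applies: the difference of agent $i$'s two branch payoffs at its switching point is $f_{i,1}(c_i)-f_{i,2}(c_i)=2\pi(b_i-c_i)$, so $x_i^*$ can be a best response only if $f_{i,1}(c_i)\ge f_{i,2}(c_i)$, i.e. $x_i^*\le b_i$; otherwise a small downward deviation switches the CP to period 2 and strictly improves agent $i$'s payoff. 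Summing $x_i^*\le b_i$ over all agents against the balance constraint $\sum_i x_i^*=b=\sum_i b_i$ forces equality $x_i^*=b_i$ for every $i$, which is exactly (\ref{unique_NE_condition}); since this leaves no freedom, the NE is unique.

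The main obstacle I anticipate is handling the switching surface rigorously. Rosen's uniqueness result is valid only where the payoffs are smooth, so the crux is (i) justifying that the virtual aggregation of $N\setminus\{i\}$ preserves the fully-competitive best-response structure that Lemma \ref{lemma_NE2agent} requires, and (ii) controlling the semicontinuity and tie-breaking at $S_1=S_2$ so that the balanced point $x_i^*=b_i$ is genuinely attained as a best response rather than only approached in the limit. Once the summation inequality $x_i^*\le b_i$ is secured for each agent, uniqueness is immediate, so essentially all the difficulty is concentrated in transferring the two-agent competitive argument to the aggregated multi-agent surface.
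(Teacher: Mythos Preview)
Your proposal is correct, and the first stage (Rosen's diagonal strict concavity on each period, with $G+G^T=-4\,\mathrm{diag}(\alpha_1,\dots,\alpha_{|N|})$ negative definite) matches the paper exactly. The second stage, however, takes a genuinely different route. The paper proves uniqueness on the switching surface by an \emph{iterative bisection}: it partitions $N$ into two nonempty subsets $N_{\mathrm a},N_{\mathrm b}$, treats each as an aggregate agent, applies Lemma~\ref{lemma_NE2agent} to conclude $\sum_{i\in N_{\mathrm a}}x_i^*=\sum_{i\in N_{\mathrm a}}b_i$ (and likewise for $N_{\mathrm b}$), and then recurses on each subset until the partitions are singletons, yielding $x_i^*=b_i$ for every $i$. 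Your argument instead fixes a single agent $i$ against the complement, extracts the one-sided inequality $x_i^*\le b_i$ from the profitable downward deviation when $f_{i,2}(c_i)>f_{i,1}(c_i)$, and closes by summing against the balance constraint $\sum_i x_i^*=b=\sum_i b_i$. Your route is more elementary---it avoids the recursive machinery and needs only one direction of the deviation argument plus an adding-up identity---whereas the paper's bisection makes the reliance on Lemma~\ref{lemma_NE2agent} more explicit and transparently reduces the $|N|$-agent surface problem to repeated two-agent instances. Both arguments share the same soft spot you correctly flag: neither fully formalizes why no NE can sit strictly inside a single period's region (the paper, like you, leans on best-response/dynamic language here), so your identified obstacle is genuine and common to both proofs.
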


\begin{proof}
    According to Rosen's method~\citep{rosen1965}, if the sum of the agents' payoff functions is diagonally strictly concave, as defined earlier, then the equilibrium point established in Lemma \ref{existence} is unique, provided that the constraints in the strategy set are concave. That is, if the strategy space is defined as $\mathcal{X}=\{x|h(x) \geq 0 \}$, where $h(x)$ is a concave function, the uniqueness of the equilibrium follows.

    We first study the uniqueness of pure-strategy NE in each period, with the strategy set $\mathcal{X}_{\mathrm{cp}1},\mathcal{X}_{\mathrm{cp}2}$. In strategy set $\mathcal{X}_{\mathrm{cp}1}$, we have 
        \begin{align}
            \sum_{i\in N}f_i(x_i,x_{-i})
            = -\pi\sum_{i\in N}(X_{i,1}+x_i) - \sum_{i\in N}\alpha_{i}x_i^2  .
        \end{align}
    The Hessian $H$ with respect to $x_i$ is $G = G^T = -2\text{diag}(\alpha_1,\cdots,\alpha_N)$,    
    where $\text{diag}(\cdot):\mathbb{R}^N \to \mathbb{R}^{N \cdot N}$.
    This shows that $G+G^T$ is clearly negative definite, confirming that the sum of the agents' payoff functions is diagonally strictly concave. By Rosen's method, this ensures the existence of a unique NE within the strategy set $\mathcal{X}_{\mathrm{cp}1}$. Applying the same reasoning, we can similarly conclude that a unique NE exists within the strategy set $\mathcal{X}_{\mathrm{cp}2}$. 
    
    The primary challenge in proving this lemma lies in analyzing the transition between the two strategy sets. Specifically, we need to show the existence of strategies $x_i',x_{-i}',x_i'',x_{-i}''\in \mathcal{X}$ such that $S_1( x_i',x_{-i}')-S_2( x_i',x_{-i}')\geq 0, S_1(x_i'',x_{-i}'')-S_2(x_i'',x_{-i}'')<0$.

             
    Then, we leverage the two-agent solution from Lemma \ref{lemma_NE2agent} to extend our analysis to the $N$-agent setting. Given that all agents are capable, we define a \emph{partition} of the set $N$ into two disjoint subsets, denoted as $N_{\mathrm{a}}, N_{\mathrm{b}}$, such that $N_{\mathrm{a}}\cup N_{\mathrm{b}} = N, N_{\mathrm{a}}\cap N_{\mathrm{b}} = \emptyset$. This partitioning ensures that the total demand of these two subsets is unequal in at least one period, i.e., 
    \begin{subequations}
        \begin{align}
            \{\sum_{i\in N_{\mathrm{a}}}(X_{i,1}+x_i)\neq \sum_{i\in N_{\mathrm{b}}}(X_{i,1}+x_i)\}
            \cup\{\sum_{i\in N_{\mathrm{a}}}(X_{i,2}-x_i)\neq \sum_{i\in N_{\mathrm{b}}}(X_{i,2}-x_i)\}.
        \end{align}
        Then, we treat these two sets as two aggregated agents with strategies $x_{\mathrm{a}},x_\mathrm{b}$ and apply the two-agent solutions from Lemma \ref{lemma_NE2agent} again. By doing so, we establish that these two sets will balance their total demand over the two periods, i.e., the following is true for sets $N_{\mathrm{a}}, N_{\mathrm{b}}$,
        \begin{align}
            x_{\mathrm{a}}^* = \frac{\sum_{i\in N_{\mathrm{a}}}X_{i,2}-\sum_{i\in N_{\mathrm{a}}}X_{i,1}}{2} = \frac{\sum_{i\in N_{\mathrm{a}}}(X_{i,2}-X_{i,1})}{2} = 
            \sum_{i\in N_{\mathrm{a}}}b_{i} = \sum_{i\in N_{\mathrm{a}}} x_i^*.
        \end{align}          
        Then, applying the same partitioning process to $N_{\mathrm{a}}, N_{\mathrm{b}}$, we obtain the subsets $N_{\mathrm{a,a}}, N_{\mathrm{a,b}}$ and $N_{\mathrm{b,a}}, N_{\mathrm{b,b}}$. Applying the two-agent solutions again, we establish that these subsets will also balance their demand over the two periods, i.e., $\sum_{i\in N_{\mathrm{a,a}}} x_i^* = \sum_{i\in N_{\mathrm{a,a}}}b_{i}$ for each subset. Iteratively applying this partitioning process to each subset, we continue refining them into smaller disjoint sets. Eventually, as the subsets reduce to singletons, we obtain the final equilibrium solution for each individual agent: $x_i^* = b_{i},i\in N$ and 
        \begin{align}
            \sum_{i\in N}(X_{i,2} - x_i^*) = S_2 = \sum_{i\in N}(X_{i,1} +x_i^*)= S_1.
        \end{align}
    \end{subequations}
             
    This means during the game, periods 1 and 2 are active interactively, and the agent's strategy set switches between $\mathcal{X}_{\mathrm{cp}1}$ and $\mathcal{X}_{\mathrm{cp}2}$ and finally converges at the connecting point of both periods, i.e., $S_1=S_2$. This means there is a unique equilibrium point in the entire strategy set $\mathcal{X}$. 

    Thus, we conclude that the NE as described in (\ref{unique_NE_condition}) is unique and proves this Lemma.
\end{proof}
   
\begin{proof}[Proof of Proposition \ref{uniques_exist_theorem}] We first show the quasiconcave multi-agent CP shaving game $G'$ as described in (\ref{multi_agent_game}) and (\ref{quasiconcaveG'}) has pure-strategy NE based on Lemma \ref{existence}. Then, we prove the NE existed in Lemma \ref{existence} is unique based on Lemma \ref{uniqueness} and the unique NE is obtained as described in (\ref{unique_NE_condition}).
\end{proof}

\section{Proof of Proposition \ref{N_agent_extropolate}}
\begin{proof}
    We first assume two virtual agents as CP agent and non-CP agent. We denote their strategy as $x_{\mathrm{cp}}$, baseline demand as $X_{\mathrm{cp},2} = \sum_{i\in N_\mathrm{cp}}X_{i,2},X_{\mathrm{cp},1} = \sum_{i\in N_\mathrm{cp}}X_{i,1}$, balance point as $b_{\mathrm{cp}} = (X_{\mathrm{cp},2} - X_{\mathrm{cp},1})/2$, and critical point as $r_{\mathrm{cp}} = \sum_{i\in N_\mathrm{cp}}r_{i}$ for CP agent, and change the subscript to $\mathrm{ncp}$ for non-CP period agent. Note that for CP agent $X_{\mathrm{cp},1}<X_{\mathrm{cp},2}$ (the same definition with individual CP-period agent) and non-CP agent vice verse, we also have $b_{\mathrm{cp}}+b_{\mathrm{ncp}} = b$. 
    
    Then, from Theorem \ref{n_agent_game}, we know that these two virtual agents will balance the system demand at equilibrium. The equilibrium outcome depends on which agent reaches its critical point first. If the CP agent reaches its critical point first, it indicates that the non-CP agent's demand shifting cannot be fully offset by the CP agent. In other words, even if the CP agent shifts the maximum possible amount of demand, it is still unable to completely flatten the system demand across the two periods. In this case, the optimal strategy is for the CP agent to shift demand up to its critical point, while the non-CP agent balances the remaining unbalanced system demand. We can express the condition as 
    \begin{subequations}
    \begin{align}
        r_{\mathrm{cp}} < b_{\mathrm{cp}}, 0 \leq b \leq r_{\mathrm{ncp}}+r_{\mathrm{cp}},\label{condition_proposition12}
    \end{align}
    and the NE of these two virtual agents are
    \begin{align}
        x_{\mathrm{cp}}^* =\min\{r_{\mathrm{cp}},b_{\mathrm{cp}}\} = r_{\mathrm{cp}}, x_{\mathrm{ncp}}^* = b -r_{\mathrm{cp}}, 
    \end{align}     
    \end{subequations}
    
    We then aggregate the CP-period agents and non-CP-period agents into the CP-period agent set and non-CP-period agent set, respectively, while keeping the classification of each agent fixed based on their baseline demand conditions. The baseline demand, balance point, and critical point of each agent set are equivalent to those of the corresponding virtual agent. However, they are not entirely identical, as their strategy structures differ. The strategy for the CP-period agent set is determined by the individual decisions of its agents, denoted as $x_i^*,i\in N_{\mathrm{cp}}$, and similarly for the non-CP-period agent set. Despite this distinction, having the same baseline demand and critical point ensures that their best response rationales are aligned. Essentially, just as the virtual CP agent benefits from shifting demand away from the CP period, the agents within the CP-period agent set follow the same strategic behavior.

    By the model formulation, each agent's maximum shifting capacity is $b_{i}$, leading to the best strategy $x_i^* = \min{r_{i},b_{i}}, i\in N_{\mathrm{cp}}$. Furthermore, from condition (\ref{condition_proposition12}), we observe that although the CP-period agent set and the virtual CP agent share the same baseline demand conditions, the CP-period agent set cannot achieve the optimal strategy of the virtual CP agent due to the following
    \begin{align}
       \sum_{i\in N_\mathrm{cp}} \min\{r_{i},b_{i}\} < \sum_{i\in N_\mathrm{cp}}r_{i} = r_{\mathrm{cp}} = x_{\mathrm{cp}}^* < \sum_{i\in N_\mathrm{cp}}b_{i}.\label{new_condition}
    \end{align}
    Since both the virtual CP agent and the CP-period agent set benefit from shifting demand away from the CP period, their best response is to maximize their shifting amount, as their cost reduction monotonically increases with demand shifting. From (\ref{new_condition}), we observe that the total maximum shifting capacity of the CP-period agent set is lower than the optimal shifting amount of the virtual CP agent. This implies that the best strategy for the CP-period agent set is to fully utilize its shifting capacity, leading to an equilibrium shifting amount of $\sum_{i\in N_\mathrm{cp}} \min\{r_{i},b_{i}\}$. Consequently, each agent within the CP-period agent set follows its individual maximum shifting strategy, resulting in $x_i^* = \min\{r_{i},b_{i}\}, i\in N_{\mathrm{cp}}$.
    
    We then verify that, in this case, the non-CP-period agent set will still balance system demand to maximize its profits. First, we confirm that the non-concave game condition remains valid. Since the conditions for the virtual agents and the agent sets are equivalent, i.e.,
    \begin{align}
        0\leq b 
        \leq  \sum_{i\in N_\mathrm{ncp}}r_{i} + \sum_{i\in N_\mathrm{cp}}r_{i} = \sum_{i\in N}r_{i}
    \end{align}
    Thus, under the condition (\ref{new_condition}), where $\sum_{i\in N_{\mathrm{cp}}}x_i^* = \sum_{i\in N_\mathrm{cp}} \min\{r_{i},b_{i}\}$, the non-CP-period agent set can still balance system demand. Next, we show that the best response rationale remains the same between the non-CP-period agent set and the virtual non-CP agent due to their identical baseline demand and critical point conditions. From Theorem \ref{n_agent_game}, we know that the virtual non-CP agent benefits by shifting demand to the CP period in response to the virtual CP agent's strategy. Similarly, the benefit of the non-CP-period agent set increases as it shifts more demand to the CP period. This not only counteracts the demand shifting of the CP-period agent set but also helps to further balance system demand once the CP-period agent set reaches its maximum shifting capacity. Additionally, since $\sum_{i\in N_\mathrm{cp}} \min\{r_{i},b_{i}\} < \sum_{i\in N_\mathrm{cp}}r_{i}$, we have $b -\sum_{i\in N_{\mathrm{cp}}}\min\{r_{i},b_{i}\} > b - \sum_{i\in N_\mathrm{cp}}r_{i}$, indicating that the non-CP-period agent set can actually achieve greater benefits than the virtual non-CP agent.

    To conclude, the CP-period agents in the CP-period agent set with strategy $x_i^* = \min\{r_{i},b_{i}\},i\in N_{\mathrm{cp}}$ form the best response to the aggregated response of non-CP-period agent set with strategy $\sum_{i\in N_{\mathrm{ncp}}}x_i^* = b-\sum_{i\in N_{\mathrm{cp}}}\min\{r_{i},b_{i}\}$ under the condition of 
    \begin{align}
        \sum_{i\in N_\mathrm{cp}} \min\{r_{i},b_{i}\}  < \sum_{i\in N_\mathrm{cp}}b_{i}, 
        0\leq b 
        \leq \sum_{i\in N}r_{i};
    \end{align}
    otherwise, the best strategy will be 
    \begin{align}
        x_i^* = \max\{-r_{i},b_{i}\},i\in N_{\mathrm{ncp}}, \sum_{i\in N_{\mathrm{cp}}}x_i^* = b-\sum_{i\in N_{\mathrm{ncp}}}\max\{-r_{i},b_{i}\},
    \end{align}
    corresponding to the condition 
    \begin{align}
        \sum_{i\in N_\mathrm{ncp}} \max\{-r_{i},b_{i}\}  > \sum_{i\in N_\mathrm{ncp}}b_{i},
        0 \leq b 
        \leq \sum_{i\in N}r_{i}.
    \end{align}

   Note that here, the set of agents remains fixed, and we focus on the aggregate (set-level) performance of both the CP-period agent set and the non-CP-period agent set. Specifically, when the total demand of the CP-period agent set is higher in the non-CP period, this set effectively behaves like the non-CP-period agent set. Simultaneously, the total demand of the non-CP-period agent set must be higher in the CP period, causing it to act as the CP-period agent set. This dynamic mirrors the two-agent case, where agents only swap strategies when the relative relationship between their individual peak periods and the system CP period changes.

\end{proof}

\end{document}